\definecolor{shadecolor}{rgb}{0.8,0.8,0.8}
\newtheorem{lemma}{Lemma}[section]
\newtheorem{theorem}{Theorem}[section]
\newtheorem{proposition}{Proposition}[section]
\theoremstyle{remark}
\newtheorem{remark}{Remark}[section]
\theoremstyle{definition}
\newtheorem{definition}{Definition}[section]
\newtheorem{assumption}{Assumption}[section]
\newcommand\norm[1]{\left\lVert#1\right\rVert}
\newcommand\absolut[1]{\left|#1\right|}
\newcommand\trace[1]{\text{Tr}\left[#1\right]}
\newcommand\Paverage[1]{\mathbb{P}\left(#1\right)}
\newcommand\condExpect[1]{\mathbb{E}_{\mathbb{P}}\left[\left.#1\right|\mathscr{Y}_t\right]}
\newcommand\condExpectB[2]{\mathbb{E}_{\mathbb{P}}\left[\left.#1\right|#2\right]}
\newcommand\anticommutator[2]{\left\{#1,#2\right\}}
\newcommand\commutator[2]{\left[#1,#2\right]}
\begin{document}
\title{A Quantum Extended Kalman Filter}
\author{Muhammad F. Emzir, Matthew J. Woolley, and Ian R. Petersen}
\address{School of Engineering and IT , University of New South Wales, ADFA, Canberra, ACT 2600, Australia}
\ead{m.emzir@student.adfa.edu.au}
\vspace{10pt}
\begin{indented}
\item[] March 2016
\end{indented}

\begin{abstract}
	A stochastic filter uses a series of measurements over time to produce estimates of unknown variables based on a dynamic model\cite{anderson1979}. For a quantum system, such an algorithm is provided by a quantum filter \cite{Belavkin1980a}, which is also known as a stochastic master equation (SME)\cite{wiseman2010quantum}. For a linear quantum system subject to linear measurements and Gaussian noise, the quantum filter reduces to a quantum Kalman filter \cite{Doherty1999,yamamoto2006robust}.
	\\
	In this article, we introduce a quantum \emph{extended} Kalman filter (quantum EKF), which applies a commutative approximation and a time-varying linearization to non-commutative quantum stochastic differential equations (QSDEs). We will show that there are conditions under which a filter similar to the classical EKF can be implemented for quantum systems. The boundedness of estimation errors and the filtering problems with `state-dependent' covariances for process and measurement noises are also discussed.
	\\
	We demonstrate the effectiveness of the quantum EKF by applying it to systems which involve multiple modes, nonlinear Hamiltonians and simultaneous jump-diffusive measurements.
\end{abstract}

\section{Introduction}
In light of recent advances in quantum engineering, the need to effectively measure and control complex quantum systems has become more crucial. One requirement is to infer unknown observables of a quantum system from noisy measurements based on a dynamic model, a process known as a filtering. The quantum filter was developed in the 1980's by Belavkin \cite{Belavkin1980a,Belavkin1983,belavkin1989nondemolition,Belavkin1992}. It has recently been used in experimental systems such as trapped ions \cite{Hume2007}, cavity QED systems \cite{sayrin2011real}, and optomechanical systems \cite{Wieczorek2015}.
\\
The real-time application of the quantum filter demands an efficient computational algorithm. The quantum filter in the Schr\"{o}dinger picture, also known as the stochastic master equation, solves the stochastic evolution of the conditional density operator based on the measurement record. In practice, to numerically compute the filter, one has to truncate the Hilbert space basis. The computation time then scales exponentially with the size of the Hilbert space basis which makes the quantum filter difficult to implement in real-time.
\\
In the Heisenberg picture, the quantum filter dynamics reduce to the dynamics of the systems observables $\mathbf{x}_t$. However, for general nonlinear systems, the quantum filter in the Heisenberg picture cannot be interpreted as an explicit solution to the filtering problem \cite{MuhammadF.Emzir2015}.
\\
These two facts lead to the need for new approximation techniques. Among these are a Gaussian approximation of the conditional density operator \cite{Doherty2000,Steck2006}. This work is rather heuristic and lacks an evaluation of the estimation error. Number phase Wigner particle filters have also been suggested\cite{Hush2013}, but like most particle filter based techniques, they are computationally expensive and not suitable for real-time filtering. Recently a Volterra series approximation \cite{tsang2015volterra} was introduced. Although this approach has a tractable error bound \cite{mathews2000polynomial}, estimating the Volterra kernel is complicated in general, and the filter structure is not recursive.
\\
Our approach is to use a commutative operator approximation of the non-commutative nonlinear QSDE. A first order Taylor expansion of the nonlinear quantum Markovian generator is then used to compute the filter gain. We refer to our approach as the quantum extended Kalman filter (quantum EKF), due to the similarity of its structure to the classical extended Kalman filter (EKF). Classical EKF estimation error convergence has been well studied \cite{Ljung1979,Song1992,LaScala1995,Boutayeb1997,Reif2000}, which has resulted in some criteria that guarantee local convergence behavior.
\\
The main difficulty in this approach is that the quantum system, in contrast to a classical system,  is governed by a QSDE which involves non-commutative operators. This non-commutativity implies that there is no isomorphism that could map the dynamics into a standard stochastic differential equation (SDE). Due to the non-commutative nature of the QSDE, generally the ordinary partial differentiation with respect to a system observable may not be well defined \cite{Nazaikinskii1996,Birman2003}. Further, the sub-optimality condition of the estimation error has to be defined differently since the system's observables are operators in Hilbert spaces and not scalar random variables.
\\
In this article, we provide a mathematical description of a quantum EKF. We then establish a sufficient criterion under which the quantum EKF will satisfy a dissipativity condition, which ensures the boundedness of the quadratic estimation error. The cases where the quantum systems and their measurements have a state-dependent covariance, i.e. they are functions of the system observables, are treated as an extension of the quantum EKF. We show that it is sufficient to modify the behavior of the Riccati differential equation to guarantee an equivalent dissipativity condition as the original quantum EKF.
\\
Finally, we demonstrate the application of the quantum EKF to two estimation problems. The first problem is the estimation of quadratures of two cavity modes with Kerr nonlinearities \cite{Walls1994} subject to homodyne detection. The second example is the estimation of quadratures of a cavity subject to simultaneous homodyne detection and photon counting measurements.
\\
The readers are referred to  \cite{bouten2007introduction} for an introduction to quantum probability, quantum stochastic calculus, and quantum non-demolition measurements. We shall use the term `classical' to refer to commutative dynamics and filters. We shall also use the term `state-dependent' covariance to refer to covariance matrices that are functions of the system observables.
\\
The article is organized as follows. The second section will contain some preliminary facts that are used in this article. The main section will describe the main contribution of this work. The first is the mathematical description of the quantum EKF and its existence. The computational complexity of the quantum EKF is analyzed as compared to the SME. Next, we analyze the convergence of the quantum EKF. Lastly, we show an extension of the quantum EKF for state-dependent quantum systems. The third section will be examples of quantum EKF applications, and the last section is the conclusions.

	\subsection{Notation}
	Classical probability spaces are denoted by a triple $(\Omega, \mathscr{F},\mu)$. Plain letters (e.g. $P$) will be used to denote elements of an algebra. $\mathbb{P}$ will be used for a measure from a von Neumann algebra $\mathscr{A}$ to a complex number $\mathbb{C}$, that is positive and normalized, i.e. $\Paverage{A^\ast A} \geq 0$ and $\Paverage{\mathbb{1}}=1$. We also use $\mathbb{E}_{\mathbb{P}}\left(\cdot | \mathscr{A}\right)$ to denote a conditional expectation with measure $\mathbb{P}$ with respect to a commutative von Neumann algebra $\mathscr{A}$. Script face (e.g. $\mathscr{H}$ for Hilbert space) is used to denote a spaces as well as any type of algebra. A class of operators will be denoted by calligraphic face, e.g., for bounded linear operators from a Hilbert space $\mathscr{H}$, to a Hilbert space $\mathscr{K}$, we denote $\mathcal{B}\left(\mathscr{H},\mathscr{K}\right)$, and also we denote $\mathcal{B}\left(\mathscr{H}\right) \equiv \mathcal{B}\left(\mathscr{H},\mathscr{H}\right)$.
	Bold letters (e.g. $\mathbf{y}$) will be used to denote a matrix whose elements are operators on a Hilbert space. Hilbert space adjoints, are indicated by $^{\ast}$, while the complex conjugate transpose will be denoted by $\dagger$, i.e. $\left(\mathbf{X}^{\ast}\right)^{\top} = \mathbf{X}^{\dagger}$. For single-element operators we will use $*$ and $\dagger$ interchangeably. The commutator of $\mathbf{x}$ and $\mathbf{y}$ is given by $[\mathbf{x}, \mathbf{y} ] = \mathbf{x}\mathbf{y}^\top - \left(\mathbf{y}\mathbf{x}^\top\right)^\top$, while their anti-commutator is given by $\anticommutator{\mathbf{x}}{\mathbf{y}}  = \mathbf{x}\mathbf{y}^\top + \left(\mathbf{y}\mathbf{x}^\top\right)^\top$.
	\begin{dmath}
		\alpha
	\end{dmath}
\section{Preliminary}
In the classical stochastic case, the optimal nonlinear filter is given by the Kushner-Stratonovich equation \cite{liptser2001statistics}. This equation is based on the existence of a sample path, $X_t $, whose dynamics is described by a stochastic differential equation, e.g. $dX_t = b(X_t)dt + \sigma(X_t)dW_t$.
The nonlinear filter for a function of $X_t, f(X_t)$, is then given via the generator of the Markov process $f(X_t)$,
\begin{align*}
\mathcal{L}f(X_t) \equiv \lim\limits_{dt\downarrow 0} \dfrac{\mathbb{E}\left[f(X_{t+dt})|X_t = x\right] - f(x)}{dt}.
\end{align*}
In the nonlinear quantum filter, the sample path of the underlying process $X_t$ generally does not exist \cite{bouten2007introduction}, but instead, we have an equivalent evolution of a unitary operator. By means of the evolution of the unitary operator, for open quantum system with Hamiltonian $\mathbb{H}_t$, and bath coupling operator $\mathbb{L}_t$, the equivalent quantum Markovian generator for a set of observables $\mathbf{x}_t$ is given by \cite{lindblad1976generators}
\begin{align}
\mathcal{L}(\mathbf{x}_t) = &  -\dfrac{i}{\hbar}\left[\mathbf{x}_t,\mathbb{H}_t\right] + \frac{1}{2}\left( \mathbb{L}_t^{\dagger} \left[\mathbf{x}_t,\mathbb{L}_t\right]^\top \right)^\top + \frac{1}{2} \left[\mathbb{L}_t^{\ast},\mathbf{x}_t\right]^\top\mathbb{L}_t. \label{eq:QuantumMarkovProcessGenerator}
\end{align}
\\
We begin our formulation by considering Hilbert spaces for the system and the environment. First, let the system Hilbert space and the field Boson Fock space, be given by $\mathscr{H}_s$ and $\Gamma(\mathscr{h})$. The total Hilbert space is given by $\mathscr{H}_{\mathsf{T}} = \mathscr{H}_s \otimes \Gamma(\mathscr{h})$ and $\mathscr{H}_{\mathsf{T}\left[0,t\right]} = \mathscr{H}_s \otimes \Gamma(\mathscr{h})_{\left[0,t\right]}$. The unitary evolution of the system interaction with the field is described by $U_t \in \mathcal{U}(\mathscr{H}_s\otimes\Gamma(\mathscr{h})_{\left[0,t\right]})$, where $\mathcal{U}$ is the class of unitary operators on the associated Hilbert spaces. Let the evolution of a system operator $X$ be given by $X_t = U_t^\ast\left(X\otimes \mathbb{1}\right)U_t$ where $\mathbb{1}$ is identity operator in $\Gamma(\mathscr{h})$. Furthermore, without loss of generality, we will assume that the field is initially on the vacuum state. Let the system and field's initial density operators be given by $\rho \in \mathcal{S}(\mathscr{H}_s) \;, \omega \in \mathcal{S}(\Gamma(\mathscr{h}))$, where $\mathcal{S}$ is the class of unity trace operator on the associated Hilbert spaces.\\
The fundamental quantum processes as in \cite{parthasarathy2012} are given by
$ d\mathbf{A}_t^\ast,d\mathbf{A}_t,d\bm{\Lambda}_t$, the annihilation, creation and conservation processes. These processes are forward time differential, i.e. $d\mathbf{A}_t^\ast,d\mathbf{A}_t,d\bm{\Lambda}_t \in \mathcal{B}(\mathscr{H}_{\mathsf{T}\left[t,t+\delta t\right]})$ and hence commute with $\mathbf{x}_t$. The counting process $d\bm{\lambda}_t$ is defined as the diagonal element of $d\bm{\Lambda}_t$. From now on we let $\mathcal{B} \equiv \mathcal{B}\left(\mathscr{H}_{\mathsf{T}}\right)$ and $\mathcal{B}_{\left[t,t+\delta t\right]} \equiv \mathcal{B}\left(\mathscr{H}_{\mathsf{T}\left[t,t+\delta t\right]}\right)$,  $\mathcal{B}_{\left.t\right]} \equiv \mathcal{B}\left(\mathscr{H}_{\mathsf{T}\left.t\right]}\right)$.
\\
Furthermore, for given  system and field initial density operators $\left(\rho \otimes \omega \right)$ there is a corresponding measure $\mathbb{P}$, called a state, which is positive linear and normalized.
For a bounded operator $X \in \mathcal{B}_{t]}$, $\Paverage{X} = \trace{X \rho \otimes \omega}$, see  \cite[Proposition 9.19]{parthasarathy2012}.
\\
Let us now fix the von Neumann algebra $\mathscr{N} = \mathcal{B}$, and the set of bounded self adjoint operators in total Hilbert space, $\mathcal{O} = \mathcal{O}\left(\mathscr{H}_{\mathsf{T}}\right) \subset \mathscr{N}$. We denote $\mathbf{x}_t \in \mathcal{O}_t$ to be a set of system observables evolved up to time $t$. \\
In the quantum probability setting, a quantum probability space is defined by specifying a von Neumann algebra $\mathscr{N}$ and a state $\mathbb{P}$. Let $\mathscr{A} \subset \mathscr{N}$ be a commutative von Neumann sub algebra. We call the set $\mathscr{A}' = \left\{B \in \mathscr{N}: AB = BA , \;\forall A \in \mathscr{A} \right\}$ the commutant of $\mathscr{A}$ in $ \mathscr{N}$.
The conditional expectation in the quantum probability setting is defined as follows.
\begin{definition}\label{def:QuantumConditionalExpectation}\cite{bouten2007introduction}
	For a given quantum probability space $\left(\mathscr{N},\mathbb{P}\right)$, let $\mathscr{A} \subset \mathscr{N}$ be a commutative von Neumann sub-algebra. Then the map $\condExpectB{\cdot}{\mathscr{A}} : \mathscr{A}' \rightarrow \mathscr{A}$ is called the conditional expectation from $\mathscr{A}'$ onto $\mathscr{A}$ if
	$\Paverage{\condExpectB{B}{\mathscr{A}}A} = \Paverage{BA}, \; \forall A \in \mathscr{A}, \; B \in \mathscr{A}'$.
\end{definition}
The following theorem is fundamental to obtain the relation between quantum probability and the classical Kolmogorov probability axioms. In addition, we will use this theorem to show the implementation of the quantum EKF as a classical EKF in the following section. The proof of the theorem is given in \cite[\textsection 1.1.8]{sakai1971c}\cite{bouten2007introduction}.
\begin{theorem}\label{thm:SpectralTheoremInfiniteDimension}
	Let $\mathscr{A}$ be a commutative von Neumann algebra. Then $\mathscr{A}$ is $\ast$-isomorphic to $\mathcal{L}^\infty\left(\Omega,\mathscr{F},\mu\right)$, with $\ast$-isomorphism $\tau$. Furthermore, a normal state $\mathbb{P}$ on $\mathscr{A}$ defines a probability measure $\mu_{P}$, which is absolutely continuous with respect to $\mu$, such that, $\Paverage{A} = \mathbb{E}_{\mu_{P}}\left(\tau\left(A\right)\right)$, for all $A \in \mathscr{A}$.
\end{theorem}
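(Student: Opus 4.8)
The plan is to reduce the statement to the commutative Gelfand--Naimark theorem and then upgrade the resulting $C(X)$ representation to an $L^\infty$ one by exploiting the extra structure carried by a von Neumann algebra (monotone completeness and a separating family of normal states). Since $\mathscr{A}$ is in particular a commutative unital $C^\ast$-algebra, Gelfand--Naimark supplies a $\ast$-isomorphism $\mathscr{A} \cong C(\Delta)$, where $\Delta$ is the maximal ideal space of $\mathscr{A}$, a compact Hausdorff space. The first substantive step is to show that $\Delta$ is \emph{hyperstonean}: every bounded increasing net in $\mathscr{A}$ has a supremum in $\mathscr{A}$ (because $\mathscr{A}$, being a von Neumann algebra, is monotone closed in $\mathcal{B}(\mathscr{H}_{\mathsf{T}})$), which forces $\Delta$ to be extremally disconnected, and the family of normal states on $\mathscr{A}$ is separating, which supplies the "enough normal measures" condition. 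By the Kakutani/Dixmier representation of hyperstonean spaces, $C(\Delta)$ is then isometrically $\ast$-isomorphic to $L^\infty(\Omega,\mathscr{F},\mu)$ for a suitable localizable measure space $(\Omega,\mathscr{F},\mu)$; taking $\tau$ to be the composite isomorphism gives the first assertion.

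Alternatively --- and this is the route I would actually carry out if the underlying Hilbert space $\mathscr{H}_{\mathsf{T}}$ is assumed separable --- one bypasses the abstract hyperstonean machinery via classical spectral multiplicity theory. A commutative von Neumann algebra acting on a separable Hilbert space is generated by a single bounded self-adjoint operator $T$; its projection-valued spectral measure together with a maximal spectral-type (scalar) measure $\mu$ on $\sigma(T)$ realizes $\mathscr{A}$ unitarily as the algebra of multiplication operators $M_\phi$, $\phi \in L^\infty(\sigma(T),\mu)$, on a direct-integral Hilbert space. The map $M_\phi \mapsto \phi$ is then the desired $\ast$-isomorphism $\tau$ onto $L^\infty(\Omega,\mathscr{F},\mu)$ with $(\Omega,\mathscr{F})=(\sigma(T),\mathrm{Borel})$.

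For the statement about states, recall that $\mathscr{A}$ is the Banach-space dual of its predual $\mathscr{A}_\ast$, and a \emph{normal} state is exactly a state lying in $\mathscr{A}_\ast$. Transporting through $\tau$, the predual of $L^\infty(\Omega,\mathscr{F},\mu)$ is $L^1(\Omega,\mathscr{F},\mu)$, so $\mathbb{P}$ is represented by a unique $g \in L^1(\mu)$ with $g \geq 0$ $\mu$-a.e.\ and $\int_\Omega g\, d\mu = 1$ (positivity and normalization of $\mathbb{P}$ transfer to $g$). Define $\mu_P$ by $\mu_P(E) = \int_E g\, d\mu$. Then $\mu_P$ is a probability measure, $\mu_P \ll \mu$ by construction (with $g = d\mu_P/d\mu$), and for every $A \in \mathscr{A}$ one obtains $\Paverage{A} = \int_\Omega \tau(A)\, g\, d\mu = \int_\Omega \tau(A)\, d\mu_P = \mathbb{E}_{\mu_P}\!\left(\tau(A)\right)$, which is the second assertion.

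The main obstacle is the first step: passing from the generic $C(\Delta)$ picture valid for any abstract commutative $C^\ast$-algebra to a genuine $L^\infty$ space, which is precisely the point where the von Neumann (rather than merely $C^\ast$) hypothesis is indispensable --- one needs monotone completeness together with a separating set of normal states to conclude that $\Delta$ is hyperstonean. In the separable setting this difficulty dissolves into the standard spectral multiplicity decomposition, which is why I would specialize there; the general case is exactly the structure theory of Sakai and Dixmier cited in the text.
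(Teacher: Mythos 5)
Your argument is correct and follows essentially the same route as the paper's source for this theorem: the paper does not reproduce a proof but defers to Sakai (\S 1.18) and Bouten et al., where the result is established exactly as you describe, via Gelfand--Naimark plus the hyperstonean structure of the spectrum (monotone completeness and a separating family of normal states) to obtain the $L^\infty$ representation, and via the predual identification $\left(L^\infty\right)_\ast \cong L^1$ with Radon--Nikodym to represent the normal state. Both your general hyperstonean argument and your separable-case shortcut through spectral multiplicity theory are sound, so there is nothing further to add.
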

We notice here that the commutative von Neumann algebra $\mathscr{A}$ corresponds to a classical field $\mathscr{F}$, and every projection operator in  $\mathscr{A}$ corresponds to a classical event.
\\
In the following discussion, we will use $\norm{\cdot}$ the following semi-norm on $\mathcal{B}$: if $\mathbf{x} \in \mathcal{B}^{n\times1}$, then
\begin{dmath}
	\norm{\mathbf{x}} \equiv {\Paverage{\mathbf{x}^\dagger\mathbf{x}}}^{\frac{1}{2}}. \label{eq:SemiNorm}
\end{dmath}
This quantity is not a norm since, $\Paverage{\mathbf{x}^\dagger\mathbf{x}}$ can be zero for non zero $\mathbf{x}$ that is perpendicular to the density operator. Furthermore, a partial order of two operators, $A > B$ is taken in the sense of $\mathbb{P}$, where $A>B$ in $\mathbb{P}$ denotes $\Paverage{A-B} > 0$.
Now suppose we have two operator vectors $\mathbf{x} \in \mathscr{A}', \hat{\mathbf{x}}\in \mathscr{A} $, where $\mathscr{A}$ is a commutative von Neumann algebra. Under semi-norm\eqref{eq:SemiNorm}, by Definition \ref{def:QuantumConditionalExpectation}, we obtain
\begin{dmath}
	\norm{\mathbf{x}-\hat{\mathbf{x}} } =\Paverage{\left(\mathbf{x}- \hat{\mathbf{x}}\right)^\dagger \left(\mathbf{x}- \hat{\mathbf{x}}\right)}^{1/2} = \Paverage{\condExpectB{\left(\mathbf{x}- \hat{\mathbf{x}}\right)^\dagger \left(\mathbf{x}- \hat{\mathbf{x}}\right)}{\mathscr{A}}}^{1/2}. \label{eq:normError}
\end{dmath}
The last equation implies that, for any $\epsilon > 0$, $\left\{\omega : \norm{\mathbf{x}-\hat{\mathbf{x}} } < \epsilon \right\} \in \mathscr{A}$. That is, the event that $\norm{\mathbf{x}-\hat{\mathbf{x}} } < \epsilon$ is $\mathscr{A}-$ measurable. Later on we will use this fact to define a Markov time when we are dealing with the stochastic stability of the quantum EKF, see Section \ref{sec:ConvAnal}.
\\
For a self adjoint element $T \in \mathscr{N}$ there is a $^\ast$-isomorphism $f \rightarrow f(T)$ from a continuous function in the spectrum of $T$, $I_T  = sp(T)\in \mathbb{R}$, namely $\mathcal{C}(I_T)$, onto the $\mathcal{C}^\ast$-subalgebra $\mathcal{C}^\ast (T)$ of $\mathscr{N}$ generated by $T$ and the identity element $\mathbb{1}$.
\\
The following two propositions show how one can define a partial derivative for an operator differentiable mappings which will be used in the filter algorithm. Both have been proved in \cite{pedersen2000operator,Widom1983}, but we mention the proof here again for the sake of completeness. For the two propositions, we will recall the following definitions,
\begin{definition}\cite{hutson2005applications,pedersen2000operator}If $\mathscr{X}$ and $\mathscr{Y}$ are Banach spaces with norm $\norm{\cdot}_{\mathscr{X}}$ and $\norm{\cdot}_{\mathscr{Y}}$ respectively, a mapping $f : \mathscr{D} \rightarrow \mathscr{Y}$ on a subset $\mathscr{D}$ of $\mathscr{X}$ is Fr\'{e}chet differentiable at $T \in \mathscr{D}$ if there is a bounded linear operator $D_{\left(f,T\right)}$ in $\mathcal{B}\left(\mathscr{X},\mathscr{Y}\right)$, the class of linear bounded functions from $\mathscr{X}$ to $\mathscr{Y}$,  such that,
	\begin{dmath}
		\lim\limits_{S\rightarrow 0} \dfrac{\norm{f(T+S) - f(T) - D_{\left(f,T\right)}S}_{\mathscr{Y}}}{\norm{S}_{\mathscr{X}}} = 0. \label{eq:FrechectDifferential}
	\end{dmath}
	If $D_{\left(f,T\right)}$ is defined for every $T \in \mathscr{X}$, then $f$ is Fr\'{e}chet differentiable on $\mathscr{D}$.\label{def:FrechetDifferentiable}
\end{definition}
\begin{definition}\cite{pedersen2000operator}
	Let $\mathscr{U}$ be any unitary C-* algebra, and $\mathscr{S}$ is the corresponding self adjoint sub-algebra. For any continuous function $f$ on the compact interval $I$, $f \in \mathcal{C}\left(I\right)$ is said to be operator differentiable if the operator function
	\begin{align*}
	f : \mathscr{S}^{I} \rightarrow \mathscr{U},
	\end{align*}
	is Fr\'{e}chet differentiable on $\mathscr{D} = \mathscr{S}^{I}$, symbolically $f \in \mathcal{C}^1_{op}(I)$.
\end{definition}

\begin{proposition}\cite{Widom1983,pedersen2000operator}
	If $f \in  \mathcal{C}^1_{op}(I)$, then $f \in \mathcal{C}^1\left(I\right)$
	\label{prp:FinCpOPthenFBelongsToC1}
\end{proposition}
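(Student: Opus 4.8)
The plan is to recover the ordinary derivative of $f$ by testing the operator function on scalar arguments, and then to force continuity of that derivative by exploiting the uniformity built into the Fr\'{e}chet condition \eqref{eq:FrechectDifferential} once the ambient algebra is allowed to be $\mathcal{B}(\mathscr{H})$ with $\mathscr{H}$ infinite dimensional. For \emph{pointwise differentiability}, fix $t$ in the interior of $I$, put $T=t\mathbb{1}\in\mathscr{S}^{I}$, and test with scalar increments $S=s\mathbb{1}$. Since $f(t\mathbb{1}+s\mathbb{1})=f(t+s)\mathbb{1}$ by the functional calculus, \eqref{eq:FrechectDifferential} forces $\tfrac1s\bigl(f(t+s)-f(t)\bigr)\mathbb{1}\to D_{(f,T)}\mathbb{1}$ in norm as $s\to0$; since $\mathbb{C}\mathbb{1}$ is norm-closed, the limit is a scalar multiple $f'(t)\mathbb{1}$, which identifies a derivative $f'(t)$ of $f$ at $t$ (one-sided at the endpoints of $I$). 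Thus $f\in\mathcal{C}^{1}(I)$ will follow once $f'$ is shown to be continuous.

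For \emph{continuity of $f'$} I would argue by contradiction. Suppose $f'$ is discontinuous at some interior point $c$: then there are $\epsilon_{0}>0$ and a sequence $t_{n}\to c$ with $f'(t_{n})\ge f'(c)+\epsilon_{0}$ for all $n$ (the opposite sign being symmetric). Passing to a subsequence we may take the $t_{n}$ as close to $c$ as needed; using differentiability, for each $n$ choose $h_{n}>0$ with $h_{n}\to0$, $t_{n}\pm h_{n}\in I$, and $|q_{n}-f'(t_{n})|<\epsilon_{0}/4$, where $q_{n}:=\bigl(f(t_{n}+h_{n})-f(t_{n}-h_{n})\bigr)/2h_{n}$. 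On $\mathscr{H}=\bigoplus_{n\ge1}\mathbb{C}^{2}$ set $T=\bigoplus_{n}\mathrm{diag}(t_{n}+h_{n},\,t_{n}-h_{n})$, a bounded self-adjoint operator with spectrum in $I$, so $T\in\mathscr{S}^{I}$ for $\mathscr{U}=\mathcal{B}(\mathscr{H})$; take the increment $S=s\,\Sigma$ with $\Sigma=\bigoplus_{n}\left(\begin{smallmatrix}0&1\\1&0\end{smallmatrix}\right)$. Since $T+s\Sigma$ is block diagonal so is $f(T+s\Sigma)$, and the elementary $2\times2$ functional calculus gives the $(1,2)$ entry of its $n$th block as $\beta_{n}(s)\,s$ with $\beta_{n}(s)=\bigl(f(t_{n}+\rho_{n})-f(t_{n}-\rho_{n})\bigr)/2\rho_{n}$ and $\rho_{n}=\sqrt{h_{n}^{2}+s^{2}}$; note $\beta_{n}$ is continuous at $s=0$ with $\beta_{n}(0)=q_{n}$. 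Hence $D_{(f,T)}\Sigma$, being a norm limit of block-diagonal operators, is itself block diagonal with $n$th $(1,2)$ entry $\lim_{s\to0}\beta_{n}(s)=q_{n}$, so comparing entries blockwise,
\[
\frac{\norm{f(T+s\Sigma)-f(T)-s\,D_{(f,T)}\Sigma}}{|s|}\ \ge\ \sup_{n}\bigl|\beta_{n}(s)-q_{n}\bigr|.
\]
Now fix a small $s>0$ and let $n\to\infty$: since $t_{n}\to c$ and $h_{n}\to0$, $\rho_{n}\to s$ and $\beta_{n}(s)\to\bigl(f(c+s)-f(c-s)\bigr)/2s$, which tends to $f'(c)$ as $s\to0$; since $q_{n}\ge f'(c)+\tfrac34\epsilon_{0}$ for every $n$, this forces $\sup_{n}|\beta_{n}(s)-q_{n}|\ge\epsilon_{0}/2$ for all sufficiently small $s$, contradicting \eqref{eq:FrechectDifferential} at $T$. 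Hence $f'$ is continuous and $f\in\mathcal{C}^{1}(I)$; the endpoints of $I$ are handled by the same argument with one-sided increments.

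The step I expect to demand the most care is the block-diagonal bookkeeping in the contradiction: that the functional calculus respects the $\bigoplus\mathbb{C}^{2}$ decomposition, that $D_{(f,T)}\Sigma$ is therefore block diagonal with the stated entries, and --- crucially --- that $s$ is driven to $0$ only \emph{after} the block index $n$ is sent to $\infty$. The presence of infinitely many blocks is essential: on a fixed matrix algebra $M_{N}$ the supremum on the right-hand side above tends to $0$ with $s$, so no contradiction arises there, in accordance with operator differentiability on bounded-size matrices being strictly weaker than $\mathcal{C}^{1}$.
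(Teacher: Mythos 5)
Your proof is correct in substance, but it takes a genuinely different route from the paper's. The paper tests Fr\'{e}chet differentiability inside the \emph{commutative} algebra $\mathscr{U}=\mathcal{C}_b(I)$, at the identity function $g$ and with a \emph{constant} increment $h\equiv x_0-x_1$: because the norm there is the sup norm, the single Fr\'{e}chet remainder estimate at $g$ is automatically uniform over all points of $I$, and adding the two first-order Taylor remainders at $x_0$ and at $x_1$ cancels the $f$-differences and leaves $\left(f'(x_0)-f'(x_1)\right)(x_1-x_0)$, bounded by $\epsilon\absolut{x_1-x_0}$ whenever $\absolut{x_1-x_0}<\delta$ --- a short symmetrization that in fact yields \emph{uniform} continuity of $f'$. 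You instead run the classical Widom-style argument in $\mathcal{B}\left(\bigoplus_n\mathbb{C}^2\right)$: an infinite direct sum of $2\times 2$ blocks whose off-diagonal perturbation exposes the divided differences $\beta_n(s)$ through the functional calculus, with the order of limits ($n\to\infty$ before $s\to 0$) producing the contradiction with a putative discontinuity of $f'$. Your bookkeeping is sound: block-diagonality of $D_{(f,T)}\Sigma$ follows from norm-closedness of the block-diagonal operators, the entrywise lower bound on the operator norm is correct, and your remark that no contradiction is available in a fixed matrix algebra $M_N$ is exactly the right sanity check. Two points to tidy up: (i) the eigenvalues of $T+s\Sigma$ are $t_n\pm\sqrt{h_n^2+s^2}$ and may leave $I$ for the first few blocks; since $c$ is interior, $t_n\to c$ and $h_n\to 0$, discard finitely many blocks and restrict $s$ so the spectrum stays in $I$; (ii) your argument requires the ambient C$^{\ast}$-algebra to be infinite dimensional and noncommutative, which is legitimate under the paper's definition (``any unitary C$^{\ast}$-algebra''), whereas the paper gets by with a commutative test algebra. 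What each approach buys: the paper's is shorter and delivers uniform continuity directly; yours establishes the sharper fact that Fr\'{e}chet differentiability in $\mathcal{B}(\mathscr{H})$ alone already forces $f\in\mathcal{C}^1(I)$, which is the formulation in the cited references.
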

\begin{proof}
	Without losing of generality, we may assume that $I$ is bounded, i.e $I = \left\{x\in \mathbb{R}: a \leq i \leq b \right\}$, and $f:\mathscr{U} \rightarrow \mathscr{U}$. Set $\mathscr{U} = \mathcal{C}_b\left(I\right)$, the set of a bounded continuous functions on $I$. Since $f \in \mathcal{C}^1_{op}(I) $, the differentiability of $f$ at a function $g \in \mathscr{U}$ implies that for every $\epsilon > 0$, there exists $\delta > 0$ such that for any function $h, \norm{h} < \delta$
	\begin{dmath}
		\norm{f(g+h)-f(g) - D_{\left(f,g\right)} h } \leq \epsilon \norm{h},
	\end{dmath}
	which shows that, $D_{\left(f,g\right)} h : h \rightarrow \left(f' \circ g\right) h $. Suppose there exist two arbitrary points $x_1,x_0 \in I$, satisfying $\absolut{x_1-x_0}<\delta$. Now let $g(x)$ equal the identity function, and $h(x) = x_0-x_1$ constant. We get, $\norm{h} < \delta$, and at $x=x_1$,
	\begin{dmath*}
		f(x_1+x_0-x_1)-f(x_1) - f'\left(x_1\right) (x_0-x_1) = f(x_0)-f(x_1) - f'\left(x_1\right) (x_0-x_1).
	\end{dmath*}
	Next interchanging $x_0$ and $x_1$, we get,
	\begin{dmath*}
		f(x_1)-f(x_0) - f'\left(x_0\right) (x_1-x_0).
	\end{dmath*}
	Adding this and the previous equation, and dividing by $\norm{x_1-x_0}$, we get
	\begin{dmath*}
		\norm{f'\left(x_0\right)- f'\left(x_1\right)} < \epsilon
	\end{dmath*}
	which shows the continuity of $f'$ on $I$.
\end{proof}

\begin{proposition}\cite{pedersen2000operator}
	If $f \in \mathcal{C}^1_{op}(I)$, for any two elements $S,T \in \mathscr{U}$, a unital commutative C-$^\ast$ algebra,  then
	\begin{align*}
	D_{\left(f,T\right)}S = f'(T)S.
	\end{align*}
	\label{prp:OrdinaryDerivative}
\end{proposition}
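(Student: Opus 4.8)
The plan is to reduce the operator identity to the one-variable chain rule by representing $\mathscr{U}$ concretely. Since $\mathscr{U}$ is a unital commutative C$^\ast$-algebra, the Gelfand--Naimark theorem gives an isometric $\ast$-isomorphism $\mathscr{U} \cong \mathcal{C}(X)$, where $X$ is the compact Hausdorff character space; each character $\chi \in X$ is a norm-contractive $\ast$-homomorphism $\mathscr{U} \to \mathbb{C}$, and the family of characters separates the points of $\mathscr{U}$. Because the continuous functional calculus intertwines with $\ast$-homomorphisms, for a self-adjoint $T$ with $sp(T) \subseteq I$ one has $\chi(f(T)) = f(\chi(T))$ for every $\chi \in X$ and every $f \in \mathcal{C}(I)$. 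Moreover, by Proposition \ref{prp:FinCpOPthenFBelongsToC1}, $f \in \mathcal{C}^1(I)$, so the classical derivative $f'$ is continuous on $I$ and $f'(T) \in \mathscr{U}$ is well defined via functional calculus.

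Next I would pass from the Fr\'{e}chet derivative to the G\^{a}teaux derivative: Definition \ref{def:FrechetDifferentiable} applied to $f \in \mathcal{C}^1_{op}(I)$ gives, for a fixed increment $S$,
\begin{align*}
	D_{\left(f,T\right)}S = \lim_{t \to 0} \frac{f(T+tS) - f(T)}{t},
\end{align*}
the limit being in the norm of $\mathscr{U}$. Applying an arbitrary character $\chi$, which being linear and bounded commutes with the norm limit, and using the intertwining identity above,
\begin{align*}
	\chi\bigl(D_{\left(f,T\right)}S\bigr)
	= \lim_{t \to 0} \frac{f\bigl(\chi(T) + t\,\chi(S)\bigr) - f\bigl(\chi(T)\bigr)}{t}
	= f'\bigl(\chi(T)\bigr)\,\chi(S),
\end{align*}
the last step being the ordinary chain rule for the scalar $\mathcal{C}^1$ function $f$. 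Since $\chi$ is multiplicative and $\chi(f'(T)) = f'(\chi(T))$, the right-hand side equals $\chi(f'(T)S)$. As this holds for all $\chi \in X$ and characters separate points, $D_{\left(f,T\right)}S = f'(T)S$; commutativity of $\mathscr{U}$ makes the product on the right unambiguous.

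I expect the only subtle points to be the two facts imported from C$^\ast$-algebra theory --- the Gelfand representation and the compatibility $\chi \circ f(\cdot) = f(\chi(\cdot))$ of the functional calculus with characters --- together with the elementary interchange of $\chi$ with the limit defining the G\^{a}teaux derivative; everything else is the scalar computation already carried out in the proof of Proposition \ref{prp:FinCpOPthenFBelongsToC1}. An alternative that avoids invoking the spectrum explicitly is to mirror that earlier proof directly: identify $T$ and $S$ with functions in $\mathcal{C}(X)$, observe that $f$ then acts as the superposition operator $g \mapsto f\circ g$, and repeat the test-function argument there (now with $g$ playing the role of $T$ and $h$ a scalar multiple of $S$) to pin down the derivative as $g \mapsto (f'\circ g)\,h$.
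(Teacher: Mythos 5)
Your proof is correct and takes essentially the same route as the paper's: the paper implicitly identifies the unital commutative C$^\ast$-algebra $\mathscr{U}$ with a function algebra on a space $\mathscr{X}$ and evaluates pointwise to reduce to the scalar case via Proposition \ref{prp:FinCpOPthenFBelongsToC1}, which is exactly what your Gelfand/character argument does, only stated more carefully. The additional steps you make explicit --- passing from the Fr\'{e}chet to the G\^{a}teaux derivative, interchanging a character with the norm limit, and using that characters separate points --- are precisely the details the paper leaves implicit.
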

\begin{proof}
	Let $S,T \in \mathscr{U}$, a unital commutative C-$^\ast$ algebra on the space $\mathscr{X}$. We write, $f(T)(x) = f(T(x)), \forall x \in \mathscr{X}$, so that, as the previous proposition, $f(T) = f \circ T$. Since $f \in \mathcal{C}^1_{op}(I)$, by definition, there exists a differential $D_{T}f \in  \mathcal{B}\left(\mathscr{U},\mathscr{U}\right)$, such that,
	\begin{dmath}
		f(T + \epsilon S) (x) = f(T ) (x) + \epsilon D_{\left(f,T\right)}S(x) +  \epsilon R_{\epsilon} (x),
	\end{dmath}
	where $\lim\limits_{\epsilon \rightarrow 0} R_{\epsilon} = 0$. By Proposition \ref{prp:FinCpOPthenFBelongsToC1}, it follows that,
	\begin{dmath*}
		D_{\left(f,T\right)}S(x) = f'(T(x))S(x),
	\end{dmath*}
	as desired.
\end{proof}
By definition the last proposition is also true for a commutative von Neumann algebra, since any von Neumann algebra is a C-$^\ast$ subalgera of $\mathcal{B}\left(\mathscr{H}\right)$ that is strongly closed and contains the identity, for $\mathscr{H}$ is any Hilbert space \cite{conway2000course}.
\\
The condition that $\mathscr{U}$ is commutative in Proposition \ref{prp:OrdinaryDerivative} is essential. In addition, it was proven that one can construct a $\mathcal{C}^1$ function that is not operator differentiable \cite{farforovskaya1975example}. Moreover, if $f \in \mathcal{C}^2$, then its extension is operator differentiable, see \cite{arazy1990operator,kissin1996operator}. Hence $ \mathcal{C}^2 \subseteq \mathcal{C}_{op}^1 \subseteq \mathcal{C}^1$.

\begin{definition}
	An operator $A$ in a Hilbert space $\mathscr{H}$ is single valued, if it can be written as a scalar times an identity,
	\begin{align}
	A = a \mathbb{1}, a \in \mathbb{C}. \label{eq:HatSingleValued}
	\end{align}
	A matrix $\mathbf{A}$ whose each elements are operators in $\mathscr{H}$ is single valued if each element of $\mathbf{A}$ is single valued.
\end{definition}

	\section{Main Statement}
	\subsection{Extended Kalman filter for a class of open quantum quantum systems with diffusive and Poissonian measurements}

	Let $\mathcal{G}$ be an $m$ channel open quantum system with parameters $(\mathbb{S},\mathbb{L},\mathbb{H})$, $\mathbb{S}\in \mathbb{C}^{m\times m}, \mathbb{L} \in \mathcal{B}_t^{m\times 1}, \mathbb{H} \in \mathcal{O}_t^{m\times 1}$  and $\mathbb{S}\mathbb{S}^{\dagger} = \mathbb{S}^{\dagger}\mathbb{S} = \mathbf{I}$.
	The class of open quantum system that we consider here is the one which has covariance matrices independent of $\mathbf{x}_t$. A more general class of open quantum system with state-dependent covariances will be considered in the following section.
	The smallest von Neumann algebra generated from the measurement up to time $t$ is given by $\mathscr{Y}_t$.
	As in the classical setting, we can always assume that the von Neumann algebra $\mathscr{Y}_t$ is right continuous, by taking $\mathscr{Y}_t = \mathscr{Y}_{t_+} = \bigcap_{s>t}\mathscr{Y}_t$ \cite{klebaner2005introduction}.
	\\
	We assume that the scattering matrix $\mathbb{S}$ is single valued. This implies that for any $X_t = U_t^\dagger X U_t$, $\trace{( (\mathbb{S}^{\dagger}X_t\mathbb{S})-X_t)d\bm{\Lambda}_t^\top}  = 0$. This ensures that the open quantum system evolution can be described by a diffusive QSDE as follows\cite{gough2009series}:
	\begin{dgroup}
		\begin{dmath}
			d\mathbf{x}_t = \mathbf{f}(\mathbf{x}_t)dt + \mathbf{G}(\mathbf{x}_t) d\mathbf{A}_t^\ast + \mathbf{G}(\mathbf{x}_t)^\ast d\mathbf{A}_t, \label{eq:QSDE}
		\end{dmath}
		\begin{dsuspend}
			with,
		\end{dsuspend}
		\begin{dmath*}
			\mathbf{f}(\mathbf{x}_t) =  \mathcal{L}(\mathbf{x}_t),\label{eq:fQSDE}
		\end{dmath*}
		\begin{dmath*}
			\mathbf{G}(\mathbf{x}_t) =  \left[\mathbf{x}_t,\mathbb{L}_t\right] \mathbb{S}_t^\ast.\label{eq:G}
		\end{dmath*}

	\end{dgroup}
	We shall assume that the number of measurements subject to the open quantum system $\mathcal{G}$ also equal to $m$.
	The class of measurements of the quantum system above are assumed to be a collection of $m$ functions of output field creation, annihilation, and conservation processes, as below
	\begin{dgroup}
		\begin{dmath}
			d\mathbf{y}_t = \mathbf{E}_td\tilde{\mathbf{A}}_t^{\ast} +\mathbf{E}^{\ast}_t d\tilde{\mathbf{A}}_t+ \mathbf{N}_t d\tilde{\bm{\lambda}}_t,\label{eq:Measurements}
		\end{dmath}
		\begin{dsuspend}
			where,
		\end{dsuspend}
		\begin{dmath*}
			d\tilde{\bm{\lambda}}_t = \text{diag}\left(d\tilde{\bm{\Lambda}}_t\right),
		\end{dmath*}
		\begin{dmath*}
			d\tilde{\mathbf{A}}_t = \mathbb{S}_td\mathbf{A}_t + \mathbb{L}_t dt,
		\end{dmath*}
		\begin{dmath*}
			d\tilde{\bm{\Lambda}}_t = \mathbb{S}_t^{*} d\bm{\Lambda} \mathbb{S}_t^{\top} + \mathbb{S}_t^{*} d\mathbf{A}_t^{\ast}\mathbb{L}_t^{\top} + \mathbb{L}_t^{\ast}d\mathbf{A}_t^{\top}\mathbb{S}_t^{\top} + \mathbb{L}^{\ast}\mathbb{L}^{\top}dt\label{eq:CountingMeasurement} .
		\end{dmath*}
	\end{dgroup}
	To satisfy the \emph{non-demolition} and \emph{self-nondemolition} properties, $\mathbf{E}_t$ and $\mathbf{N}_t$ have to satisfy the algebraic condition in  Theorem 3.1 of \cite{MuhammadF.Emzir2015}. We could then simplify \eqref{eq:Measurements} to
	\begin{subequations}
		\begin{align}
		d\mathbf{y}_t =& \mathbf{h}(\mathbf{x}_t)dt + \mathbf{L}(\mathbf{x}_t) d\mathbf{A}_t^\ast + \mathbf{L}(\mathbf{x}_t)^\ast d\mathbf{A}_t + \mathbf{N}_t d\bm{\alpha}_t, \label{eq:nonLinearMeasurementQSDE} \\
		\mathbf{h}(\mathbf{x}_t) =& \mathbf{E}^{\ast}_t\mathbb{L}_t+\mathbf{E}_t\mathbb{L}_t^{\ast}+\mathbf{N}_t \mathbf{l}_t, \\
		\mathbf{L}(\mathbf{x}_t) =& \left(\mathbf{E}_t+\mathbf{N}_t
		\bar{\mathbb{L}}\right)\mathbb{S}_t^{*},
		\end{align}\label{eq:MeasurementsSimplified}
	\end{subequations}
	where $\mathbf{N}_t \in \mathcal{O}^{m\times m}$, and
	\begin{dgroup*}
		\begin{dmath*}
			{\bar{\mathbb{L}}  = \begin{bmatrix}
					\mathbb{L}_{1,t} && \cdots  && 0\\
					0 && \mathbb{L}_{i,t}  && 0\\
					0 && \cdots  && \mathbb{L}_{m,t}\\
				\end{bmatrix}},
				{\mathbf{l}_t  =  \begin{bmatrix}
						\mathbb{L}_{1,t}^\ast\mathbb{L}_{1,t}\\
						\vdots\\
						\mathbb{L}_{m,t}^\ast\mathbb{L}_{m,t}\\
					\end{bmatrix}
				},\\
				{d\bm{\alpha}_t =  \text{diag}\left(\mathbb{S}_t^{*} d\bm{\Lambda} \mathbb{S}_t^{\top}\right).
				}
			\end{dmath*}
		\end{dgroup*}
		Now, we define the variance and covariance of the system's observables and measurements as follows,
		\begin{dgroup}
			\begin{dmath}
				\mathbf{Q}_t = \frac{1}{2 dt}\condExpect{\anticommutator{d\mathbf{x}_t}{d\mathbf{x}_t} }, \label{eq:Q}
			\end{dmath}
			\begin{dmath}
				\mathbf{R}_t = \frac{1}{2 dt} \condExpect{\anticommutator{d\mathbf{y}_t}{d\mathbf{y}_t} }, \label{eq:R}
			\end{dmath}
			\begin{dmath}
				\mathbf{S}_t = \frac{1}{2 dt} \condExpect{\anticommutator{d\mathbf{x}_t}{d\mathbf{y}_t} }. \label{eq:S}
			\end{dmath}
			\label{eq:Variances}
		\end{dgroup}
		The use of anti-commutator in \eqref{eq:Variances} above ensures that each element of the variance matrices is a self adjoint operator. In the remainder of the article, we require that $\mathbf{R}_t$ is a positive definite matrix of operators in $\mathscr{Y}_t$, i.e $\tau\left(\mathbf{R}\right)_{t,\omega} > 0, \; \forall t\geq 0, \omega \in \Omega$, where $\tau$ is the $\ast-$-isomporhism from $\mathscr{Y}_t$ to $\mathcal{L}^\infty\left(\Omega,\mathscr{F},\mu\right)$.
		\\
		We will use Proposition \ref{prp:OrdinaryDerivative} to calculate the partial derivative of the nonlinear quantum Markovian process generator $f(\mathbf{x})$. As we will show shortly, we could establish a mild condition on the quantum systems parameters $\mathbb{S},\mathbb{L},\mathbb{H}$ so that $f_i,h_j\in \mathcal{C}^1_{op}(\mathbb{R}) \forall i \leq n, j \leq m$. The following proposition gives a sufficient condition under which $f_i,h_j\in \mathcal{C}^1_{op}(\mathbb{R}) \forall i \leq n, j \leq m$ belongs to $ \mathcal{C}^1_{op}(\mathbb{R})$.
		\begin{proposition}
			In order to have $f_i,h_j\in \mathcal{C}^1_{op}(\mathbb{R}) \forall i \leq n, j \leq m$, it is sufficient to require that $\mathbb{H}, \mathbb{L} \in \mathcal{C}^1_{op}(\mathbb{R})$ and $\mathbf{E}_t,\mathbf{N}_t\in \mathcal{C}^1_{op}(\mathbb{R})$
		\end{proposition}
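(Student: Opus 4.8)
The plan is to show that the class $\mathcal{C}^{1}_{op}(\mathbb{R})$ is stable under the algebraic operations out of which $\mathbf{f}$ and $\mathbf{h}$ are built --- finite sums, multiplication by scalars, operator products, Hilbert space adjoints, and conjugation by the unitary cocycle $U_{t}$ --- and then to read the statement off the explicit expressions \eqref{eq:QuantumMarkovProcessGenerator} for $\mathcal{L}(\mathbf{x}_{t})$ and \eqref{eq:MeasurementsSimplified} for $\mathbf{h}(\mathbf{x}_{t})$.

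First I would set down the elementary building blocks, in this order. (i) The coordinate map $\mathbf{x}\mapsto\mathbf{x}$ is bounded and linear, hence Fr\'{e}chet differentiable on all of its domain with derivative the identity, so each entry of $\mathbf{x}_{t}$ belongs to $\mathcal{C}^{1}_{op}(\mathbb{R})$; constant maps (in particular the single valued scattering matrix $\mathbb{S}$) have zero derivative and so belong to $\mathcal{C}^{1}_{op}(\mathbb{R})$ too. (ii) Linearity of the Fr\'{e}chet derivative gives closure under finite linear combinations. (iii) For products, submultiplicativity of the $C^{\ast}$-norm makes $(G,H)\mapsto GH$ a bounded bilinear map, so the usual Leibniz/chain-rule argument shows that $g,h\in\mathcal{C}^{1}_{op}(\mathbb{R})$ implies $gh\in\mathcal{C}^{1}_{op}(\mathbb{R})$, with $D_{(gh,T)}S=\big(D_{(g,T)}S\big)\,h(T)+g(T)\,\big(D_{(h,T)}S\big)$. (iv) The adjoint $X\mapsto X^{\ast}$ is a bounded (real-linear) isometry, hence Fr\'{e}chet differentiable and equal to its own derivative, and $X\mapsto U_{t}^{\ast}XU_{t}$ is a bounded $\ast$-automorphism; composition with either preserves operator differentiability, and the latter also lets one pass between the time-$0$ and time-$t$ pictures. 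Matrix- and vector-valued maps are treated entrywise, so products such as $\big(\mathbb{L}_{t}^{\dagger}[\mathbf{x}_{t},\mathbb{L}_{t}]^{\top}\big)^{\top}$, $\mathbf{N}_{t}\mathbf{l}_{t}$, and $\bar{\mathbb{L}}\,\mathbb{S}_{t}^{\ast}$ merely recombine entries already in $\mathcal{C}^{1}_{op}(\mathbb{R})$.

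With these closure facts in hand the rest is bookkeeping. Each component of $\mathbf{f}(\mathbf{x}_{t})=\mathcal{L}(\mathbf{x}_{t})$ in \eqref{eq:QuantumMarkovProcessGenerator} is a finite sum of products of the entries of $\mathbf{x}_{t}$ (in $\mathcal{C}^{1}_{op}(\mathbb{R})$ by (i)) with the entries of $\mathbb{H}_{t}$, $\mathbb{L}_{t}$, $\mathbb{L}_{t}^{\ast}$ (in $\mathcal{C}^{1}_{op}(\mathbb{R})$ by hypothesis and (iv)) and the constant $\mathbb{S}$; hence $f_{i}\in\mathcal{C}^{1}_{op}(\mathbb{R})$ for every $i\le n$. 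Likewise, from \eqref{eq:MeasurementsSimplified}, $\mathbf{h}(\mathbf{x}_{t})=\mathbf{E}_{t}^{\ast}\mathbb{L}_{t}+\mathbf{E}_{t}\mathbb{L}_{t}^{\ast}+\mathbf{N}_{t}\mathbf{l}_{t}$ is assembled from $\mathbf{E}_{t},\mathbf{N}_{t}$ (in $\mathcal{C}^{1}_{op}(\mathbb{R})$ by hypothesis), $\mathbb{L}_{t}$, and the products $\mathbb{L}_{i,t}^{\ast}\mathbb{L}_{i,t}$ making up $\mathbf{l}_{t}$, so $h_{j}\in\mathcal{C}^{1}_{op}(\mathbb{R})$ for every $j\le m$.

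I do not expect a deep obstacle here; the proposition is essentially a stability-under-algebraic-operations statement. The point most worth handling carefully is the adjoint: $X\mapsto X^{\ast}$ is conjugate-linear, so one should be explicit that $\mathcal{C}^{1}_{op}$ is understood with real-linear Fr\'{e}chet derivatives --- which is already forced by the fact that the domain $\mathscr{S}^{\mathbb{R}}$ of self-adjoint operators is a real Banach space, cf. Definition \ref{def:FrechetDifferentiable} --- so that there is genuinely no conjugate-linearity obstruction and step (iv) is legitimate. One should also confirm that the spectra stay inside the interval on which $\mathbb{H},\mathbb{L},\mathbf{E}_{t},\mathbf{N}_{t}$ are assumed operator differentiable, which is automatic since that interval is $\mathbb{R}$, and, if $\mathbb{H}$ and $\mathbb{L}$ are allowed to depend on several components of $\mathbf{x}_{t}$, that the building blocks (i)--(iv) are applied in the obvious multivariable form of Definition \ref{def:FrechetDifferentiable}.
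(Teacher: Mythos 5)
Your proposal is correct and follows essentially the same route as the paper: the key step in both is the Leibniz-rule closure of $\mathcal{C}^1_{op}$ under products, $D_{(g_1g_2,T)}S = \bigl(D_{(g_1,T)}S\bigr)g_2(T)+g_1(T)\bigl(D_{(g_2,T)}S\bigr)$, after which the claim is read off from the explicit expressions for $\mathbf{f}$ and $\mathbf{h}$. You are somewhat more thorough than the paper, which proves only the product closure and declares the rest immediate, whereas you also make explicit the closure under sums, adjoints, and constants and flag the real-linearity point for $X\mapsto X^{\ast}$; these are useful clarifications but do not change the argument.
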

		\begin{proof}
			To prove this, we will first claim that for any two function, $g_1,g_2 \in \mathcal{C}^1_{op}(I)$, with $I\subseteq \mathbb{R}$ , then $g_1g_2 \in  \mathcal{C}^1_{op}(I)$. Without loss of the generality, we can fix any $S \in \mathscr{U}$, where $\norm{S}_{\mathscr{U}}=1$. Letting $\varepsilon \rightarrow 0$, the linear operator $D_{g_i,T}$ as in \eqref{eq:FrechectDifferential} is given by
			\begin{align*}
			D_{g_i,T} =& \lim\limits_{\varepsilon \rightarrow 0} \dfrac{g_i\left(T+\varepsilon S\right) - g_i\left(T\right)}{\varepsilon}.
			\end{align*}
			Therefore, we can write for $g_1g_2$,
			\begin{dgroup*}
				\begin{dmath*}
					D_{g_1g_2,T} = \lim\limits_{\varepsilon \rightarrow 0} \dfrac{g_1g_2\left(T+\varepsilon S\right) - g_1g_2\left(T\right)}{\varepsilon}
				\end{dmath*}
				\begin{dmath*}
					= \lim\limits_{\varepsilon \rightarrow 0} \dfrac{\left(g_1\left(T+\varepsilon S\right) - g_1\left(T\right)\right)g_2\left(T+\varepsilon S\right)}{\varepsilon} + \lim\limits_{\varepsilon \rightarrow 0} \dfrac{ g_1\left(T\right)\left(g_2\left(T+\varepsilon S\right) -  g_2\left(T\right)\right) }{\varepsilon}
				\end{dmath*}
				\begin{dmath*}
					= D_{g_1,T}g_2+g_1D_{g_2,T},
				\end{dmath*}
			\end{dgroup*}
			which shows that $g_1g_2\in \mathcal{C}^1_{op}(I)$. Since $I$ is arbitrary, the desired result follows immediately by applying the claim to $\mathbf{f}$ and $\mathbf{h}$.
		\end{proof}

		Since the system's observables belong to the commutant of the commutative von Neuman algebra $\mathscr{Y}_t$, $\mathscr{Y}_t'$, it is reasonable to approximate each element of $\mathbf{x}_t$, $x_{i,t}$, with a set of commutative estimates $\hat{x}_{i,t} \in \mathscr{Y}_t$. The difference $\tilde{x}_{i,t} = {x}_{i,t} - \hat{x}_{i,t}$ also belongs to the commutant algebra $\mathscr{Y}_t'$. Then for each $\tilde{x}_{i,t}$, both $\tilde{x}_{i,t}$ and $\mathscr{Y}_t$ generate a larger commutative algebra $\mathscr{Z}_{i,t}$ for which the Proposition \ref{prp:OrdinaryDerivative} can be invoked to infer the partial derivative of $\mathbf{f}\left(\mathbf{x}_t\right)$ with respect to $\mathbf{x}_t$ at $\hat{\mathbf{x}}_t$. We could not go further beyond the first order term of the Taylor series as in the classical nonlinear filter since the second order partial derivative of $\mathbf{f}\left(\mathbf{x}_t\right)$ will generally involve multiplication of two elements of $\tilde{x}_{i,t} \tilde{x}_{j,t}\in \mathscr{Y}_t'$ which generally do not commute with each other, and consequently a larger commutative algebra $\mathscr{Z}_{ij,t}$ generally does not exist.
		\\
		Our approximation begins by conjecturing that it is possible to construct a filter algorithm such that if the estimate $\hat{\mathbf{x}}_t$ is initially in the measurement algebra, it will always be inside it in the future,   $\hat{\mathbf{x}}_0 \in \mathscr{Y}_0 \Rightarrow \hat{\mathbf{x}}_t \in \mathscr{Y}_t$  $\forall t\geq 0$. In contrast to the formulation of quantum Kalman filter in\cite{wiseman2005,yamamoto2006robust}, since we will neglect the residual terms of Taylor series in our filter structure, $\hat{\mathbf{x}}_t$ is no longer optimal in the sense of the distance from $\mathbf{x}_t$  to a projection on $\mathscr{Y}_t$, i.e $\hat{\mathbf{x}}_t \neq \condExpect{\mathbf{x}_t} $. Nonetheless, each element of $\hat{\mathbf{x}}_t$ is commutative with respect to the other elements, as they are belong to the same commutative von Neumann algebra $\mathscr{Y}_t$.
		\\
		Since we require that $f_i \in  \mathcal{C}^1_{op}(\mathbb{R})$ and $\hat{\mathbf{x}}_t \in \mathscr{Y}_t$,  $\tilde{\mathbf{x}}_t \in \mathscr{Y}_t'$, 
		the condition in Proposition \ref{prp:OrdinaryDerivative} is satisfied. Consequently, we can write
		\begin{dgroup}
			\begin{dmath}
				\mathbf{f}(\mathbf{x}_t) = \mathbf{f}(\hat{\mathbf{x}}_t) + \left.\dfrac{\partial \mathbf{f}(\hat{\mathbf{x}}_t)}{\partial \mathbf{x}_t}\right|_{\mathbf{x}_t = \hat{\mathbf{x}}_t} \tilde{\mathbf{x}}_t + \mathbf{r}_f\left(\hat{\mathbf{x}_t}, \tilde{\mathbf{x}}_t\right), \label{eq:TaylorF}
			\end{dmath}
			\begin{dsuspend}
				with
			\end{dsuspend}
			\begin{dmath}
				\mathbf{F}(\hat{\mathbf{x}}_t) \equiv  \left.\dfrac{\partial \mathbf{f}(\mathbf{x}_t)}{\partial \mathbf{x}_t}\right|_{\mathbf{x}_t = \hat{\mathbf{x}}_t},
			\end{dmath}
			\begin{dsuspend}
				and
			\end{dsuspend}
			\begin{dmath}
				\mathbf{h}(\mathbf{x}_t) =  \mathbf{h}(\hat{\mathbf{x}}_t) + \left.\dfrac{\partial \mathbf{h}(\mathbf{x}_t)}{\partial \mathbf{x}_t}\right|_{\mathbf{x}_t = \hat{\mathbf{x}}_t} \tilde{\mathbf{x}}_t + \mathbf{r}_h\left(\hat{\mathbf{x}}_t, \tilde{\mathbf{x}}_t\right),
				\label{eq:TaylorH}
			\end{dmath}
			\begin{dmath}
				\mathbf{H}(\hat{\mathbf{x}}_t) \equiv \left.\dfrac{\partial \mathbf{h}(\mathbf{x}_t)}{\partial \mathbf{x}_t}\right|_{\mathbf{x}_t = \hat{\mathbf{x}}_t},
			\end{dmath}\label{eq:TaylorExpansion}
		\end{dgroup}
		where $\mathbf{r}_{(\cdot)}$ is the residual term of the Taylor expansion.
		The measurement error operator is given by
		\begin{dmath}
			d\mathbf{y}_t - d\hat{\mathbf{y}}_t = \left(\mathbf{H}(\hat{\mathbf{x}}_t) \tilde{\mathbf{x}}_t + \mathbf{r}_h\left(\hat{\mathbf{x}}_t, \tilde{\mathbf{x}}_t\right)  \right)dt + \mathbf{L}(\mathbf{x}_t) d\mathbf{A}_t^\ast + \mathbf{L}(\mathbf{x}_t)^\ast d\mathbf{A}_t + \mathbf{N}_t d\bm{\alpha}_t.
		\end{dmath}
		To construct a quantum EKF, we define a matrix of operators $\mathbf{P}_t \equiv \frac{1}{2}\condExpect{\anticommutator{\tilde{\mathbf{x}}_t}{\tilde{\mathbf{x}}_t}}$, that is the Hermitian variance of the estimation error.
		\\
		The quantum EKF problem is then given as follows. For a given open quantum system subjected to measurements with corresponding QSDEs given in \eqref{eq:QSDE} and \eqref{eq:MeasurementsSimplified} respectively, the following conditions are assumed :
		\begin{enumerate}
			\item Variances and cross correlation matrices $\mathbf{Q}_t,\mathbf{R}_t,\mathbf{S}_t$ are single valued.
			\item $\mathbf{R}_t$ is invertible.
			\item Initially $\hat{\mathbf{x}}_0 \in \mathscr{Y}_0$.
		\end{enumerate}
		We find a matrix $\mathbf{K}_t \in \mathscr{Y}_t$ corresponding to \eqref{eq:QuantumEKF} below, such that  $\hat{\mathbf{x}}_t \in \mathscr{Y}_t, \forall t\geq 0$ and $\mathbf{P}_t$ evolves according to the following Riccati differential equation upon neglecting the residual term of the Taylor series,
		\begin{dmath}
			d\hat{\mathbf{x}}_t = \mathbf{f}(\hat{\mathbf{x}}_t) dt + \mathbf{K}_t \left(d\mathbf{y}_t - d\hat{\mathbf{y}}_t\right), \label{eq:QuantumEKF}
		\end{dmath}
		\begin{dmath}
			\dfrac{d\mathbf{P}_t}{dt} =  \mathbf{F}(\hat{\mathbf{x}}_t)\mathbf{P}_t + \mathbf{P}_t\mathbf{F}(\hat{\mathbf{x}}_t)^\top + \mathbf{Q}_t  -  \left[\mathbf{P}_t\mathbf{H}(\hat{\mathbf{x}}_t)^\top + \mathbf{S}_t\right]\mathbf{R}_t^{-1}  \left[\mathbf{P}_t\mathbf{H}(\hat{\mathbf{x}}_t)^\top + \mathbf{S}_t\right]^\top.
			\label{eq:Riccati}
		\end{dmath}
		 We notice that since we neglect the non linear residual term in \eqref{eq:Riccati}, $\mathbf{P}_t$ is no longer interpreted as the variance of the estimation error, but rather a general positive definite matrix of operators which will be involved in the dynamics of the filter. Without loss of the generality, we can assume that $\mathbf{P}_0 \in \mathscr{Y}_0$. The following theorem shows the existence of a quantum EKF satisfying the condition above.
		\begin{theorem}
			Consider an open quantum system described by the QSDEs given in  \eqref{eq:QSDE} subjects to the measurements given in  \eqref{eq:Measurements}. Then, there exists a Kalman gain $\mathbf{K}_t \in \mathscr{Y}_t$,
			\begin{align}
			\mathbf{K}_t =& \left[\mathbf{P}_t\mathbf{H}(\hat{\mathbf{x}}_t)^\top + \mathbf{S}_t\right]\mathbf{R}_t^{-1} , \label{eq:KalmanGain}
			\end{align}
			such that if the quantum extended Kalman filter is given by \eqref{eq:QuantumEKF},	then $\hat{\mathbf{x}}_t \in \mathscr{Y}_t, \forall t\geq 0$ and $\mathbf{P}_t$ evolves according to \eqref{eq:Riccati} upon neglecting the residual term of the Taylor series in \eqref{eq:TaylorExpansion}.
		\end{theorem}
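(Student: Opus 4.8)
The plan is to prove the two claims of the theorem in turn: (a) that the recursion \eqref{eq:QuantumEKF} together with the Riccati flow \eqref{eq:Riccati} keeps $\hat{\mathbf{x}}_t$ inside the measurement algebra $\mathscr{Y}_t$, and (b) that the gain \eqref{eq:KalmanGain} is exactly the one which makes the Hermitian error variance $\mathbf{P}_t=\tfrac12\condExpect{\anticommutator{\tilde{\mathbf{x}}_t}{\tilde{\mathbf{x}}_t}}$ obey \eqref{eq:Riccati} once the Taylor residuals $\mathbf{r}_f,\mathbf{r}_h$ in \eqref{eq:TaylorF}--\eqref{eq:TaylorH} are discarded.

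For (a) I would run a Picard-type induction on the coupled system (filter SDE plus Riccati ODE). Assume $\hat{\mathbf{x}}_s\in\mathscr{Y}_s$ for $s\le t$. Then $\mathbf{f}(\hat{\mathbf{x}}_t)\in\mathscr{Y}_t$, because $\mathbf{f}=\mathcal{L}(\,\cdot\,)$ is assembled through \eqref{eq:QuantumMarkovProcessGenerator} out of the operator functions $\mathbb{H},\mathbb{L}\in\mathcal{C}^1_{op}(\mathbb{R})$ evaluated at the mutually commuting entries of $\hat{\mathbf{x}}_t$, and a von Neumann algebra is closed under such continuous-functional-calculus operations and products. By Proposition~\ref{prp:OrdinaryDerivative}, evaluated on the commutative algebra generated by $\hat{\mathbf{x}}_t$ and $\mathscr{Y}_t$ the Jacobians $\mathbf{F}(\hat{\mathbf{x}}_t),\mathbf{H}(\hat{\mathbf{x}}_t)$ collapse to ordinary derivatives of $\mathbf{f},\mathbf{h}$ at $\hat{\mathbf{x}}_t$, hence again lie in $\mathscr{Y}_t$; assumption (i) puts the single-valued $\mathbf{Q}_t,\mathbf{R}_t,\mathbf{S}_t$ in $\mathscr{Y}_t$, assumption (ii) gives $\mathbf{R}_t^{-1}\in\mathscr{Y}_t$, and with $\mathbf{P}_0\in\mathscr{Y}_0$ the whole right-hand side of \eqref{eq:Riccati} is $\mathscr{Y}_t$-valued, so $\mathbf{P}_t\in\mathscr{Y}_t$ and therefore $\mathbf{K}_t\in\mathscr{Y}_t$. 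Since $d\mathbf{y}_t$ is $\mathscr{Y}_{t+dt}$-measurable by construction, $d\hat{\mathbf{x}}_t=\mathbf{f}(\hat{\mathbf{x}}_t)\,dt+\mathbf{K}_t(d\mathbf{y}_t-d\hat{\mathbf{y}}_t)$ keeps $\hat{\mathbf{x}}$ adapted. To turn this into a genuine solution on the relevant time interval I would transport the whole system through the $\ast$-isomorphism $\tau$ of Theorem~\ref{thm:SpectralTheoremInfiniteDimension}: since every coefficient is now commutative, the $\mathbf{P}$-equation becomes an $\omega$-wise classical matrix Riccati ODE with $\tau(\mathbf{R}_t)>0$ (hence locally solvable and, under the usual Schur-complement positivity condition, positive semidefinite), and the $\hat{\mathbf{x}}$-equation a classical linear SDE driven by the innovation $d\mathbf{y}_t-d\hat{\mathbf{y}}_t$.

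For (b), set $\tilde{\mathbf{x}}_t=\mathbf{x}_t-\hat{\mathbf{x}}_t\in\mathscr{Y}_t'$ and subtract \eqref{eq:QuantumEKF} from \eqref{eq:QSDE}, substituting the linearizations \eqref{eq:TaylorF}, \eqref{eq:TaylorH}, \eqref{eq:nonLinearMeasurementQSDE} and dropping $\mathbf{r}_f,\mathbf{r}_h$:
\begin{align*}
d\tilde{\mathbf{x}}_t &= \bigl(\mathbf{F}(\hat{\mathbf{x}}_t)-\mathbf{K}_t\mathbf{H}(\hat{\mathbf{x}}_t)\bigr)\tilde{\mathbf{x}}_t\,dt + \mathbf{G}(\mathbf{x}_t)\,d\mathbf{A}_t^\ast + \mathbf{G}(\mathbf{x}_t)^\ast d\mathbf{A}_t \\
&\quad - \mathbf{K}_t\bigl(\mathbf{L}(\mathbf{x}_t)\,d\mathbf{A}_t^\ast + \mathbf{L}(\mathbf{x}_t)^\ast d\mathbf{A}_t + \mathbf{N}_t\,d\bm{\alpha}_t\bigr).
\end{align*}
Applying the quantum It\^o product rule to $\tilde{\mathbf{x}}_t\tilde{\mathbf{x}}_t^\top$ (using that the forward increments commute with $\mathbf{x}_t$ together with the vacuum It\^o table), symmetrizing to form $\tfrac12\anticommutator{\tilde{\mathbf{x}}_t}{\tilde{\mathbf{x}}_t}$, and taking $\condExpect{\cdot}$, the drift yields $(\mathbf{F}(\hat{\mathbf{x}}_t)-\mathbf{K}_t\mathbf{H}(\hat{\mathbf{x}}_t))\mathbf{P}_t+\mathbf{P}_t(\mathbf{F}(\hat{\mathbf{x}}_t)-\mathbf{K}_t\mathbf{H}(\hat{\mathbf{x}}_t))^\top$, the cross terms between $\tilde{\mathbf{x}}_t$ and the future increments vanish by adaptedness and the vacuum state, and the It\^o corrections reproduce precisely the covariances of \eqref{eq:Variances}: $\mathbf{Q}_t$ from the $\mathbf{G}$-terms, $\mathbf{K}_t\mathbf{R}_t\mathbf{K}_t^\top$ from the $\mathbf{K}_t(\cdots)$-terms, and $-\mathbf{K}_t\mathbf{S}_t^\top-\mathbf{S}_t\mathbf{K}_t^\top$ from the cross products — the single-valuedness in assumption (i) being what lets these be pulled out of the conditional expectation. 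Collecting and writing $\mathbf{M}_t\equiv\mathbf{P}_t\mathbf{H}(\hat{\mathbf{x}}_t)^\top+\mathbf{S}_t$,
\begin{align*}
\frac{d\mathbf{P}_t}{dt} &= \mathbf{F}(\hat{\mathbf{x}}_t)\mathbf{P}_t+\mathbf{P}_t\mathbf{F}(\hat{\mathbf{x}}_t)^\top+\mathbf{Q}_t+\mathbf{K}_t\mathbf{R}_t\mathbf{K}_t^\top-\mathbf{K}_t\mathbf{M}_t^\top-\mathbf{M}_t\mathbf{K}_t^\top \\
&= \mathbf{F}(\hat{\mathbf{x}}_t)\mathbf{P}_t+\mathbf{P}_t\mathbf{F}(\hat{\mathbf{x}}_t)^\top+\mathbf{Q}_t-\mathbf{M}_t\mathbf{R}_t^{-1}\mathbf{M}_t^\top+\bigl(\mathbf{K}_t-\mathbf{M}_t\mathbf{R}_t^{-1}\bigr)\mathbf{R}_t\bigl(\mathbf{K}_t-\mathbf{M}_t\mathbf{R}_t^{-1}\bigr)^\top,
\end{align*}
so the choice $\mathbf{K}_t=\mathbf{M}_t\mathbf{R}_t^{-1}$, that is \eqref{eq:KalmanGain}, annihilates the last term and leaves exactly \eqref{eq:Riccati}.

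I expect the main obstacle to be the bookkeeping in the non-commutative It\^o step rather than the completing-the-square algebra: one must check that $\tilde{\mathbf{x}}_t$ genuinely remains in the commutant $\mathscr{Y}_t'$ (so that $\condExpect{\cdot}$ applies to the products $\tilde x_{i,t}\tilde x_{j,t}$, which is the whole reason the symmetrized definition of $\mathbf{P}_t$ is used), that all cross terms between $\tilde{\mathbf{x}}_t$ and the forward field increments vanish under $\condExpect{\cdot}$ by the non-demolition/adaptedness of the model, and that the It\^o corrections arising from the \emph{correlated} process and measurement noises assemble into exactly $\mathbf{Q}_t,\mathbf{R}_t,\mathbf{S}_t$ of \eqref{eq:Variances} and not into some reordered variant — this is precisely where the non-demolition conditions on $\mathbf{E}_t,\mathbf{N}_t$ and the single-valuedness hypothesis (i) do the work. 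A secondary and more routine point is the existence of the coupled filter/Riccati solution itself, which, as noted above, is obtained by pulling everything back through Theorem~\ref{thm:SpectralTheoremInfiniteDimension} to a classical random Riccati system.
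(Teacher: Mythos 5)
Your proposal is correct and follows essentially the same route as the paper: an adaptedness/causality argument to keep $\hat{\mathbf{x}}_t$ and hence $\mathbf{K}_t$ in $\mathscr{Y}_t$, followed by the error dynamics, the quantum It\^{o} product rule, and the covariance definitions \eqref{eq:Variances} to reach the pre-optimized Riccati equation. The only cosmetic difference is that you complete the square in $\mathbf{K}_t$ (which additionally exhibits \eqref{eq:KalmanGain} as the minimizing gain), whereas the paper simply substitutes \eqref{eq:KalmanGain} into \eqref{eq:RiccatiNotOptimal}; the algebra is equivalent.
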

		\begin{proof}
			To establish the first part of the theorem, we need to show that there exists $\mathbf{K}_t \in \mathscr{Y}_t$, such that $\hat{\mathbf{x}}_t \in \mathscr{Y}_t, \forall t\geq 0$.
			The condition that $\hat{\mathbf{x}}_t \in \mathscr{Y}_t, \forall t \geq 0$ follows from the causality of the quantum EKF given in  \eqref{eq:QuantumEKF}. But to make it clear, let $\mathbf{K}_t \in \mathscr{Y}_t$. We observe that the differential equation of the quantum EKF can be written in integral form as below
			\begin{dgroup*}
				\begin{dmath*}
					\int_{0}^{t} d\hat{\mathbf{x}}_s = \int_{0}^{t}\left[\mathbf{f}(\hat{\mathbf{x}}_s) - \mathbf{K}_s \mathbf{h}\left(\hat{\mathbf{x}}_s\right)\right]ds +  \int_{0}^{t} \mathbf{K}_s d\mathbf{y}_s,
				\end{dmath*}

			\end{dgroup*}
			where the first term of the right hand side involves Reimann-Stieltjes integration. Since the integral above is defined then, there exists a partition of times $0 = t_0 \leq t_1 \leq \dots t_N = t$, $\Delta t_i \equiv t_i - t_{i-1}$,
			such that we can define an infinite sum, by taking $ \lim\limits_{N\rightarrow \infty}\sup_{i} \Delta t_i = 0$, regarding the second integration in $d\mathbf{y}_s$ in the It\^{o} sense,
			\begin{dgroup*}
				\begin{dmath*}
					\hat{\mathbf{x}}_t =  \hat{\mathbf{x}}_0 + \lim\limits_{N\rightarrow \infty}  \sum_{i=0}^{N-1} \left[\mathbf{f}(\hat{\mathbf{x}}_{t_i}) - \mathbf{K}_{t_i} \mathbf{h}\left(\hat{\mathbf{x}}_{t_i}\right)\right]  \Delta t_{i+1} + \lim\limits_{N\rightarrow \infty}  \sum_{i=0}^{N-1} \mathbf{K}_{t_i} \left(\mathbf{y}_{t_{i+1}} - \mathbf{y}_{t_i}\right).
				\end{dmath*}
			\end{dgroup*}
			Hence, it is clear that if $\hat{\mathbf{x}}_0,\in \mathscr{Y}_0$, then $\hat{\mathbf{x}}_t \in \mathscr{Y}_t, \forall t\geq 0$ , since it is in the span of $\mathbf{y}_{t_i}$ and $\hat{\mathbf{x}}_{t_i}$ $\forall t_i < t$.
			\\
			For the second part of the theorem, let us now expand $d\hat{\mathbf{x}}_t$ according to the Taylor expansion in  \eqref{eq:TaylorF} and  \eqref{eq:TaylorH}. Doing that leads us to the following equation,
			\begin{dmath*}
				d\hat{\mathbf{x}}_t = \mathbf{f}(\hat{\mathbf{x}}_t) dt + \mathbf{K}_t \left(\left(\mathbf{H}(\hat{\mathbf{x}}_t) \tilde{\mathbf{x}}_t + \mathbf{r}_h\left(\hat{\mathbf{x}}_t, \tilde{\mathbf{x}}_t\right)  \right)dt + \mathbf{L}(\mathbf{x}_t)  d\mathbf{A}_t^\ast + \mathbf{L}(\mathbf{x}_t)^\ast d\mathbf{A}_t + \mathbf{N}_t d\bm{\alpha}_t\right).
			\end{dmath*}
			From the equation above, the estimation error can be given by the following equations,
			\begin{dgroup}

				\begin{dmath}
					d\tilde{\mathbf{x}}_t = \left(\mathbf{F}(\hat{\mathbf{x}}_t) - \mathbf{K}_t\mathbf{H}(\hat{\mathbf{x}}_t)\right)\tilde{\mathbf{x}}_t dt + \mathbf{r} dt + \left(\mathbf{G}(\mathbf{x}_t)^\ast - \mathbf{K}_t \mathbf{L}(\mathbf{x}_t)^\ast\right)d\mathbf{A}_t + \left(\mathbf{G}(\mathbf{x}_t) - \mathbf{K}_t \mathbf{L}(\mathbf{x}_t) \right)d\mathbf{A}_t^\ast - \mathbf{K}_t \mathbf{N}_t d\bm{\alpha}_t ,\label{eq:errorDynamic}
				\end{dmath}
				\begin{dmath}
					\mathbf{r}_t\left(\hat{\mathbf{x}}_t,\tilde{\mathbf{x}}_t\right) =  \mathbf{r}_f\left(\hat{\mathbf{x}_t}, \tilde{\mathbf{x}}_t\right) -  \mathbf{K}_t\mathbf{r}_h\left(\hat{\mathbf{x}_t}, \tilde{\mathbf{x}}_t\right). \label{eq:residual}
				\end{dmath}
			\end{dgroup}
			Using the quantum It\^{o} multiplication rule\cite{parthasarathy2012}, we can obtain the estimation variance dynamics as follow
			\begin{dgroup}
				\begin{dmath}
					\dfrac{d\mathbf{P}_t}{dt} =  \left(\mathbf{F}(\hat{\mathbf{x}}_t) - \mathbf{K}_t\mathbf{H}(\hat{\mathbf{x}}_t)\right) \mathbf{P}_t+ \mathbf{P}_t\left(\mathbf{F}(\hat{\mathbf{x}}_t) - \mathbf{K}_t\mathbf{H}(\hat{\mathbf{x}}_t)\right)^\top + \bm{\Theta}(\hat{\mathbf{x}}_t,\tilde{\mathbf{x}}_t)
					+ \dfrac{\hbar}{2}  \condExpect{ \anticommutator{d\tilde{\mathbf{x}}_t}{d\tilde{\mathbf{x}}_t}}, \label{eq:RiccatiNotComplete}
				\end{dmath}
				\begin{dsuspend}
					with,
				\end{dsuspend}
				\begin{dmath}
					\bm{\Theta}(\hat{\mathbf{x}}_t,\tilde{\mathbf{x}}_t) \equiv \dfrac{1}{2} \condExpect{\anticommutator{\mathbf{r}_t\left(\hat{\mathbf{x}}_t,\tilde{\mathbf{x}}_t\right)}{\tilde{\mathbf{x}}_t}}.
				\end{dmath}

			\end{dgroup}
			Now, from  \eqref{eq:RiccatiNotComplete}, by the definition of variances given in  \eqref{eq:Q},\eqref{eq:R} and \eqref{eq:S}, and that $\mathbf{K}_t \in \mathscr{Y}_t$, we obtain
			\begin{dmath}
				\dfrac{d\mathbf{P}_t}{dt} =  \mathbf{F}(\hat{\mathbf{x}}_t)\mathbf{P}_t + \mathbf{P}_t\mathbf{F}(\hat{\mathbf{x}}_t)^\top + \mathbf{Q}_t + \mathbf{K}_t \mathbf{R}_t {\mathbf{K}_t}^\top - \left(\mathbf{K}_t \left(\mathbf{H}(\hat{\mathbf{x}}_t)\mathbf{P}_t + \mathbf{S}_t^\top\right) + {\left(\mathbf{H}(\hat{\mathbf{x}}_t)\mathbf{P}_t + \mathbf{S}_t^\top\right)}^\top \mathbf{K}_t^\top \right) + \bm{\Theta}(\hat{\mathbf{x}}_t,\tilde{\mathbf{x}}_t).
				\label{eq:RiccatiNotOptimal}
			\end{dmath}
			Since $\mathbf{P}_t$,$\mathbf{R}_t$,$\mathbf{S}_t$, and $\hat{\mathbf{x}}_t$ belong to $\mathscr{Y}_t$, then the Kalman gain given in \eqref{eq:KalmanGain} also belongs to $\mathscr{Y}_t$. Substituting \eqref{eq:KalmanGain} to \eqref{eq:RiccatiNotOptimal}, and ignoring the nonlinearity $\bm{\Theta}$ we obtain the desired result as in \eqref{eq:Riccati}.
		\end{proof}
		Before we go further, we would like to address the implementability of the quantum EKF in \eqref{eq:QuantumEKF}. In practical applications, we would often be given initial values of $\hat{\mathbf{x}}_t$ and $\mathbf{P}_t$, rather than a complete description of a set of operators in an underlying Hilbert space. Furthermore, in many cases, the interest is only to estimate the mean value and covariance of $\mathbf{x}_t$.
		\\
		Given such conditions, we would like to see the relation between the quantum EKF and the classical EKF. From  \eqref{eq:QuantumEKF}, we have the evolution of $\hat{\mathbf{x}}_t \in \mathscr{Y}_t$, which can be written as
		\begin{align}
		d\hat{\mathbf{x}}_t =& \left[\mathbf{f}(\hat{\mathbf{x}}_t) - \mathbf{K}_t \mathbf{h}\left(\hat{\mathbf{x}}_t\right) \right] dt + \mathbf{K}_t d\mathbf{y}_t. \label{eq:QuantumEKF2}
		\end{align}
		By definition $d\mathbf{y}_t \in \mathscr{Y}_t$ and hence by Theorem \ref{thm:SpectralTheoremInfiniteDimension} there exist a $\ast-$ isomorphism $\tau$ from $\mathscr{Y}_t$ to $\mathcal{L}^\infty\left(\Omega,\mathscr{F},\mu\right)$. From now on, we write $\tau\left(\cdot\right)_{t,\omega} \in \mathbb{C}$ as $(\cdot)_{t,\omega}$.  Then we write $\forall \omega \in \Omega,\forall t \geq 0$,
		\begin{dmath}
			d\hat{\mathbf{x}}_{t,\omega} =\left[\mathbf{f}(\hat{\mathbf{x}}_{t,\omega}) - \mathbf{K}_{t,\omega} \mathbf{h} \left( \hat{\mathbf{x}}_{t,\omega}\right)  \right] dt +\mathbf{K}_{t,\omega} d\mathbf{y}_{t,\omega} \; \; ,
			\label{eq:classicalEKF}
		\end{dmath}
		which is an ordinary stochastic differential equation. Using the same  $\ast-$ isomorphism $\tau$, the dynamics of the estimation error variance, (see the Riccati equations in \eqref{eq:Riccati}) can also be transformed into classical Riccati differential equations. This transformation in turn makes the quantum EKF implementable as a recursive filter in a digital signal processor.
		\begin{remark}
			It is worth emphasizing that the essential difference between the quantum EKF and the classical EKF is the fact that the set of the system's observables $\mathbf{x}_t$ do not belong to a commutative von Neumann algebra. In addition, the dynamics of $\mathbf{x}_t$ generally consist of non-commutating operators and hence there is no $*-$ isomorphism that can transform $\mathbf{x}_t$ and its dynamics into a measurable function on a classical probability space $\left(\Omega,\mathscr{F},\mu\right)$. Otherwise, if this is the case, then the quantum EKF problem is reduced to the classical EKF problem.
		\end{remark}
		For photon counting measurements,  a Poisson processes can be written as a sum of two independent quantum Gaussian noises as in  \eqref{eq:CountingMeasurement}. This enables us to treat a filtering problem for both diffusive and jump measurements simultaneously. This is unique to quantum stochastic filtering, since in a classical settings, one can never have a transformation from a jump random process into a continuous processes, \cite{bouten2007introduction}.

		\subsection{One Step Computational Complexity of the SME and the quantum EKF} \label{sec:Complexity}
		Here we present a comparison of the computational complexity of the SME and the quantum EKF. The SME for $m$ output measurement channels is given by \cite{MuhammadF.Emzir2015},
		\begin{dmath}
			d\rho_t =\left[-i\left[\mathbb{H},\rho_t\right] + \mathbb{L}^{\top}\rho_t\mathbb{L}^{\ast} - \frac{1}{2}\mathbb{L}^{\dagger}\mathbb{L}\rho_t - \frac{1}{2}\rho_t\mathbb{L}^{\dagger}\mathbb{L}\right]dt + \zeta_{\rho}^{\top}\bm{\Gamma}^{-\top}d\mathbf{W}. \label{eq:SME}
		\end{dmath}
		In this equation, $\rho_t$ is the system's conditional density operator and $d\mathbf{W}$ is the error vector between the expected value and the measurement. The weighting function $\zeta_{\rho}^{\top}\bm{\Gamma}^{-\top}$ relates the contribution of each measurement to the total increment of the conditional density operator. Now, suppose we have $N_m$ subsystems, and under truncation let each system's Hilbert space dimension be $N_s$.  
		The computational complexity of \eqref{eq:SME}, will be of order $O\left(N_m (N_s^{N_m})^3 \right) = O\left(N_m N_s^{3N_m} \right)$. Now, suppose we want to estimate $n$ observables of the system. Then after propagating the SME, we need to calculate $\trace{\rho_t x_i}, i \leq n$, which is on the order of $O\left(n (N_s^{N_m})^3 \right)$. Consequently to propagate $n$ observables from each subsystem from SME, we will need a calculation effort  $ \sim \left(\left(\zeta_1 n+\zeta_2 N_m\right) N_s^{3 N_m} \right)$, for some $\zeta_1,\zeta_2 >1$.
		\\
		In contrast, after transforming the Riccati and quantum EKF equations to the standard SDE, the computational complexity of the quantum EKF is the same as that of the classical EKF. The EKF computational effort only depends on $n,m$ and $N_m$, and the complexity of evaluating  $\mathbf{f}$, and the Jacobian matrices $\mathbf{F}$ and $\mathbf{H}$ .
		In a single time step, one has to propagate the Riccati equation in \eqref{eq:Riccati}, which has the complexity  $O\left( n^{3 N_m} \right)$, calculation of the Jacobian matrices $\mathbf{F}$ and $\mathbf{H}$ which could vary depend on the type of the function involved, plus solving the quantum EKF \eqref{eq:QuantumEKF}. Evaluating \eqref{eq:QuantumEKF} involves the calculation of $\mathbf{f}$ which also can vary, and matrix-vector calculation in $\mathbf{K}_t \left(d\mathbf{y}-\hat{d\mathbf{y}}\right)$, which is $O(m^2 (n^{N_m}))$.

		\subsection{Convergence analysis}\label{sec:ConvAnal}
		Here we will establish a stochastic convergence condition for the quantum EKF. The approach we pursue here is closely related to the stochastic convergence analysis of the classical extended Kalman filter and deterministic nonlinear observer design \cite{Reif2000,reif1998ekf,reif1999nonlinear}.  In essence, the main difference between the classical EKF stochastic convergence proof and what we present here is the use of the semi-norm in \eqref{eq:SemiNorm} in the place of Euclidean norm, and the quantum Markovian process generator. Moreover, due to the coupling nature of the measurement and process noise in every open quantum system, we need to assume the positive definiteness of $\mathbf{Q}_t-\mathbf{S}_t\mathbf{R}_t^{-1}\mathbf{S}_t^\top$.
		The following assumptions, definitions, and lemmas will serve as a foundation for the local convergence condition of the quantum EKF estimation errors.

		\begin{assumption}
			With the definition of semi-norm given in \eqref{eq:SemiNorm}, together with our previous assumption that $\mathbf{R}_t$ is positive definite matrix of operators in $\mathscr{Y}_t$, we also need the following assumptions in our analysis.
			\begin{enumerate}[label={\bf A\Roman*}]

				\item  Let $\mathscr{E}_f,\mathscr{E}_h \subset \mathscr{Y}_t$. For $\mathscr{E}_f = \left\{\hat{\mathbf{x}}_t: \norm{\mathbf{x}_t - \hat{\mathbf{x}}_t} \leq \epsilon_f \right\}$, $\mathscr{E}_h  = \left\{\hat{\mathbf{x}}_t: \norm{\mathbf{x}_t - \hat{\mathbf{x}}_t} \leq \epsilon_h \right\}$, $\epsilon_f,\epsilon_h > 0$, $\exists r_h,r_f > 0$, such that the residual term of $\mathbf{r}_f$ and $\mathbf{r}_h$ satisfy the following rate constraint
				\begin{align*}
				\norm{\mathbf{r}_f}  < r_f \norm{\tilde{\mathbf{x}}_t}^2, \forall \hat{\mathbf{x}}_t \in  \mathscr{E}_f, \\
				\norm{\mathbf{r}_h}  < r_h \norm{\tilde{\mathbf{x}}_t}^2 , \forall \hat{\mathbf{x}}_t \in  \mathscr{E}_h,
				\end{align*}\label{asm:ResidualAsm}

				\item The operator valued matrix $\mathbf{H}$, and the cross correlation $\mathbf{S}$ are bounded from above,
				\begin{align}
				\norm{\mathbf{H}\left(\mathbf{x}\right)} \leq \bar{h},\\
				\norm{\mathbf{S}} \leq \bar{s}.
				\end{align}

				This assumption follows from the fact that we restrict our observable to the von Neumann algebra $\mathscr{N} = \mathcal{B}$. Moreover, since $\mathbf{h} \in \mathcal{C}^1_{op} \subset \mathcal{C}^1$, $\mathbf{H}$ is bounded.\label{asm:Hbounded}

				\item  $\mathbf{P}_t$ is always greater than zero and bounded,  i.e,
				\begin{align}
				0 < \underline{p}\mathbf{I} \leq \mathbf{P}_t \leq \bar{p}\mathbf{I} \;, \forall t\geq 0.
				\end{align}
				If we consider $\mathbf{P}_t$ as an estimation error covariance, this condition will generally be valid in quantum mechanical system estimation since the Heisenberg inequality dictates that $0 < \underline{p}\mathbf{I}$. Furthermore, it seems also natural to consider $\mathbf{P}_t$ to be bounded from above.\label{asm:PAsm}

				\item
				\begin{align}
				\mathbf{Q}_t - \mathbf{S}_t\mathbf{R}_t^{-1}\mathbf{S}_t^\top \geq 0, \;\forall t \geq 0.
				\end{align}
				As we will encounter later on in Lemma \ref{lem:Dissipativity}, to satisfy the dissipative inequality, we require that $\mathbf{Q}_t-\mathbf{S}_t\mathbf{R}_t^{-1}\mathbf{S}_t^\top > 0$ is always satisfied. However, as we will show in the Proposition \ref{prp:Q_SRS_inequality}, we can only obtain a sufficient condition that $m \mathbf{Q}_t-\mathbf{S}_t\mathbf{R}_t^{-1}\mathbf{S}_t^\top \geq 0$. In the next section, we will show how to deal with this restriction.\label{asm:QRSAsm}
			\end{enumerate}
			\label{asm:Assumptions}
		\end{assumption}

		We will now show the inequality,  $m \mathbf{Q}_t-\mathbf{S}_t\mathbf{R}_t^{-1}\mathbf{S}_t^\top \geq 0$, by using the following Lemma,
		\begin{lemma}\label{lem:SelfNonDemolition}
			If a set of $m$ measurement $\mathbf{y}_t$ satisfying self-non demolition property, then $\commutator{d y_{i,t}}{ d y_{j,t}} = 0$ $\forall i,j \leq m, t\geq 0$.
		\end{lemma}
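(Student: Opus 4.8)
The plan is to deduce the commutation of the It\^{o} increments directly from the self-non-demolition hypothesis, read in its operator-algebraic form: the whole family of measurement observables is mutually commuting, $\commutator{y_{i,s}}{y_{j,t}} = 0$ for all $s,t\geq 0$ and all $i,j\leq m$, equivalently every $\mathscr{Y}_t$ is commutative and the $\mathscr{Y}_t$ form an increasing family. This is exactly what the algebraic constraints on $\mathbf{E}_t$ and $\mathbf{N}_t$ from Theorem 3.1 of \cite{MuhammadF.Emzir2015} are engineered to guarantee, and it is worth noting that only self-non-demolition (measurement against measurement) is used here, not the non-demolition of the measurement against the system observables.

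First I would work with honest operator differences $dy_{i,t} = y_{i,t+\delta t} - y_{i,t}$, just as in the integral form used in the preceding existence proof, so that both $dy_{i,t}$ and $dy_{j,t}$ lie in $\mathscr{Y}_{t+\delta t}$; since that algebra is commutative the two increments commute, and passing to the limit $\delta t\downarrow 0$ preserves this. The identity I would actually display is the telescoping expansion
\[
\commutator{dy_{i,t}}{dy_{j,t}} = \commutator{y_{i,t+\delta t}}{y_{j,t+\delta t}} - \commutator{y_{i,t+\delta t}}{y_{j,t}} - \commutator{y_{i,t}}{y_{j,t+\delta t}} + \commutator{y_{i,t}}{y_{j,t}},
\]
in which each of the four commutators on the right vanishes by self-non-demolition.

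The one point that genuinely requires care is the quantum-It\^{o} reading of the product $dy_{i,t}\,dy_{j,t}$: because the fundamental noises satisfy $d\mathbf{A}_t\,d\mathbf{A}_t^\ast = \mathbf{I}\,dt$ while $d\mathbf{A}_t^\ast\,d\mathbf{A}_t = 0$ (and analogously for the conservation process), the naive difference-of-operators picture must be reconciled with the It\^{o} table. I would handle this by substituting the explicit form \eqref{eq:nonLinearMeasurementQSDE} into $dy_{i,t}\,dy_{j,t} - dy_{j,t}\,dy_{i,t}$ and collecting the $O(dt)$ contributions with the quantum It\^{o} rules; the claim is that the coefficients $\mathbf{L}(\mathbf{x}_t)$ and $\mathbf{N}_t$, built from $\mathbf{E}_t,\mathbf{N}_t,\mathbb{S}_t,\mathbb{L}_t$, enter these $O(dt)$ terms symmetrically under $i\leftrightarrow j$ precisely because they satisfy the self-non-demolition algebraic constraints. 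This bookkeeping of the creation/annihilation/conservation cross terms, together with invoking exactly the right combination of the conditions of \cite[Thm.~3.1]{MuhammadF.Emzir2015}, is where I expect the real work to be; if one is content to take self-non-demolition in its operator-algebraic form as given, the lemma is essentially immediate from the telescoping identity. The payoff downstream is that it lets us replace the symmetrized products in the definitions \eqref{eq:R} and \eqref{eq:S} by ordinary products, which is what the subsequent estimate $m\mathbf{Q}_t - \mathbf{S}_t\mathbf{R}_t^{-1}\mathbf{S}_t^\top \geq 0$ relies on.
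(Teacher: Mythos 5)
Your proposal is correct and follows essentially the same route as the paper: your telescoping identity is just a rearrangement of the paper's expansion $d\commutator{y_{i,t}}{y_{j,t}} = 0 = \commutator{dy_{i,t}}{y_{j,t}} + \commutator{y_{i,t}}{dy_{j,t}} + \commutator{dy_{i,t}}{dy_{j,t}}$, the only cosmetic difference being that the paper kills the two cross terms by noting that $dy_{i,t}$ is a combination of forward-pointing fundamental increments in $\mathcal{B}_{\left[t,t+\delta t\right]}$ (hence commuting with $y_{j,t}$), whereas you invoke two-time commutativity of the output family directly. The explicit It\^{o}-table bookkeeping with $\mathbf{E}_t$, $\mathbf{N}_t$ that you flag as ``the real work'' is not needed and does not appear in the paper --- as you yourself observe, the lemma is immediate once self-non-demolition is read in its operator-algebraic form.
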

		\begin{proof}
			To see this, we first examine that if $\mathbf{y}_t$ satisfying self-non demolition property, we could write for $i,j \leq m$
			\begin{dgroup*}
				\begin{dmath*}
					d \commutator{y_{i,t}}{ y_{j,t}} = 0 = \commutator{d y_{i,t}}{ y_{j,t}} +\commutator{y_{i,t}}{d y_{j,t}}
					+\commutator{d y_{i,t}}{d y_{j,t}}
				\end{dmath*},
			\end{dgroup*}
			but
			$\commutator{d y_{i,t}}{y_{j,t}} = \commutator{y_{i,t}}{d y_{j,t}} = 0$. This can be seen from $d y_{i,t}  = \sum_{k} U_t^\ast\left(z_k\otimes \mathbb{1}\right)U_t db_k$,
			where $db_k \in \left\{d\mathbf{A}_i,d\mathbf{A}^\ast_i,d\bm{\alpha}_i,dt\right\}, i\leq m$,
			and $z \in \mathcal{B}\left(\mathscr{H}_s\right)$, whilst $y_{j,t}  = U_t^\ast\left(\mathbb{1} \otimes y \right)U_t $, where $y \in \mathcal{B}\left(\Gamma(\mathscr{h})_t\}\right)$. Hence $\commutator{d y_{i,t}}{d y_{j,t}} = 0$.
		\end{proof}

		\begin{proposition} \label{prp:Q_SRS_inequality}
			For an open quantum system with QSDEs and measurement given in \eqref{eq:QSDE} and \eqref{eq:Measurements} respectively, the covariance matrices $\mathbf{Q}_t,\mathbf{R}_t,\mathbf{S}_t$ satisfy the following inequality
			\begin{align}
			m \mathbf{Q}_t-\mathbf{S}_t\mathbf{R}_t^{-1}\mathbf{S}_t^\top \geq 0, \; \forall t \geq 0.
			\end{align}
			\begin{proof}
				Let $x_{i,t}$ and $y_{j,t}$ denote the $i$ and $j$ th elements of $\mathbf{x}_t$ and $\mathbf{y}_t$ respectively.  From Lemma \ref{lem:SelfNonDemolition}, we have $\commutator{d y_{i,t}}{ d y_{j,t}} = 0$ $\forall i,j \leq m, t\geq 0$.
				\\
				Let $\bar{\mathbf{S}}_t = d\mathbf{x}_t d\mathbf{y}_t^\top$, $\underline{\mathbf{S}}_t  = d\mathbf{y}_t d\mathbf{x}_t^\top$, $\bar{\mathbf{Q}}_t = \underline{\mathbf{Q}}_t = d\mathbf{x}_t d\mathbf{x}_t^\top$	, and  $\bar{\mathbf{R}}_t = \underline{\mathbf{R}}_t^\top = d\mathbf{y}_t d\mathbf{y}_t^\top$.
				\\
				We first claim that there exists a symmetric matrix of operators $\mathbf{M}_t \in \mathscr{Y}_t' $ such that $\anticommutator{d\mathbf{x}_t}{d\mathbf{x}_t}$ as below,
				\begin{dmath}
					\anticommutator{d\mathbf{x}_t}{d\mathbf{x}_t} = \bar{\mathbf{S}}_t\mathbf{M}_t\underline{\mathbf{S}}_t + \left(\bar{\mathbf{S}}_t\mathbf{M}_t\underline{\mathbf{S}}_t\right)^\top. \label{eq:SMS}
				\end{dmath}
				To see this, we can write
				\begin{dmath*}
					\anticommutator{d\mathbf{x}_t}{d\mathbf{x}_t}_{i,j}
					= dx_{i,t} \left[\sum_{k,k' = 1}^{m,m} dy_{k,t}\mathbf{M}_{k,k'} dy_{k',t}\right]dx_{j,t} + dx_{j,t} \left[\sum_{k,k' = 1}^{m,m} dy_{k',t}\mathbf{M}_{k,k'} dy_{k,t} \right]dx_{i,t},
				\end{dmath*}
				by requiring,
				\begin{dmath}
					\sum\limits_{i,j = 1}^{m,m} dy_{i,t} M_{i,j} dy_{j,t} = \mathbb{1}. \label{eq:mElement}
				\end{dmath}
				From  \eqref{eq:mElement}, since  $\mathbf{M}_t \in \mathscr{Y}_t' $, we have,
				\begin{dmath*}
					{\sum\limits_{i,j = 1}^{m,m} dy_{i,t} M_{i,j} dy_{j,t} = \mathbb{1} = \sum\limits_{i,j = 1}^{m,m} dy_{i,t} dy_{j,t} M_{i,j}} ={ \sum\limits_{i,j = 1}^{m,m} dy_{i,t} dy_{j,t} M_{j,i} = \trace{\bar{\mathbf{R}}_t \mathbf{M}_t}}.
				\end{dmath*}
				Hence, selecting $\mathbf{M}_t = \frac{1}{m} \bar{\mathbf{R}}_t^{-1} \in \mathscr{Y}_t \subset \mathscr{Y}_t'$ the assertion is verified. Now since $\mathbf{S}_t\mathbf{R}_t^{-1}\mathbf{S}_t^\top$ is a convex function with respect to both $\mathbf{S}_t$ and $\mathbf{R}_t$ (see \cite[Proposition 8.6.17 (xiv)]{bernstein2009matrix}), by the Jensen inequality
				\begin{dmath*}
					\mathbf{S}_t\mathbf{R}_t^{-1}\mathbf{S}_t^\top = \frac{1}{dt}  \condExpect{\frac{1}{2}\left(\bar{\mathbf{S}}_t+\underline{\mathbf{S}}_t^\top\right)} \condExpect{\bar{\mathbf{R}}_t}^{-1} \condExpect{\frac{1}{2}\left(\bar{\mathbf{S}}_t^\top+\underline{\mathbf{S}}_t\right)}\leq
					\frac{1}{dt}\frac{1}{4}\condExpect{\left(\bar{\mathbf{S}}_t+\underline{\mathbf{S}}_t^\top\right) \bar{\mathbf{R}}_t^{-1}  \left(\bar{\mathbf{S}}_t^\top+\underline{\mathbf{S}}_t\right)}
					\leq \frac{1}{dt}\frac{1}{2}\condExpect {\bar{\mathbf{S}}_t \bar{\mathbf{R}}_t^{-1} \underline{\mathbf{S}}_t + \left(\bar{\mathbf{S}}_t \bar{\mathbf{R}}_t^{-1} \underline{\mathbf{S}}_t\right)^\top}
					= \frac{m}{dt}\frac{1}{2}\condExpect {\bar{\mathbf{S}}_t \mathbf{M}_t \underline{\mathbf{S}}_t + \left(\bar{\mathbf{S}}_t \mathbf{M}_t \underline{\mathbf{S}}_t\right)^\top}
					=
					\frac{m}{dt}\condExpect{\frac{1}{2}\left(\bar{\mathbf{Q}}_t+\underline{\mathbf{Q}}_t^\top\right)} = m \mathbf{Q},
				\end{dmath*}
				which completes the proof.
			\end{proof}
		\end{proposition}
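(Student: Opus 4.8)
The plan is to deduce the inequality $m\mathbf{Q}_t - \mathbf{S}_t\mathbf{R}_t^{-1}\mathbf{S}_t^\top \geq 0$ from the joint convexity of the matrix-fractional map $(\mathbf{S},\mathbf{R})\mapsto\mathbf{S}\mathbf{R}^{-1}\mathbf{S}^\top$ \cite{bernstein2009matrix}, after first rewriting $\mathbf{Q}_t$ so that it carries a copy of $\mathbf{R}_t^{-1}$ sandwiched between $\mathbf{S}_t$-type factors. First I would strip off the conditional expectations and work with the raw quantum It\^{o} products $\bar{\mathbf{S}}_t = d\mathbf{x}_t d\mathbf{y}_t^\top$, $\underline{\mathbf{S}}_t = d\mathbf{y}_t d\mathbf{x}_t^\top$, $\bar{\mathbf{R}}_t = d\mathbf{y}_t d\mathbf{y}_t^\top$ and $\bar{\mathbf{Q}}_t = d\mathbf{x}_t d\mathbf{x}_t^\top$, each of which the It\^{o} table makes an operator-valued matrix proportional to $dt$; by \eqref{eq:Q}, \eqref{eq:R}, \eqref{eq:S} together with Lemma \ref{lem:SelfNonDemolition} (which renders $\bar{\mathbf{R}}_t$ symmetric) one then has $\mathbf{Q}_t = \tfrac{1}{2dt}\condExpect{\bar{\mathbf{Q}}_t+\bar{\mathbf{Q}}_t^\top}$, $\mathbf{S}_t = \tfrac{1}{2dt}\condExpect{\bar{\mathbf{S}}_t+\underline{\mathbf{S}}_t^\top}$ and $\mathbf{R}_t = \tfrac{1}{dt}\condExpect{\bar{\mathbf{R}}_t}$.

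The heart of the argument is to exhibit a symmetric operator-valued matrix $\mathbf{M}_t$ lying in the commutant $\mathscr{Y}_t'$ that resolves the identity against the measurement increments, $\sum_{i,j} dy_{i,t}\,(\mathbf{M}_t)_{ij}\,dy_{j,t} = \mathbb{1}$. Inserting this between the two copies of $d\mathbf{x}_t$ yields $\anticommutator{d\mathbf{x}_t}{d\mathbf{x}_t} = \bar{\mathbf{S}}_t\mathbf{M}_t\underline{\mathbf{S}}_t + (\bar{\mathbf{S}}_t\mathbf{M}_t\underline{\mathbf{S}}_t)^\top$. To build $\mathbf{M}_t$, note that membership in $\mathscr{Y}_t'$ lets one slide $(\mathbf{M}_t)_{ij}$ past $dy_{j,t}$, and then $\commutator{dy_{i,t}}{dy_{j,t}}=0$ (Lemma \ref{lem:SelfNonDemolition}) recasts the normalization as $\trace{\bar{\mathbf{R}}_t\mathbf{M}_t}=\mathbb{1}$, which is solved by $\mathbf{M}_t = \tfrac{1}{m}\bar{\mathbf{R}}_t^{-1}$; this indeed sits in the commutative algebra $\mathscr{Y}_t\subseteq\mathscr{Y}_t'$ and is well defined because $\mathbf{R}_t$, hence $\bar{\mathbf{R}}_t$, is invertible by hypothesis. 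The factor $1/m$ produced here is exactly the source of the constant $m$ in the statement and shows why in general it cannot be sharpened (cf.\ Assumption \ref{asm:QRSAsm}).

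With these ingredients I would apply the conditional operator Jensen inequality to the jointly convex map above: since $\mathbf{S}_t\mathbf{R}_t^{-1}\mathbf{S}_t^\top = \condExpect{\tfrac12(\bar{\mathbf{S}}_t+\underline{\mathbf{S}}_t^\top)}\,\condExpect{\bar{\mathbf{R}}_t}^{-1}\,\condExpect{\tfrac12(\bar{\mathbf{S}}_t^\top+\underline{\mathbf{S}}_t)}/dt$, convexity gives $\mathbf{S}_t\mathbf{R}_t^{-1}\mathbf{S}_t^\top\leq\tfrac{1}{4dt}\condExpect{(\bar{\mathbf{S}}_t+\underline{\mathbf{S}}_t^\top)\bar{\mathbf{R}}_t^{-1}(\bar{\mathbf{S}}_t^\top+\underline{\mathbf{S}}_t)}$. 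Using that $\bar{\mathbf{R}}_t$ is symmetric, the right-hand quadratic form splits as $\tfrac{1}{2dt}\condExpect{\bar{\mathbf{S}}_t\bar{\mathbf{R}}_t^{-1}\underline{\mathbf{S}}_t + (\bar{\mathbf{S}}_t\bar{\mathbf{R}}_t^{-1}\underline{\mathbf{S}}_t)^\top} + \tfrac{1}{4dt}\condExpect{(\bar{\mathbf{S}}_t-\underline{\mathbf{S}}_t^\top)\bar{\mathbf{R}}_t^{-1}(\bar{\mathbf{S}}_t-\underline{\mathbf{S}}_t^\top)^\top}$; writing $\bar{\mathbf{S}}_t-\underline{\mathbf{S}}_t^\top = \commutator{d\mathbf{x}_t}{d\mathbf{y}_t} = i\mathbf{H}_t$ with $\mathbf{H}_t$ a matrix of self-adjoint operators (a commutator of the self-adjoint increments $d\mathbf{x}_t$, $d\mathbf{y}_t$), the second summand equals $-\tfrac{1}{4dt}\condExpect{\mathbf{H}_t\bar{\mathbf{R}}_t^{-1}\mathbf{H}_t^\dagger}\leq 0$ because $\bar{\mathbf{R}}_t^{-1}\geq 0$, and so is dropped. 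Substituting $\bar{\mathbf{R}}_t^{-1}=m\mathbf{M}_t$ and the identity of the previous step, the surviving term is $\tfrac{m}{2dt}\condExpect{\bar{\mathbf{S}}_t\mathbf{M}_t\underline{\mathbf{S}}_t + (\bar{\mathbf{S}}_t\mathbf{M}_t\underline{\mathbf{S}}_t)^\top} = \tfrac{m}{2dt}\condExpect{\anticommutator{d\mathbf{x}_t}{d\mathbf{x}_t}} = m\mathbf{Q}_t$, which is the assertion.

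I expect the main obstacle to be the middle step: establishing rigorously the identity $\anticommutator{d\mathbf{x}_t}{d\mathbf{x}_t} = \bar{\mathbf{S}}_t\mathbf{M}_t\underline{\mathbf{S}}_t + (\bar{\mathbf{S}}_t\mathbf{M}_t\underline{\mathbf{S}}_t)^\top$ with a symmetric $\mathbf{M}_t\in\mathscr{Y}_t'$. It rests on the self-non-demolition property (Lemma \ref{lem:SelfNonDemolition}) to commute the $dy_{i,t}$ past one another, on placing $\mathbf{M}_t$ in the commutant so that it slides through the measurement increments, and on the invertibility of $\bar{\mathbf{R}}_t$ interpreted at the level of It\^{o} increments. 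Once that identity and the normalization $\trace{\bar{\mathbf{R}}_t\mathbf{M}_t}=\mathbb{1}$ are in hand, the remainder is the routine Jensen/convexity estimate together with It\^{o}-order bookkeeping.
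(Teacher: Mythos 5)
Your proposal follows essentially the same route as the paper's own proof: the same decomposition of the It\^{o} products, the same resolution of the identity $\sum_{i,j} dy_{i,t}(\mathbf{M}_t)_{ij}dy_{j,t}=\mathbb{1}$ solved by $\mathbf{M}_t=\tfrac{1}{m}\bar{\mathbf{R}}_t^{-1}$, and the same Jensen/joint-convexity estimate for $\mathbf{S}\mathbf{R}^{-1}\mathbf{S}^\top$. The only difference is that you explicitly justify the intermediate inequality $\tfrac14(\bar{\mathbf{S}}_t+\underline{\mathbf{S}}_t^\top)\bar{\mathbf{R}}_t^{-1}(\bar{\mathbf{S}}_t^\top+\underline{\mathbf{S}}_t)\leq\tfrac12\bigl(\bar{\mathbf{S}}_t\bar{\mathbf{R}}_t^{-1}\underline{\mathbf{S}}_t+(\bar{\mathbf{S}}_t\bar{\mathbf{R}}_t^{-1}\underline{\mathbf{S}}_t)^\top\bigr)$ by identifying the discarded term as $-\tfrac14\commutator{d\mathbf{x}_t}{d\mathbf{y}_t}\bar{\mathbf{R}}_t^{-1}\commutator{d\mathbf{x}_t}{d\mathbf{y}_t}^\dagger\leq 0$, a step the paper asserts without comment.
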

		The following Lemma gives a bound on the nonlinear residual rate based on Assumption \ref{asm:Assumptions}, see also \cite[Lemma 3.3]{Reif2000} for an analogous of result for the classical EKF.

		\begin{lemma}\label{lem:NonlinearityBound}
			Consider a functional $\phi$ that is a function of the residual $\mathbf{r}_t$ and the estimation error $\tilde{\mathbf{x}}_t$ given below,
			\begin{align}
			\phi(\mathbf{r}_t,\tilde{\mathbf{x}}_t) \equiv & \mathbf{r}_t^\top \mathbf{P}_t^{-1} \tilde{\mathbf{x}}_t + \tilde{\mathbf{x}}_t^\top  \mathbf{P}_t^{-1} \mathbf{r}_t.
			\end{align}
			Then under Assumptions \ref{asm:Assumptions}, there exists positive $\epsilon$ and $\kappa$ such that for every $\norm{\tilde{\mathbf{x}}_t} \leq \epsilon$, then
			\begin{dmath}
				\norm{\phi(\mathbf{r}_t,\tilde{\mathbf{x}}_t)} \leq \kappa \norm{\tilde{\mathbf{x}}_t}^3. \label{eq:phi}
			\end{dmath}
		\end{lemma}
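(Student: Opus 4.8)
The plan is to reduce the bound on $\norm{\phi}$ to a bound on $\norm{\mathbf{r}_t}$ and then invoke the residual rate constraint of Assumption~\ref{asm:ResidualAsm}. First I would set $\epsilon \equiv \min\{\epsilon_f,\epsilon_h\}$, so that on $\{\hat{\mathbf{x}}_t:\norm{\tilde{\mathbf{x}}_t}\leq\epsilon\}$ both estimates in Assumption~\ref{asm:ResidualAsm} hold at once. From \eqref{eq:residual} we have $\mathbf{r}_t=\mathbf{r}_f-\mathbf{K}_t\mathbf{r}_h$, so the triangle inequality for the semi-norm \eqref{eq:SemiNorm} together with the bound $\norm{\mathbf{K}_t\mathbf{r}_h}\leq\|\mathbf{K}_t\|_{\mathcal{B}}\norm{\mathbf{r}_h}$ (legitimate since $\mathscr{N}=\mathcal{B}$, so every entry has a finite $C^{\ast}$-norm) gives $\norm{\mathbf{r}_t}\leq(r_f+\bar{k}\,r_h)\norm{\tilde{\mathbf{x}}_t}^{2}$ on that set, where $\bar{k}\equiv\sup_{t\geq0}\|\mathbf{K}_t\|_{\mathcal{B}}$. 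That $\bar{k}<\infty$ follows from the gain formula \eqref{eq:KalmanGain}, $\mathbf{K}_t=\bigl[\mathbf{P}_t\mathbf{H}(\hat{\mathbf{x}}_t)^{\top}+\mathbf{S}_t\bigr]\mathbf{R}_t^{-1}$, whose factors are bounded by Assumptions~\ref{asm:Hbounded} and~\ref{asm:PAsm} and the standing positive-definiteness of $\mathbf{R}_t$.

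For the estimate on $\phi$ I would use that $\mathbf{P}_t$ may be taken with $\mathbf{P}_0\in\mathscr{Y}_0$ and Riccati dynamics \eqref{eq:Riccati} staying in $\mathscr{Y}_t$, hence $\mathbf{P}_t$ (and $\mathbf{P}_t^{-1}$) is single valued and commutes with every entry of $\mathbf{r}_t$ and $\tilde{\mathbf{x}}_t$. Writing $\mathbf{P}_t^{-1}=\mathbf{P}_t^{-1/2}\mathbf{P}_t^{-1/2}$ and applying a Cauchy--Schwarz estimate for the GNS form $\langle\mathbf{a},\mathbf{b}\rangle\equiv\Paverage{\mathbf{a}^{\dagger}\mathbf{b}}$ induced by \eqref{eq:SemiNorm} to each of the two summands of $\phi=\mathbf{r}_t^{\top}\mathbf{P}_t^{-1}\tilde{\mathbf{x}}_t+\tilde{\mathbf{x}}_t^{\top}\mathbf{P}_t^{-1}\mathbf{r}_t$, together with $\mathbf{P}_t^{-1}\leq\underline{p}^{-1}\mathbf{I}$ from Assumption~\ref{asm:PAsm}, one would obtain $\norm{\phi}\leq2\underline{p}^{-1}\norm{\mathbf{r}_t}\norm{\tilde{\mathbf{x}}_t}$. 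Combining the two steps, on $\norm{\tilde{\mathbf{x}}_t}\leq\epsilon$ this would give $\norm{\phi}\leq2\underline{p}^{-1}(r_f+\bar{k}\,r_h)\norm{\tilde{\mathbf{x}}_t}^{3}$, i.e. the claim with $\kappa\equiv2\underline{p}^{-1}(r_f+\bar{k}\,r_h)$. Structurally this mirrors the classical estimate \cite[Lemma~3.3]{Reif2000}, with the semi-norm \eqref{eq:SemiNorm} playing the role of the Euclidean norm.

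\textbf{Main obstacle.} The delicate step is the Cauchy--Schwarz / submultiplicativity bound $\norm{\phi}\lesssim\norm{\mathbf{P}_t^{-1}}\,\norm{\mathbf{r}_t}\,\norm{\tilde{\mathbf{x}}_t}$. Because $\mathbb{P}$ is not tracial and the entries of $\mathbf{r}_t$ and $\tilde{\mathbf{x}}_t$ lie in the non-commutative commutant $\mathscr{Y}_t'$, the semi-norm cannot be pushed through products of operators as freely as a Hilbert-space norm, so one has to use the structure explicitly: single-valuedness of $\mathbf{P}_t$ strips the middle factor out of the quadratic forms in $\phi$, and the restriction $\mathscr{N}=\mathcal{B}$ makes every operator bounded, so that the inequalities $A^{\ast}A\leq\|A\|_{\mathcal{B}}^{2}\mathbb{1}$ and $\norm{AB}\leq\|A\|_{\mathcal{B}}\norm{B}$ are available inside the positive functional $\mathbb{P}$. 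Care is also needed because Assumption~\ref{asm:ResidualAsm} controls the \emph{semi}-norm of $\mathbf{r}_f,\mathbf{r}_h$ rather than their $C^{\ast}$-norm; this is why I would carry the $C^{\ast}$-norm bound on $\mathbf{K}_t$ (which is bounded unconditionally under Assumptions~\ref{asm:Hbounded}--\ref{asm:PAsm}) and keep the residuals entirely inside semi-norm estimates, and it is the point at which one must be most cautious in transcribing the classical argument.
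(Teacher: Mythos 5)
Your proposal follows the paper's own proof essentially step for step: the same expansion of $\phi$ via $\mathbf{r}_t=\mathbf{r}_f-\mathbf{K}_t\mathbf{r}_h$ from \eqref{eq:residual}, the same gain bound $\bar{k}\leq(\bar{p}\bar{h}+\bar{s})/\underline{r}$ from Assumptions \ref{asm:Hbounded} and \ref{asm:PAsm}, the same choice $\epsilon=\min(\epsilon_f,\epsilon_h)$, and the identical constant $\kappa=2(r_f+\bar{k}\,r_h)/\underline{p}$. If anything, you are more explicit than the paper about the Cauchy--Schwarz/submultiplicativity step that passes from the semi-norm of the products in $\phi$ to the product of semi-norms, a step the paper's proof asserts without comment.
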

		\begin{proof}
			From  \eqref{eq:residual}, we have
			\begin{dmath*}
				\phi(\mathbf{r}_t,\tilde{\mathbf{x}}_t) = \mathbf{r}_f^\top\mathbf{P}_t^{-1}\tilde{\mathbf{x}}_t + \tilde{\mathbf{x}}_t^\top\mathbf{P}_t^{-1}\mathbf{r}_f - \left[\mathbf{r}_h^\top\mathbf{K}_t^\top\mathbf{P}_t^{-1}\tilde{\mathbf{x}}_t + \tilde{\mathbf{x}}_t^\top\mathbf{P}_t^{-1}\mathbf{K}_t\mathbf{r}_h\right].
			\end{dmath*}
			Furthermore using \ref{asm:PAsm} and \ref{asm:Hbounded},
			\begin{align*}
			\bar{k} \equiv \norm{\mathbf{K}_t} \leq \dfrac{\left(\bar{p}\bar{h}+\bar{s}\right)}{\underline{r}}.
			\end{align*}
			From \ref{asm:ResidualAsm},  taking $\epsilon = \min({\epsilon_f},{\epsilon_h})$
			\begin{align*}
			\norm{\phi(\mathbf{r}_t,\tilde{\mathbf{x}}_t)} \leq & 2\left[\dfrac{r_h \bar{k}+r_f}{\underline{p}}\right]\norm{\tilde{\mathbf{x}}_t}^3 \equiv \kappa \norm{\tilde{\mathbf{x}}_t}^3.
			\end{align*}
		\end{proof}

		To examine the convergence behavior of the coupling dynamics between the estimation error and the covariance, we consider a Lyapunov positive operator $V \in \mathcal{O}_+\left(\mathscr{H}_{\mathsf{T}}\right)$ given by
		\begin{align}
		V\left(\tilde{\mathbf{x}}_t\right) = \tilde{\mathbf{x}}_t^\top \mathbf{P}_t^{-1} \tilde{\mathbf{x}}_t. \label{eq:Lyanpunov}
		\end{align}

		\begin{definition}[Class $\mathcal{K}$, and class $\mathcal{KR}$ function]\cite{sastry1999nonlinear} A function $\alpha( \cdot ) : \mathbb{R}_+ \rightarrow \mathbb{R}_+$  belongs to class $\mathcal{K}$ if it is continuous and strictly increasing, where $\alpha(0) = 0$. A function $\alpha( \cdot )$ is said to belong class $\mathcal{KR}$  if $\alpha$ is of class $\mathcal{K}$ and in addition, $\lim\limits_{z \rightarrow \infty}\alpha(z) = \infty$.
		\end{definition}

		Notice that, under \ref{asm:PAsm}, the Lyapunov candidate functional $V$ is positive definite and \emph{decresent}. That is we can find $\alpha(\norm{\mathbf{x}}),\beta(\norm{\mathbf{x}})$ of class $\mathcal{K}$, such that
		\begin{align*}
		\alpha(\norm{\tilde{\mathbf{x}}_t}) \leq \Paverage{V(\tilde{\mathbf{x}}_t)} \leq \beta(\norm{\tilde{\mathbf{x}}_t}).
		\end{align*}
		The following Lemma shows the dissipativity of the quantum EKF estimation error.
		\begin{lemma}\label{lem:Dissipativity}
			Consider an open quantum system QSDE given in \eqref{eq:QSDE}, and measurement given in \eqref{eq:Measurements}. With the quantum EKF given in \eqref{eq:QuantumEKF}, and $\kappa,\epsilon \geq 0 $ in Lemma \ref{lem:NonlinearityBound}. Under Assumptions \ref{asm:Assumptions}, there exist $\epsilon' >0$ such that if $\norm{\tilde{\mathbf{x}}_t} \leq \epsilon', \forall t \geq 0$,
			there exist $\gamma,\delta >0$ such that the Lyapunov positive operator \eqref{eq:Lyanpunov} satisfies the dissipation inequality below
			\begin{align}
			\mathcal{L}\left(V\right)  \leq & -\gamma V(\tilde{\mathbf{x}}_t)  + \delta. \label{eq:DissipativityDifferential}
			\end{align}

		\end{lemma}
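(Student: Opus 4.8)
The plan is to apply the quantum It\^{o} rule to the positive operator $V(\tilde{\mathbf{x}}_t)=\tilde{\mathbf{x}}_t^\top\mathbf{P}_t^{-1}\tilde{\mathbf{x}}_t$ along the error dynamics \eqref{eq:errorDynamic}, write $dV=\mathcal{L}(V)\,dt+(\text{martingale increments})$, and then estimate the drift $\mathcal{L}(V)$ term by term in the sense of $\mathbb{P}$, in analogy with the classical EKF computation in \cite{Reif2000} but with the semi-norm \eqref{eq:SemiNorm} replacing the Euclidean norm. Since $\mathbf{P}_t$ is driven by the noise-free Riccati ODE \eqref{eq:Riccati}, the only second-order contribution comes from the martingale part of $d\tilde{\mathbf{x}}_t$, so that $\mathcal{L}(V)$ decomposes into: (i) the closed-loop quadratic form $\tilde{\mathbf{x}}_t^\top\bigl[(\mathbf{F}-\mathbf{K}_t\mathbf{H})^\top\mathbf{P}_t^{-1}+\mathbf{P}_t^{-1}(\mathbf{F}-\mathbf{K}_t\mathbf{H})+\tfrac{d}{dt}\mathbf{P}_t^{-1}\bigr]\tilde{\mathbf{x}}_t$, with $\mathbf{F}=\mathbf{F}(\hat{\mathbf{x}}_t)$, $\mathbf{H}=\mathbf{H}(\hat{\mathbf{x}}_t)$; (ii) the residual contribution $\phi(\mathbf{r}_t,\tilde{\mathbf{x}}_t)$ of Lemma \ref{lem:NonlinearityBound}; and (iii) the It\^{o} correction $\trace{\mathbf{P}_t^{-1}\bm{\Sigma}_t}$, where $\bm{\Sigma}_t$ is the $dt$-coefficient of $\tfrac12\anticommutator{d\tilde{\mathbf{x}}_t}{d\tilde{\mathbf{x}}_t}$.

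For (i) I would use $\tfrac{d}{dt}\mathbf{P}_t^{-1}=-\mathbf{P}_t^{-1}\dot{\mathbf{P}}_t\mathbf{P}_t^{-1}$ together with \eqref{eq:Riccati} written as $\dot{\mathbf{P}}_t=\mathbf{F}\mathbf{P}_t+\mathbf{P}_t\mathbf{F}^\top+\mathbf{Q}_t-\mathbf{K}_t\mathbf{R}_t\mathbf{K}_t^\top$ and the gain identity \eqref{eq:KalmanGain}; the same cancellations as in the classical deterministic Riccati identity collapse the bracket in (i) to $-\mathbf{H}^\top\mathbf{R}_t^{-1}\mathbf{H}-\mathbf{P}_t^{-1}(\mathbf{Q}_t-\mathbf{S}_t\mathbf{R}_t^{-1}\mathbf{S}_t^\top)\mathbf{P}_t^{-1}$. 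Discarding the negative semidefinite first summand and invoking Assumption \ref{asm:QRSAsm} in its strict (and uniform in $t$) form $\mathbf{Q}_t-\mathbf{S}_t\mathbf{R}_t^{-1}\mathbf{S}_t^\top\geq\underline q\mathbf{I}$ together with $\mathbf{P}_t\leq\bar p\mathbf{I}$ from \ref{asm:PAsm} gives $\mathbf{P}_t^{-1}(\mathbf{Q}_t-\mathbf{S}_t\mathbf{R}_t^{-1}\mathbf{S}_t^\top)\mathbf{P}_t^{-1}\geq(\underline q/\bar p)\,\mathbf{P}_t^{-1}$, so (i) contributes at most $-(\underline q/\bar p)\,\mathbb{P}(V)$.

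For (ii), Cauchy--Schwarz under the state $\mathbb{P}$, the bound $\mathbf{P}_t^{-1}\leq\underline p^{-1}\mathbf{I}$, the gain estimate $\norm{\mathbf{K}_t}\leq(\bar p\bar h+\bar s)/\underline r$ and the residual rates \ref{asm:ResidualAsm} reproduce $|\mathbb{P}(\phi(\mathbf{r}_t,\tilde{\mathbf{x}}_t))|\leq\kappa\norm{\tilde{\mathbf{x}}_t}^3$ as in Lemma \ref{lem:NonlinearityBound}; on $\{\norm{\tilde{\mathbf{x}}_t}\leq\epsilon'\}$ one then has $\norm{\tilde{\mathbf{x}}_t}^3\leq\epsilon'\norm{\tilde{\mathbf{x}}_t}^2\leq\epsilon'\bar p\,\mathbb{P}(V)$, so (ii) contributes at most $\kappa\bar p\epsilon'\,\mathbb{P}(V)$. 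For (iii), completing the square identifies $\bm{\Sigma}_t=\mathbf{Q}_t-\mathbf{K}_t\mathbf{S}_t^\top-\mathbf{S}_t\mathbf{K}_t^\top+\mathbf{K}_t\mathbf{R}_t\mathbf{K}_t^\top=(\mathbf{Q}_t-\mathbf{S}_t\mathbf{R}_t^{-1}\mathbf{S}_t^\top)+(\mathbf{K}_t-\mathbf{S}_t\mathbf{R}_t^{-1})\mathbf{R}_t(\mathbf{K}_t-\mathbf{S}_t\mathbf{R}_t^{-1})^\top$ --- the noise covariance already implicit in \eqref{eq:RiccatiNotComplete} --- which is positive, and since every operator entering it lies in $\mathcal{B}$ and is therefore bounded, as are $\mathbf{K}_t,\mathbf{R}_t,\mathbf{S}_t,\mathbf{Q}_t$, the quantity $\trace{\mathbf{P}_t^{-1}\bm{\Sigma}_t}$ is bounded above by a constant $\delta>0$ uniformly in $t$. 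Collecting (i)--(iii) yields $\mathbb{P}(\mathcal{L}(V))\leq-(\underline q/\bar p-\kappa\bar p\epsilon')\,\mathbb{P}(V)+\delta$, and choosing $\epsilon'\leq\min\bigl(\epsilon,\underline q/(2\kappa\bar p^2)\bigr)$ makes $\gamma:=\underline q/\bar p-\kappa\bar p\epsilon'>0$, which is \eqref{eq:DissipativityDifferential}.

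The delicate points are steps (i) and (iii): carrying the quantum It\^{o} product rule faithfully through the quadratic form $\tilde{\mathbf{x}}^\top\mathbf{P}^{-1}\tilde{\mathbf{x}}$, and recognising that the It\^{o} correction $\trace{\mathbf{P}_t^{-1}\bm{\Sigma}_t}$ does \emph{not} cancel against $\tfrac{d}{dt}\mathbf{P}_t^{-1}$ the way it would for the exactly optimal filter, because $\mathbf{P}_t$ inside $V$ is the filter's Riccati matrix rather than the true conditional error covariance (the residual $\bm{\Theta}$ of \eqref{eq:RiccatiNotComplete} having been dropped). It instead survives as the additive constant $\delta$, which is precisely why one obtains only practical (ball) dissipativity rather than convergence to zero. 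The remaining work --- checking that all matrices entering $\bm{\Sigma}_t$ are bounded \emph{above}, not merely positive, and that a uniform lower bound $\underline q$ for $\mathbf{Q}_t-\mathbf{S}_t\mathbf{R}_t^{-1}\mathbf{S}_t^\top$ is available --- is routine given that the observables live in $\mathcal{B}$ and given the standing positivity hypotheses on $\mathbf{R}_t$ and \ref{asm:QRSAsm}.
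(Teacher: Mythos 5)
Your proposal is correct and follows essentially the same route as the paper's proof: It\^{o} expansion of $V$ with $\tfrac{d}{dt}\mathbf{P}_t^{-1}=-\mathbf{P}_t^{-1}\dot{\mathbf{P}}_t\mathbf{P}_t^{-1}$, collapse of the drift via the Riccati equation and gain identity to $-\mathbf{H}^\top\mathbf{R}_t^{-1}\mathbf{H}-\mathbf{P}_t^{-1}(\mathbf{Q}_t-\mathbf{S}_t\mathbf{R}_t^{-1}\mathbf{S}_t^\top)\mathbf{P}_t^{-1}$, Lemma \ref{lem:NonlinearityBound} for the cubic residual absorbed by a small enough $\epsilon'$, and boundedness of the It\^{o} correction giving $\delta$. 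Your write-up is in fact more explicit than the paper's on the two points it leaves implicit (the Riccati cancellation and the completion of the square in $\bm{\Sigma}_t$), and your observation that Assumption \ref{asm:QRSAsm} must be used in its strict, uniform form $\mathbf{Q}_t-\mathbf{S}_t\mathbf{R}_t^{-1}\mathbf{S}_t^\top\geq\underline{q}\,\mathbf{I}$ matches what the paper itself tacitly assumes via $q_e=\inf\norm{\mathbf{Q}_t-\mathbf{S}_t\mathbf{R}_t^{-1}\mathbf{S}_t^\top}$.
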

		\begin{proof}
			We could write the time derivative of $\mathbf{P}_t^{-1}$ as below.
			\begin{align*}
			\dfrac{\partial \mathbf{P}_t^{-1}}{\partial t} =& - \mathbf{P}_t^{-1}\dfrac{\partial \mathbf{P}_t}{\partial t}\mathbf{P}_t^{-1}.
			\end{align*}
			Then by It\^{o} expansion, we have
			\begin{dmath*}
				dV =   \tilde{\mathbf{x}}_t^\top\mathbf{P}_t^{-1}d\tilde{\mathbf{x}}_t+d\tilde{\mathbf{x}}_t^\top\mathbf{P}_t^{-1}\tilde{\mathbf{x}}_t+d\tilde{\mathbf{x}}_t^\top\mathbf{P}_t^{-1}d\tilde{\mathbf{x}}_t - \tilde{\mathbf{x}}_t^\top\mathbf{P}_t^{-1}\dfrac{\partial \mathbf{P}_t}{\partial t}\mathbf{P}_t^{-1}\tilde{\mathbf{x}}_t dt.
			\end{dmath*}
			The quantum Markov generator for the Lyapunov positive operator $V$ is then given by
			\begin{dmath*}
				\mathcal{L}\left(V\right)  =  \tilde{\mathbf{x}}_t^\top\mathbf{P}_t^{-1}\left[\left(\mathbf{F}(\hat{\mathbf{x}}_t) - \mathbf{K}_t\mathbf{H}(\hat{\mathbf{x}}_t)\right)\tilde{\mathbf{x}}_t  + \mathbf{r} \right]+\left[\left(\mathbf{F}(\hat{\mathbf{x}}_t) - \mathbf{K}_t\mathbf{H}(\hat{\mathbf{x}}_t)\right)\tilde{\mathbf{x}}_t  + \mathbf{r} \right]^\top\mathbf{P}_t^{-1}\tilde{\mathbf{x}}_t
				+  \trace{\left(\mathbf{G}(\mathbf{x}_t)^\ast - \mathbf{K}_t\mathbf{L}(\mathbf{x}_t)^\ast\right)^\top\mathbf{P}_t^{-1}\left(\mathbf{G}(\mathbf{x}_t) - \mathbf{K}_t \mathbf{L}(\mathbf{x}_t)\right)} - \tilde{\mathbf{x}}_t^\top\mathbf{P}_t^{-1}\dfrac{\partial \mathbf{P}_t}{\partial t}\mathbf{P}_t^{-1}\tilde{\mathbf{x}}_t.
			\end{dmath*}
			From Lemma \ref{lem:NonlinearityBound},  \eqref{eq:errorDynamic}, \eqref{eq:KalmanGain}, and \eqref{eq:Riccati}, there exists $\epsilon >0$ such that,
			\begin{dmath*}
				\mathcal{L}\left(V\right) \leq   -\tilde{\mathbf{x}}_t^\top\left[ \mathbf{P}_t^{-1}\left[\mathbf{Q}_t + \mathbf{P}\mathbf{H}_t^\top\mathbf{R}_t^{-1}\mathbf{H}_t\mathbf{P} - \left[\mathbf{S}_t\mathbf{R}_t^{-1}\mathbf{S}_t^\top + \kappa  \norm{\tilde{\mathbf{x}}_t} \mathbf{P}^2_t\right]\right]\mathbf{P}_t^{-1}\right]\tilde{\mathbf{x}}_t + \delta
				\leq  -\tilde{\mathbf{x}}_t^\top\left[ \mathbf{P}_t^{-1}\left[\left(\mathbf{Q}_t-\mathbf{S}_t\mathbf{R}_t^{-1}\mathbf{S}_t^\top\right) + \mathbf{P}_t\left[\mathbf{H}_t^\top\mathbf{R}_t^{-1}\mathbf{H}_t - \kappa  \norm{\tilde{\mathbf{x}}_t} \right]\mathbf{P}_t\right]\mathbf{P}_t^{-1}\right]\tilde{\mathbf{x}}_t + \delta .
			\end{dmath*}
			By assumption \ref{asm:QRSAsm}, $\mathbf{Q}_t-\mathbf{S}_t\mathbf{R}_t^{-1}\mathbf{S}_t^\top > 0$. Now select $\epsilon'$, such that, $\forall \norm{\tilde{\mathbf{x}}_t} \leq \epsilon'$
			\begin{dmath*}
				\norm{\left(\mathbf{Q}_t-\mathbf{S}_t\mathbf{R}_t^{-1}\mathbf{S}_t^\top\right) + \mathbf{P}_t\left[\mathbf{H}_t^\top\mathbf{R}_t^{-1}\mathbf{H}_t - \kappa  \norm{\tilde{\mathbf{x}}_t} \right]\mathbf{P}_t} \geq 0.
			\end{dmath*}
			This fulfilled by taking,
			\begin{dmath*}
				{\epsilon' = \min\left(\dfrac{q_e}{\bar{p}^2 \kappa},\epsilon\right)},
			\end{dmath*}
			where $q_e  = \inf{\norm{\mathbf{Q}_t-\mathbf{S}_t\mathbf{R}_t^{-1}\mathbf{S}_t^\top}}$. Taking $\norm{\tilde{\mathbf{x}}_t} \leq \epsilon'$, there exists
			\begin{dgroup*}
				\begin{dmath*}
					\gamma =  \inf_{\mathbf{x}}
					\dfrac{\norm{\mathbf{x}_t^\top \left(\mathbf{P}_t^{-1}\left(\mathbf{Q}_t-\mathbf{S}_t\mathbf{R}_t^{-1}\mathbf{S}_t^\top\right)\mathbf{P}_t^{-1} + \mathbf{H}_t^\top\mathbf{R}_t^{-1}\mathbf{H}_t - \kappa \epsilon'\right) \tilde{\mathbf{x}}_t}}
					{\norm{\tilde{\mathbf{x}}_t^\top\mathbf{P}_t^{-1}\tilde{\mathbf{x}}_t}}
					.
				\end{dmath*}
			\end{dgroup*}
			Moreover, since $\mathbf{x}_t$ is bounded, then there exists $\delta$ which is given by,
			\begin{dmath*}
				\delta = \sup_{\mathbf{x}_t} \trace{\left(\mathbf{G}(\mathbf{x}_t)^\ast - \mathbf{K}_t\mathbf{L}(\mathbf{x}_t)^\ast\right)^\top\mathbf{P}_t^{-1}\left(\mathbf{G}(\mathbf{x}_t) - \mathbf{K}_t \mathbf{L}(\mathbf{x}_t)\right)}.
			\end{dmath*}
			The result then follows immediately.
		\end{proof}

		In contrast to the proof of the stochastic convergence of a classical EKF given in \cite[Theorem 3.2]{Reif2000}, having $\norm{\tilde{\mathbf{x}}_0} \leq \epsilon'$ does not guarantee that in the future the estimation error will always remain in the region $\norm{\tilde{\mathbf{x}}_t} \leq \epsilon'$. Consequently, since the last dissipative inequality is only valid in the region $\norm{\tilde{\mathbf{x}}_t} \leq \epsilon'$ then we can only state the quadratic bound of the estimation error before it leaves the region $\norm{\tilde{\mathbf{x}}_t} \leq \epsilon'$.
		From the definition of the semi-norm in \eqref{eq:SemiNorm}, by \eqref{eq:normError} we obtain,
		\begin{dmath}
			\norm{\mathbf{x}_t-\hat{\mathbf{x}}_t } = \Paverage{\condExpect{\left(\mathbf{x}_t - \hat{\mathbf{x}}_t\right)^\dagger\left(\mathbf{x}_t - \hat{\mathbf{x}}_t\right)}}^{1/2}.
		\end{dmath}
		Consequently, the set all events such that the estimation error is greater than $\epsilon'$ is $\mathscr{Y}_t$ measurable, i.e. $\left\{\norm{\mathbf{x}_t-\hat{\mathbf{x}}_t } > \epsilon' \right\} \in \mathscr{Y}_t$. Let $\mathscr{E}_{\epsilon'} = \left\{\hat{\mathbf{x}}_t: \norm{\mathbf{x}_t - \hat{\mathbf{x}}_t} \leq \epsilon' \right\} \subset \mathscr{Y}_t$. Now, define the Markov time $\tau_{\epsilon'}\left(\hat{\mathbf{x}}_0\right) : \left\{\tau_{\epsilon'}\left(\hat{\mathbf{x}}_0\right) \leq t \right\} \in \mathscr{Y}_t$ to be the first time that the trajectory $\hat{\mathbf{x}}_t$ leaves $\mathscr{E}_{\epsilon'}$ given that it begins inside it. The following theorem describes the quantum EKF's estimation error quadratic bound.
		\begin{theorem}
			If the set of system observables $\mathbf{x}_t$ has an evolution given by the QSDE in \eqref{eq:QSDE} that satisfies Assumption \ref{asm:Assumptions} and if initially $\norm{\tilde{\mathbf{x}}_0} \leq \epsilon'$, where $\epsilon'$ is given in Lemma \ref{lem:Dissipativity}, then for all $t'  = \min \left(\tau_{\epsilon'}\left(\hat{\mathbf{x}}_0\right),t\right) \geq 0$
			\begin{dmath}
				\condExpectB{\tilde{\mathbf{x}}_{t'}^\top\tilde{\mathbf{x}}_{t'}}{\mathscr{Y}_{t'}} \leq \dfrac{\bar{p}}{\underline{p}}\condExpectB{\tilde{\mathbf{x}}_0^\top\tilde{\mathbf{x}}_0}{\mathscr{Y}_{t'}} e^{-\gamma {t'}} + \dfrac{\delta}{\gamma \underline{p}} \left(1 - e^{-\gamma t'}\right).
			\end{dmath}\label{thm:ConvergenceResult}
		\end{theorem}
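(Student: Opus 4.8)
The plan is to run the classical stochastic-Lyapunov argument, adapted to a stopped trajectory, on the Lyapunov operator $V(\tilde{\mathbf{x}}_t)=\tilde{\mathbf{x}}_t^\top\mathbf{P}_t^{-1}\tilde{\mathbf{x}}_t$ of \eqref{eq:Lyanpunov}. The three ingredients are: the dissipation inequality $\mathcal{L}(V)\le-\gamma V(\tilde{\mathbf{x}}_t)+\delta$ of Lemma \ref{lem:Dissipativity}, an integrating factor $e^{\gamma t}$ together with a Dynkin-type identity, and the Markov time $\tau_{\epsilon'}(\hat{\mathbf{x}}_0)$; at the end one translates the resulting bound on $V$ into one on $\tilde{\mathbf{x}}_{t'}^\top\tilde{\mathbf{x}}_{t'}$ by means of the sandwich $\underline{p}\mathbf{I}\le\mathbf{P}_t\le\bar{p}\mathbf{I}$ of Assumption \ref{asm:PAsm}.

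First I would record that, since $\mathbf{P}_t^{-1}\in\mathscr{Y}_t\subseteq\mathscr{Y}_t'$ while $\tilde{\mathbf{x}}_t\in\mathscr{Y}_t'$, the operator $V(\tilde{\mathbf{x}}_t)$ lies in $\mathscr{Y}_t'$, so $\condExpect{V(\tilde{\mathbf{x}}_t)}$ is well defined by Definition \ref{def:QuantumConditionalExpectation}. Composing the quantum It\^{o} expansion of $e^{\gamma t}V(\tilde{\mathbf{x}}_t)$ with $\condExpect{\cdot}$ annihilates the $d\mathbf{A}_t,d\mathbf{A}_t^\ast,d\bm{\alpha}_t$ increments --- they are forward differentials and hence uncorrelated with $\mathscr{Y}_t$ under $\mathbb{P}$ --- so the drift reduces to $\condExpect{\gamma V(\tilde{\mathbf{x}}_t)+\mathcal{L}(V)(\tilde{\mathbf{x}}_t)}$, which by Lemma \ref{lem:Dissipativity} is bounded by $\delta$ \emph{on the region} $\norm{\tilde{\mathbf{x}}_t}\le\epsilon'$. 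Because, unlike the classical situation of \cite[Theorem 3.2]{Reif2000}, $\norm{\tilde{\mathbf{x}}_0}\le\epsilon'$ does not propagate this bound to all later times, I would stop at $\tau_{\epsilon'}(\hat{\mathbf{x}}_0)$ --- a legitimate $\mathscr{Y}_t$-Markov time since $\{\norm{\mathbf{x}_t-\hat{\mathbf{x}}_t}>\epsilon'\}\in\mathscr{Y}_t$, as established just before the theorem --- and work with $t'=\min(\tau_{\epsilon'},t)$, on which the dissipation estimate holds throughout $[0,t']$.

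Then I would transport the conditioned quantities to the classical probability space $(\Omega,\mathscr{F},\mu)$ via the $\ast$-isomorphism $\tau$ of Theorem \ref{thm:SpectralTheoremInfiniteDimension}, where the stopped problem becomes an ordinary one. A Dynkin/optional-stopping argument applied to the stopped semimartingale $e^{\gamma(t\wedge\tau_{\epsilon'})}\,\condExpectB{V(\tilde{\mathbf{x}}_{t\wedge\tau_{\epsilon'}})}{\mathscr{Y}_{t\wedge\tau_{\epsilon'}}}$, integrated from $0$ to $t'$ and divided by $e^{\gamma t'}$, gives
\begin{dmath*}
\condExpectB{V(\tilde{\mathbf{x}}_{t'})}{\mathscr{Y}_{t'}}\le e^{-\gamma t'}\condExpectB{V(\tilde{\mathbf{x}}_0)}{\mathscr{Y}_{t'}}+\frac{\delta}{\gamma}\left(1-e^{-\gamma t'}\right).
\end{dmath*}
Finally, bounding $V(\tilde{\mathbf{x}}_{t'})\ge\bar{p}^{-1}\,\tilde{\mathbf{x}}_{t'}^\top\tilde{\mathbf{x}}_{t'}$ and $V(\tilde{\mathbf{x}}_0)\le\underline{p}^{-1}\,\tilde{\mathbf{x}}_0^\top\tilde{\mathbf{x}}_0$ from Assumption \ref{asm:PAsm} and rearranging yields the asserted quadratic bound on $\condExpectB{\tilde{\mathbf{x}}_{t'}^\top\tilde{\mathbf{x}}_{t'}}{\mathscr{Y}_{t'}}$.

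The main obstacle will be the stopped-process step rather than any algebra: one must verify that $\tau_{\epsilon'}$ is adapted to $\mathscr{Y}_t$, that conditional expectation and stopping interchange correctly, and that the martingale part of the It\^{o} expansion of $V$ remains a martingale after stopping. Each of these reduces, through Theorem \ref{thm:SpectralTheoremInfiniteDimension}, to a standard classical fact, but the reduction must be carried out with care: $\tilde{\mathbf{x}}_t$ itself does \emph{not} belong to the commutative algebra $\mathscr{Y}_t$ --- only $\condExpect{V(\tilde{\mathbf{x}}_t)}$ and the other conditioned quantities do --- so the isomorphism $\tau$ can be invoked only after conditioning, and the argument has to be organized so that every object to which $\tau$ is applied already lies in $\mathscr{Y}_t$.
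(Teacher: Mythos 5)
Your proposal follows essentially the same route as the paper: the It\^{o}/Dynkin expansion of the Lyapunov operator $V(\tilde{\mathbf{x}}_t)=\tilde{\mathbf{x}}_t^\top\mathbf{P}_t^{-1}\tilde{\mathbf{x}}_t$, the dissipation inequality of Lemma \ref{lem:Dissipativity} applied on the stopped interval $[0,t']$ with the Markov time $\tau_{\epsilon'}$, the integrating factor $e^{\gamma s}$ (the paper cites Zakai's classical ultimate-bound argument for exactly this step), and finally the sandwich from \ref{asm:PAsm}. The reasoning is correct and matches the paper's proof in structure and substance.
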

		\begin{proof}
			The proof of this theorem makes use of the proof of the classical ultimate bound of quadratic moments given in \cite{zakai1967ultimate}, except that in this theorem, we replace the generator of the classical Markov process with the quantum Markov process generator (\emph{Lindblad} generator) in  \eqref{eq:QuantumMarkovProcessGenerator}. We also use the operator inequality with respect to the state $\mathbb{P}$ in $\mathscr{Y}_t$. First, we observe that $t'$ is a $\mathscr{Y}_{t'}$ measurable random variable \cite[Lemma 1.5]{liptser2001statistics}. By our definition of a Lyapunov positive operator in  \eqref{eq:Lyanpunov}, we may apply the It\^{o} formula, and therefore we obtain,
			\begin{dmath}
				\condExpectB{V\left(\tilde{\mathbf{x}}_{t'}\right)}{\mathscr{Y}_{t'}} = V\left(\tilde{\mathbf{x}}_0\right) + \int_{0}^{t'} \condExpectB{\mathcal{L}\left(V\left(\tilde{\mathbf{x}}_s\right)\right)}{\mathscr{Y}_{t'}}  ds. \label{eq:CondExpectOfV}
			\end{dmath}
			The equation above indicates that the expectation $\condExpectB{V\left(\tilde{\mathbf{x}}_{t'}\right)}{\mathscr{Y}_{t'}} $ is absolutely continuous in $t'$ since $\mathcal{L}\left(V\left(\tilde{\mathbf{x}}_s\right)\right)$ is Lebesgue integrable, and hence for almost all $s \geq 0$
			\begin{dmath*}
				\dfrac{d \condExpectB{V\left(\tilde{\mathbf{x}}_s\right)}{\mathscr{Y}_{t'}} }{ds} \leq -\gamma \condExpectB{V\left(\tilde{\mathbf{x}}_s\right)}{\mathscr{Y}_{t'}} + \delta.
			\end{dmath*}
			Multiplying by the factor $e^{\gamma s}$, we have
			\begin{dmath*}
				\dfrac{d \left(e^{\gamma s}\condExpectB{V\left(\tilde{\mathbf{x}}_s\right)}{\mathscr{Y}_{t'}} \right)}{ds} \leq \delta e^{\gamma s}.
			\end{dmath*}
			By  \eqref{eq:CondExpectOfV}, $e^{\gamma s}\condExpectB{V\left(\tilde{\mathbf{x}}_s\right)}{\mathscr{Y}_{t'}} $ is also absolutely continuous, and hence by integrating the last inequality, we obtain,
			\begin{dmath*}
				\condExpectB{V\left(\tilde{\mathbf{x}}_{t'}\right)}{\mathscr{Y}_{t'}} \leq \condExpectB{V\left(\tilde{\mathbf{x}}_0\right)}{\mathscr{Y}_{t'}} e^{-\gamma t'} + \dfrac{\delta}{\gamma} \left(1 - e^{-\gamma t'}\right).
			\end{dmath*}
			From \ref{asm:PAsm}, we obtain
			\begin{dmath*}
				\underline{p} \condExpectB{\tilde{\mathbf{x}}_{t'}^\top\tilde{\mathbf{x}}_{t'}}{\mathscr{Y}_{t'}} \leq \condExpectB{V\left(\tilde{\mathbf{x}}_{t'}\right)}{\mathscr{Y}_{t'}} \leq \condExpectB{V\left(\tilde{\mathbf{x}}_0\right)}{\mathscr{Y}_{t'}} e^{-\gamma t'} + \dfrac{\delta}{\gamma} \left(1 - e^{-\gamma t'}\right) \leq \bar{p}\condExpectB{\tilde{\mathbf{x}}_0^\top\tilde{\mathbf{x}}_0}{\mathscr{Y}_{t'}} e^{-\gamma t'} + \dfrac{\delta}{\gamma} \left(1 - e^{-\gamma t'}\right).
			\end{dmath*}
			Dividing the last result by $\underline{p}$ gives the desired result.
		\end{proof}

		\begin{remark}
			Using a treatment similar to \cite[Theorem 4.2]{Reif2000}, it can be shown that the condition in \ref{asm:PAsm} can be replaced with the uniform detectability of the pair $\mathbf{F}\left(\hat{\mathbf{x}}\right),\mathbf{H}\left(\hat{\mathbf{x}}\right), \forall \mathbf{x} \in \mathscr{Y}_t$ to obtain the same result in the Theorem \ref{thm:ConvergenceResult}, see \cite[Theorem 7]{Baras1988}.
		\end{remark}

		\subsection{Robust quantum nonlinear filter for a class of open quantum system with state-dependent noise}\label{sec:rqEKF}
		In the previous subsection, the quantum EKF is developed for a class of an open quantum systems where the noise variances are known. We also notice that the dissipation inequality given in Lemma \ref{lem:Dissipativity} is a very conservative condition and often rather difficult to validate. Moreover, for many open quantum systems, the variances of the system observables and the measurements are state-dependent. As an example, the covariance of photon counting is indeed a stochastic process, for which is a 'doubly stochastic Poisson' or Cox process \cite{Snyder1972,Segall1975}. In classical systems, doubly stochastic Poisson processes have been treated in a variety of ways, e.g. by solving a conditional probability density evolution, or a filtered martingale problem approach \cite{Snyder1972,segall1975nonlinear,ceci2012nonlinear,ceci2014zakai}.
		\\
		In this subsection, we will show how to modify the quantum EKF to include open quantum systems and measurements with state-dependent noise variance and cross correlation, while at the same obtain a stronger condition of convergence than Lemma \ref{lem:Dissipativity}. The treatment we present here uses Riccati differential equation shaping, which is to some extent similar to the treatment of the deterministic nonlinear observer in \cite{reif1998ekf,reif1999nonlinear}.
		\\
		We will still use the same filter dynamics as in  \eqref{eq:QuantumEKF}. We will also use the same notation $\mathbf{P}_t$ to denote a positive definite matrix of operators, whose dynamics is shaped to achieve  robustness of the estimation error dynamics.
		\\
		Since the variance and covariance are functions of $\mathbf{x}_t$, we will use the estimates of $\mathbf{R}_t$ and $\mathbf{S}_t$  in the Kalman filter, and therefore we have
		\begin{align}
		\mathbf{K}_t =& \left[\mathbf{P}_t{\mathbf{H}(\hat{\mathbf{x}}_t)}^\top + \mathbf{S}(\hat{\mathbf{x}}_t)\right]{\mathbf{R}(\hat{\mathbf{x}}_t)}^{-1}. \label{eq:RobustKalmanGain}
		\end{align}
		Furthermore, we shape the Riccati differential equation, where for given $\mu, \lambda > 0$, with $\hat{\mathbf{Q}} > \mu \mathbf{I}$,
		\begin{align}
		\dfrac{d\mathbf{P}_t}{dt} = & \mathbf{F}(\hat{\mathbf{x}}_t)\mathbf{P}_t+\mathbf{P}_t\mathbf{F}(\hat{\mathbf{x}}_t)^\top + \hat{\mathbf{Q}} + \lambda {\mathbf{P}_t}^2 - \mathbf{K}_t {\mathbf{R}(\hat{\mathbf{x}}_t)} \mathbf{K}_t^\top. \label{eq:RobustRiccati}
		\end{align}
		We state our result in the following theorem,
		\begin{theorem}\label{thm:RobustEKF}
			Consider an open quantum system QSDE given in \eqref{eq:QSDE}, the measurement given in \eqref{eq:Measurements}, and the quantum EKF given in \eqref{eq:QuantumEKF}. Also assume Assumptions \ref{asm:ResidualAsm}, \ref{asm:Hbounded}, and \ref{asm:PAsm}, and $\norm{\tilde{\mathbf{x}}_t} \leq \epsilon, \forall t \geq 0$, where  $\epsilon = \min \left(\epsilon_f,\epsilon_h\right)$ in \ref{asm:ResidualAsm}. Then there exist $\mu,\lambda>0$ in  \eqref{eq:RobustRiccati}  and $\gamma, \delta >0$ such that the Lyapunov positive operator given in \eqref{eq:Lyanpunov}, satisfies the dissipation inequality below
			\begin{align}
			\mathcal{L}\left(V\right)  \leq & -\gamma V(\tilde{\mathbf{x}}_t)  + \delta. \label{eq:RobustDissipativityDifferential}
			\end{align}
		\end{theorem}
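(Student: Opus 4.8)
The argument will parallel that of Lemma~\ref{lem:Dissipativity}; the new feature is that the shaped Riccati equation~\eqref{eq:RobustRiccati} supplies two extra damping terms, $\hat{\mathbf{Q}}>\mu\mathbf{I}$ and $\lambda\mathbf{P}_t^2$, whose sizes are at our disposal and which will take over the role played in Lemma~\ref{lem:Dissipativity} by the positivity of $\mathbf{Q}_t-\mathbf{S}_t\mathbf{R}_t^{-1}\mathbf{S}_t^\top$, so that Assumption~\ref{asm:QRSAsm} is no longer needed. First I would rewrite the error dynamics as in~\eqref{eq:errorDynamic}, but with the estimated-covariance gain~\eqref{eq:RobustKalmanGain} in place of~\eqref{eq:KalmanGain}, and apply the quantum It\^{o} rule to $V(\tilde{\mathbf{x}}_t)=\tilde{\mathbf{x}}_t^\top\mathbf{P}_t^{-1}\tilde{\mathbf{x}}_t$ using $\partial_t\mathbf{P}_t^{-1}=-\mathbf{P}_t^{-1}(\partial_t\mathbf{P}_t)\mathbf{P}_t^{-1}$ together with~\eqref{eq:RobustRiccati}. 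Exactly as in Lemma~\ref{lem:Dissipativity}, the terms linear in $\mathbf{F}(\hat{\mathbf{x}}_t)$ cancel, the feedback terms combine with $\mathbf{K}_t\mathbf{R}(\hat{\mathbf{x}}_t)\mathbf{K}_t^\top$, and the residual $\mathbf{r}_t$ produces the functional $\phi$ of Lemma~\ref{lem:NonlinearityBound}.

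After collecting terms, this should yield an estimate of the form
\begin{align*}
\mathcal{L}(V)\ \leq\ -\,\tilde{\mathbf{x}}_t^\top\mathbf{P}_t^{-1}\bigl(\hat{\mathbf{Q}}+\lambda\mathbf{P}_t^2\bigr)\mathbf{P}_t^{-1}\tilde{\mathbf{x}}_t\ +\ \tilde{\mathbf{x}}_t^\top\mathbf{M}_t\tilde{\mathbf{x}}_t\ +\ \delta,
\end{align*}
where $\delta$ absorbs the bounded contributions of the $d\mathbf{A}_t$, $d\mathbf{A}_t^\ast$ and $d\bm{\alpha}_t$ quadratic variations (as the $\delta$ in Lemma~\ref{lem:Dissipativity}, now also including the Poissonian part), while $\mathbf{M}_t$ collects the remaining quadratic pieces: the Jacobian and cross-correlation terms in $\mathbf{H}(\hat{\mathbf{x}}_t)$ and $\mathbf{S}(\hat{\mathbf{x}}_t)$, the mismatch between the true state-dependent $\mathbf{Q}_t,\mathbf{S}_t,\mathbf{R}_t$ and the estimated $\mathbf{R}(\hat{\mathbf{x}}_t),\mathbf{S}(\hat{\mathbf{x}}_t)$ appearing in the gain, and the nonlinear term $\phi$, which on $\{\norm{\tilde{\mathbf{x}}_t}\leq\epsilon\}$ is controlled by Lemma~\ref{lem:NonlinearityBound}. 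The point to check carefully is that $\norm{\mathbf{M}_t}$ is bounded above by a constant $c$ that does \emph{not} depend on $\mu$ or $\lambda$; this will follow from \ref{asm:Hbounded}, \ref{asm:PAsm}, the restriction of the observables to the bounded algebra $\mathscr{N}=\mathcal{B}$, the boundedness of $\mathbf{x}_t$, and the uniform invertibility of $\mathbf{R}$ already exploited in Lemma~\ref{lem:NonlinearityBound}.

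Finally I would estimate each term using \ref{asm:PAsm}: from $\hat{\mathbf{Q}}\geq\mu\mathbf{I}$ and $\mathbf{P}_t^{-1}\geq\bar{p}^{-1}\mathbf{I}$ one gets $-\tilde{\mathbf{x}}_t^\top\mathbf{P}_t^{-1}\hat{\mathbf{Q}}\mathbf{P}_t^{-1}\tilde{\mathbf{x}}_t\leq-\mu\bar{p}^{-1}V$; directly $-\lambda\tilde{\mathbf{x}}_t^\top\mathbf{P}_t^{-1}\mathbf{P}_t^2\mathbf{P}_t^{-1}\tilde{\mathbf{x}}_t=-\lambda\tilde{\mathbf{x}}_t^\top\tilde{\mathbf{x}}_t\leq-\lambda\underline{p}V$; and $\tilde{\mathbf{x}}_t^\top\mathbf{M}_t\tilde{\mathbf{x}}_t\leq c\,\tilde{\mathbf{x}}_t^\top\tilde{\mathbf{x}}_t\leq c\bar{p}V$. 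Combining these, $\mathcal{L}(V)\leq-(\mu\bar{p}^{-1}+\lambda\underline{p}-c\bar{p})V+\delta$, so choosing $\mu$ and $\lambda$ large enough (keeping $\hat{\mathbf{Q}}>\mu\mathbf{I}$) makes $\gamma:=\mu\bar{p}^{-1}+\lambda\underline{p}-c\bar{p}>0$ and gives~\eqref{eq:RobustDissipativityDifferential}. I expect the main obstacle to be the estimate on $\mathbf{M}_t$: one must keep careful track of which terms carry factors of the design parameters and show that the coefficient of the \emph{bad} quadratic part has a bound independent of $\mu,\lambda$ — this is precisely where the boundedness of the state-dependent covariances and the uniform lower bound on $\mathbf{R}$ have to be invoked, and it is what lets the Riccati shaping replace Assumption~\ref{asm:QRSAsm}.
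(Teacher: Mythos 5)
Your proposal is correct and follows essentially the same route as the paper: the same Lyapunov operator and It\^{o} expansion as in Lemma~\ref{lem:Dissipativity}, with the shaped terms $\hat{\mathbf{Q}}$ and $\lambda\mathbf{P}_t^2$ absorbing the roles previously played by Assumption~\ref{asm:QRSAsm} and the residual bound of Lemma~\ref{lem:NonlinearityBound}. The only cosmetic difference is that the paper makes the explicit choice $\hat{\mathbf{Q}}=\mu\mathbf{I}+\mathbf{S}(\hat{\mathbf{x}}_t)\mathbf{R}(\hat{\mathbf{x}}_t)^{-1}\mathbf{S}(\hat{\mathbf{x}}_t)^\top$ and $\lambda>\kappa\epsilon$ so that the resulting matrix $\mathbf{U}$ is positive definite term-by-term, rather than bounding the remaining quadratic pieces by a constant $c$ and then taking $\mu,\lambda$ sufficiently large as you do.
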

		\begin{proof}
			Using the same argument as in Lemma \ref{lem:Dissipativity},
			\begin{dmath*}
				\mathcal{L}\left(V\right)  =  \tilde{\mathbf{x}}_t^\top\mathbf{P}_t^{-1}\left[\left(\mathbf{F}(\hat{\mathbf{x}}_t) - \mathbf{K}_t\mathbf{H}(\hat{\mathbf{x}}_t)\right)\tilde{\mathbf{x}}_t  + \mathbf{r} \right]+\left[\left(\mathbf{F}(\hat{\mathbf{x}}_t) - \mathbf{K}_t\mathbf{H}(\hat{\mathbf{x}}_t)\right)\tilde{\mathbf{x}}_t  + \mathbf{r} \right]^\top\mathbf{P}_t^{-1}\tilde{\mathbf{x}}_t  +  \trace{\left(\mathbf{G}(\mathbf{x}_t)^\ast - \mathbf{K}_t\mathbf{L}(\mathbf{x}_t)^\ast\right)^\top\mathbf{P}_t^{-1}\left(\mathbf{G}(\mathbf{x}_t) - \mathbf{K}_t \mathbf{L}(\mathbf{x}_t)\right)} - \tilde{\mathbf{x}}_t^\top\mathbf{P}_t^{-1}\dfrac{\partial \mathbf{P}_t}{\partial t}\mathbf{P}_t^{-1}\tilde{\mathbf{x}}_t.
			\end{dmath*}
			With the same $\delta$ as in the Lemma \ref{lem:Dissipativity}, then from Lemma \ref{lem:NonlinearityBound}, we can select sufficiently large $\epsilon = \min\left(\epsilon_f,\epsilon_h \right)$, and use \eqref{eq:errorDynamic}, \eqref{eq:RobustKalmanGain}, and \eqref{eq:RobustRiccati} to obtain
			\begin{dmath*}
				\mathcal{L}\left(V\right) \leq   -\tilde{\mathbf{x}}_t^\top\mathbf{P}_t^{-1}\mathbf{U}\mathbf{P}_t^{-1}\tilde{\mathbf{x}}_t + \delta_t ,
			\end{dmath*}
			with
			\begin{dmath*}
				\mathbf{U} =  \mathbf{P}_t\left[\mathbf{H}\left(\hat{\mathbf{x}}\right){\mathbf{R}(\hat{\mathbf{x}}_t)}^{-1}{\mathbf{H}\left(\hat{\mathbf{x}}\right)}^\top + \left(\lambda - \kappa \epsilon\right)\mathbf{I}\right]\mathbf{P}_t + \left[\hat{\mathbf{Q}} -\mathbf{S}(\hat{\mathbf{x}}_t) {\mathbf{R}(\hat{\mathbf{x}}_t)}^{-1}{\mathbf{S}(\hat{\mathbf{x}}_t)}^\top\right],
			\end{dmath*}
			where $\mathbf{I}$ is the identity matrix. Furthermore, selecting
			\begin{dmath*}
				\hat{\mathbf{Q}}  = \mu\mathbf{I} + \mathbf{S}(\hat{\mathbf{x}}_t) {\mathbf{R}(\hat{\mathbf{x}}_t)}^{-1}{\mathbf{S}(\hat{\mathbf{x}}_t)}^\top,
			\end{dmath*}
			and $\lambda > \kappa \epsilon$, from the fact that ${\mathbf{R}(\hat{\mathbf{x}}_t)}^{-1} > 0$, we infer that $\mathbf{U} > 0$. Furthermore, with $\underline{\mathbf{U}} = \inf_{\mathbf{x}}\mathbf{U}$, $\gamma$ is given by
			$\gamma = \frac{\norm{\underline{\mathbf{U}}}}{\bar{p}} $, which completes the proof.

		\end{proof}
		The quadratic bound on the estimation error follows immediately by similar assertions in the Theorem \ref{thm:ConvergenceResult}, by replacing $\epsilon'$ with $\epsilon$.
		\\
		We notice that we could independently select $\epsilon$ sufficiently large due to extra freedom introduced by $\mu$ and $\lambda$. These two parameters remove the positive definiteness restriction of $\mathbf{Q}_t - \mathbf{S}_t\mathbf{R}_t^{-1}\mathbf{S}_t^\top$, as in Assumption \ref{asm:QRSAsm}.
		$\lambda$ is selected to directly dominate the nonlinearity effect $\kappa$ in the filter that can lead to the divergence of the estimation errors. On the other hand, $\mu$ increases the convergence, ensuring the inequality in $\mathbf{P}_t$ \ref{asm:PAsm}, while at the same time increases the noise level of the estimation.

		\section{Application Examples}
		In this section, we consider some examples of the application of the quantum EKF. We begin with the estimation of position and momentum quadratures of two optical cavity modes with Kerr nonlinearities and subject to homodyne detection. In this example the variance matrices $\mathbf{Q},\mathbf{R}$ and $\mathbf{S}$ are  constant matrices. Next, we show the estimation of the position and momentum quadratures in an optical cavity subjected to simultaneous homodyne and photon counting detections, which corresponds to the case where the variance matrices $\mathbf{R}$ and $\mathbf{S}$ are a state-dependent matrices. The constant $\hbar$ is assumed to be one. 	As before, we will denote the isomorphically transformed operator in $\mathscr{Y}_t$, $\tau\left(\cdot\right)_{t,\omega} \in \mathbb{C}$ as $(\cdot)_{t,\omega}$.

		\subsection{Estimating the quadratures of multiple optical modes with a Kerr Hamiltonian}
		\begin{figure*}

			\subfloat[$t_{SME}/t_{qEKF}$]
			{\label{fig:t_ratio}\includegraphics[width=0.55\textwidth]{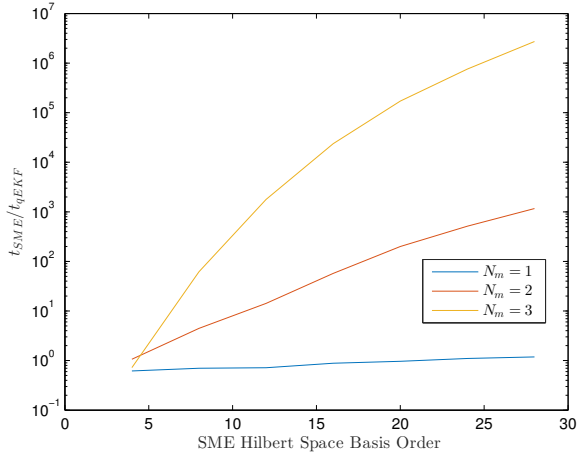}}
			\subfloat[MISE comparison ]
			{\label{fig:MISE_TwoModes}\includegraphics[width=0.55\textwidth]{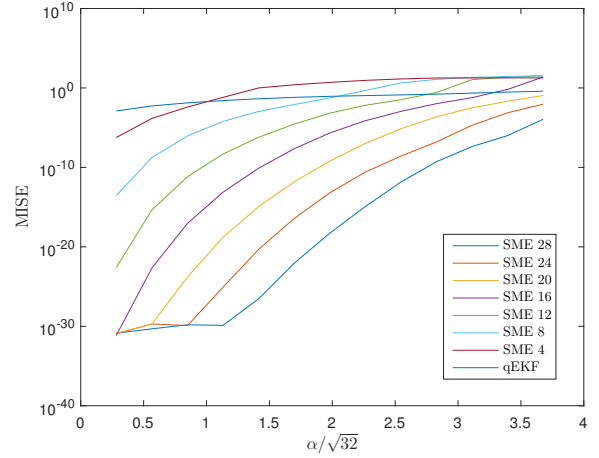}}\\
			\subfloat[$ q_1 $ Error ]
			{\label{fig:TwoModesInitialError}\includegraphics[width=0.55\textwidth]{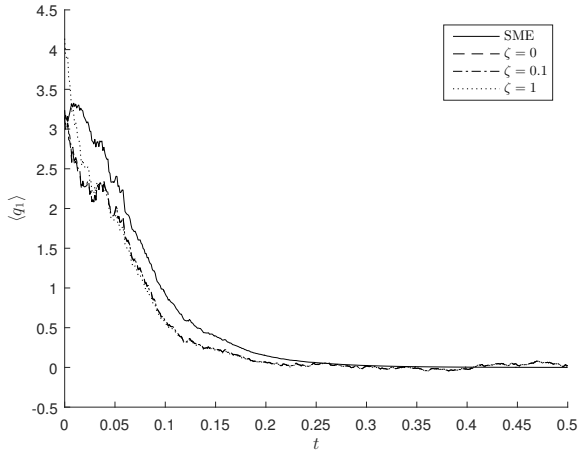}}
			\caption{(color online) Application of the quantum EKF to estimate quadratures of two cavity modes with Kerr Hamiltonians. Here, we use a cavity with damping constant $ \gamma_i = 32$ and Kerr nonlinearity constant $ \chi_i = 0.3 \pi$. The Hilbert space basis for each modes is of order to 32. \Fref{fig:t_ratio} shows the ratio of one time-step computational time between the SME and the quantum EKF for various number of modes, and order of the Hilbert space basis used for solving the SME. The ratio increases dramatically when the number of modes increases.	\Fref{fig:MISE_TwoModes} shows a comparison of the mean integral of squared quadratic error, $\frac{1}{T}(\int_0^T \tilde{\mathbf{x}}_{t,\omega}^\top \tilde{\mathbf{x}}_{t,\omega} dt)^{1/2} $, for the quantum EKF and SME estimation, with various different orders of the Hilbert space basis, against different initial coherent amplitudes $\alpha$. The error $\tilde{\mathbf{x}}_{t,\omega}$ is approximated by subtracting the estimate of the quantum EKF and the SME against the estimate of the SME with the largest Hilbert space basis $(32)$. \Fref{fig:TwoModesInitialError}  shows the estimation of the first mode quadrature with different initial errors. The initial value of the quantum EKF is given by $\hat{\mathbf{x}}_{0,\omega} = \hat{\mathbf{x}}_{SME (0,\omega)} + \zeta [1\;-1\;-1\;1 ]^\top$.  $\zeta$ corresponds to the magnitude of the initial estimation error. In this simulation, the trajectories of the SME and the quantum EKF estimation are generated from identical measurement records.}
			\label{fig:TwoModesMix}
		\end{figure*}

		In this example, we would like to estimate the quadratures of two cavity modes $\mathbf{x}_{i,t} = \left[q_i \; p_i\right]^\top, i=1,2$. Each mode has an identical Kerr Hamiltonian and a direct coupling Hamiltonian. For $n$ cavity modes, the direct coupling Hamiltonian is given by \cite{Wiseman1994}, $\mathbb{H}_{int} = \sum\limits_{\substack{i,j=1\\i\neq j}}^{n} i  \sqrt{\gamma_{ij}}\left(a_j a^\dagger_i - a_i a^\dagger_j\right) = \sum\limits_{\substack{i=1\\j=i+1}}^{n} \mathbf{x}_{i,t}^\top \breve{\mathbf{S}}_{ij} \mathbf{x}_{j,t}$. The parameters for the optical cavity, are given by $\mathbb{S} = \mathbf{I} , \mathbb{L}_i =  \sqrt{\gamma}a_i = \mathbf{C}_i^\top \mathbf{x}_{i,t}$, and
		\begin{dmath*}
			\mathbb{H} = \sum\limits_{i=1}^{n} \chi_i {a_i^\dagger}^2 a_i^2  +\sum\limits_{\substack{i=1\\j=i+1}}^{n} \mathbf{x}_{i,t}^\top \breve{\mathbf{S}}_{ij} \mathbf{x}_{j,t}.
		\end{dmath*}
		Let $\bm{\Sigma} = \begin{bmatrix}
		0 && 1 \\ -1 && 0
		\end{bmatrix}$. The quantum Markovian generator for $\mathbf{x}_t$, is given by $ \mathbf{f}_i(\mathbf{x}_t) = \sum\limits_{j=0}^{n} \mathbf{A}_{ij}\mathbf{x}_{j,t} + \mathbf{f}_{i,Kerr}(\mathbf{x}_{i,t})$, while $\mathbf{G}_i =  i \bm{\Sigma} \mathbf{C}_i$. Now $\mathbf{A}_{ij}$ is given by the following matrices
		\begin{dmath*}
			\mathbf{A}_{ij} =
			\begin{cases}
				- \bm{\Sigma} \Im\left(\mathbf{C}_i\mathbf{C}_i^\dagger\right) &, i=j \\
				\bm{\Sigma} \breve{\mathbf{S}}_{ij} &, i \neq j
			\end{cases}.
		\end{dmath*}
		$\mathbf{f}_{i,Kerr}(\mathbf{x}_t)$ is the nonlinearity from the Kerr Hamiltonian, we have
		\begin{dgroup*}
			\begin{dmath*}
				\mathbf{f}_{i,Kerr} =   \frac{\chi}{4} \begin{bmatrix}
					4p_{i,t}^3+p_{i,t}q_{i,t}^2+q_{i,t}^2p_{i,t}+2q_{i,t}p_{i,t}q_{i,t} - 8p_{i,t}\\
					-\left(4q_{i,t}^3+q_{i,t}p_{i,t}^2+p_{i,t}^2q_{i,t}+2p_{i,t}q_{i,t}p_{i,t} - 8q_{i,t}\right)
				\end{bmatrix}
			\end{dmath*},
			\begin{dmath*}
				\mathbf{F}_i\left(\mathbf{x}\right) = \chi \begin{bmatrix}
					q_{i,t}p_{i,t}+p_{i,t}q_{i,t} && 3p_{i,t}^2+0.5\left(q_{i,t}^2+p_{i,t}^2\right)-2\\
					-(3q_{i,t}^2+0.5\left(q_{i,t}^2+p_{i,t}^2\right) - 2) && -(q_{i,t}p_{i,t}+p_{i,t}q_{i,t})
				\end{bmatrix}.
			\end{dmath*}
		\end{dgroup*}
		We consider homodyne detection on each cavity mode as measurement such that $\mathbf{E}=\mathbf{I}, \mathbf{N}=0$. From \Fref{fig:TwoModesInitialError}, it can be seen that under a small error at time zero, the quantum EKF still maintains the estimation error to be bounded, which shows the convergence of the quantum EKF.
		\\
		We can observe from \Fref{fig:TwoModes} that the area of mean and one standard deviation of the quantum EKF on the $q_1$,$p_1$,$q_2$ and $p_2$ are quite narrow and contained in the estimation of SME. In contrast, the quantum KF estimation has a substantially larger deviation from the truncated SME mean although it has a narrower standard deviation range.
		\\
		Regarding the computational time, with a Hilbert space basis of order 20 on each mode, the SME requires nearly 200 times that of the quantum EKF. The SME will require 170,000 times that of the quantum EKF if the number of modes equals to three, see also \Fref{fig:t_ratio}. From \Sref{sec:Complexity}, generally for every single mode increase in the quantum systems, the required time of the SME single step computation will be increased to by a factor of $ N_s^3$.\\
		In terms of estimation error, the quantum EKF can mantain a fairly low slope of the mean integral of squared quadratic error, $\frac{1}{T}(\int_0^T \tilde{\mathbf{x}}_{t,\omega}^\top \tilde{\mathbf{x}}_{t,\omega} dt)^{1/2} $ regardless of the initial state of the estimated quantum system. This is another benefit, since for the SME, the MISE could make a huge difference if the initial state needs a higher order Hilbert space basis, see \Fref{fig:MISE_TwoModes}.

		\begin{figure*}
			\subfloat[$ q_1 $ averaged]{\label{fig:ekfQ_MultiModes1}\includegraphics[trim =   5mm 10mm 5mm 10mm, clip,width=0.33\textwidth]{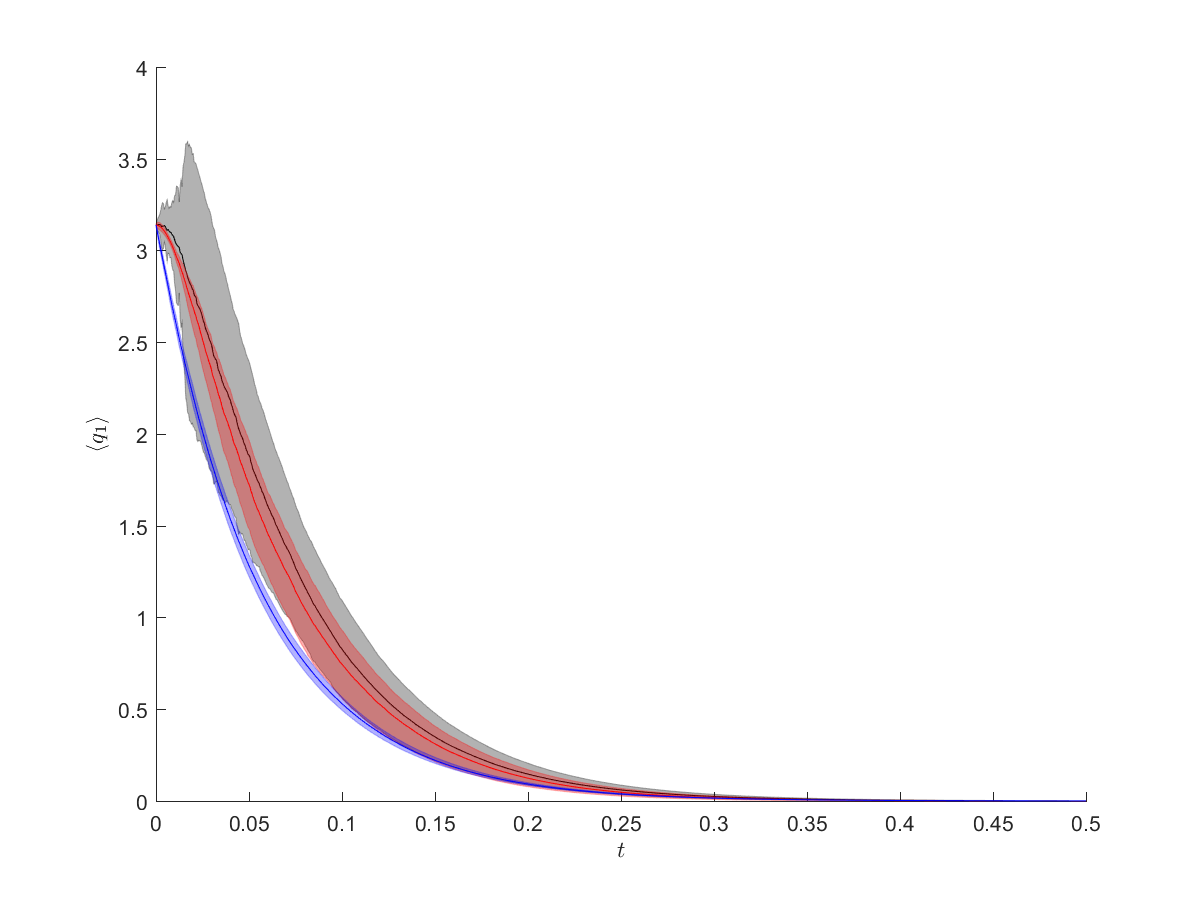}}
			\subfloat[$ p_1 $ averaged]{\label{fig:ekfP_MultiModes1}\includegraphics[trim =   5mm 10mm 5mm 10mm, clip,width=0.33\textwidth]{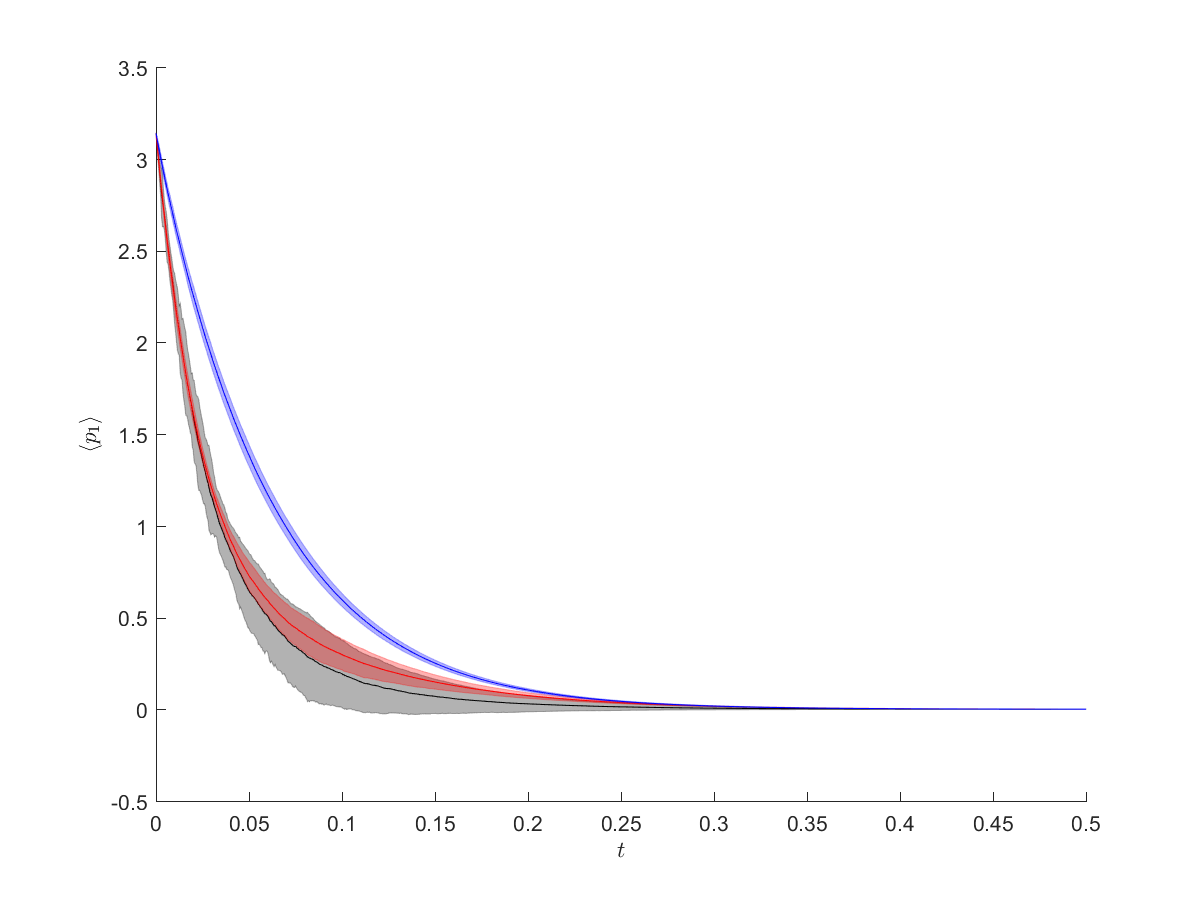}}
			\subfloat[Phase space]{\label{fig:ekfPhase_Averaged1}\includegraphics[trim =   5mm 10mm 5mm 10mm, clip,width=0.33\textwidth]{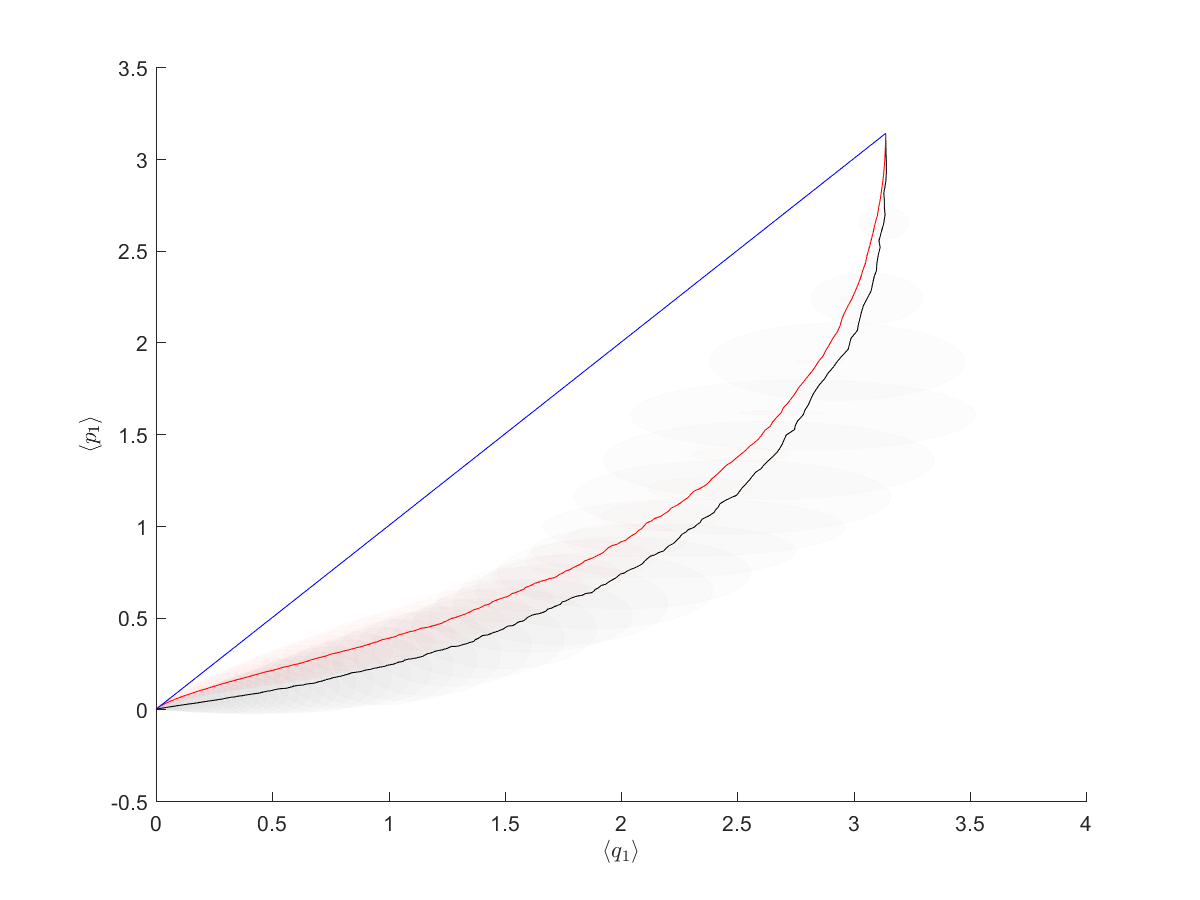}}\\
			\subfloat[$ q_2 $ averaged]{\label{fig:ekfQ_MultiModes2}\includegraphics[trim =   5mm 10mm 5mm 10mm, clip,width=0.33\textwidth]{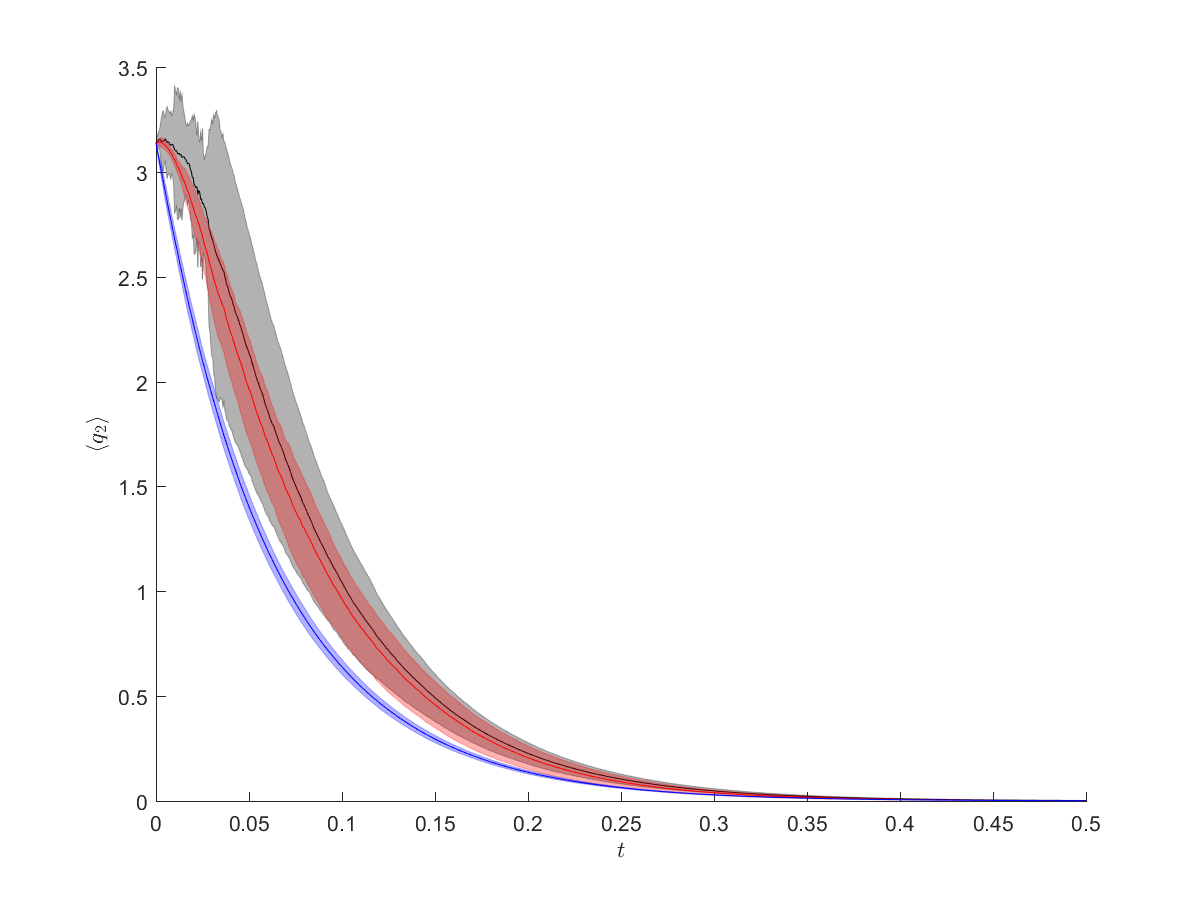}}
			\subfloat[$ p_2 $ averaged]{\label{fig:ekfP_MultiModes2}\includegraphics[trim =   5mm 10mm 5mm 10mm, clip,width=0.33\textwidth]{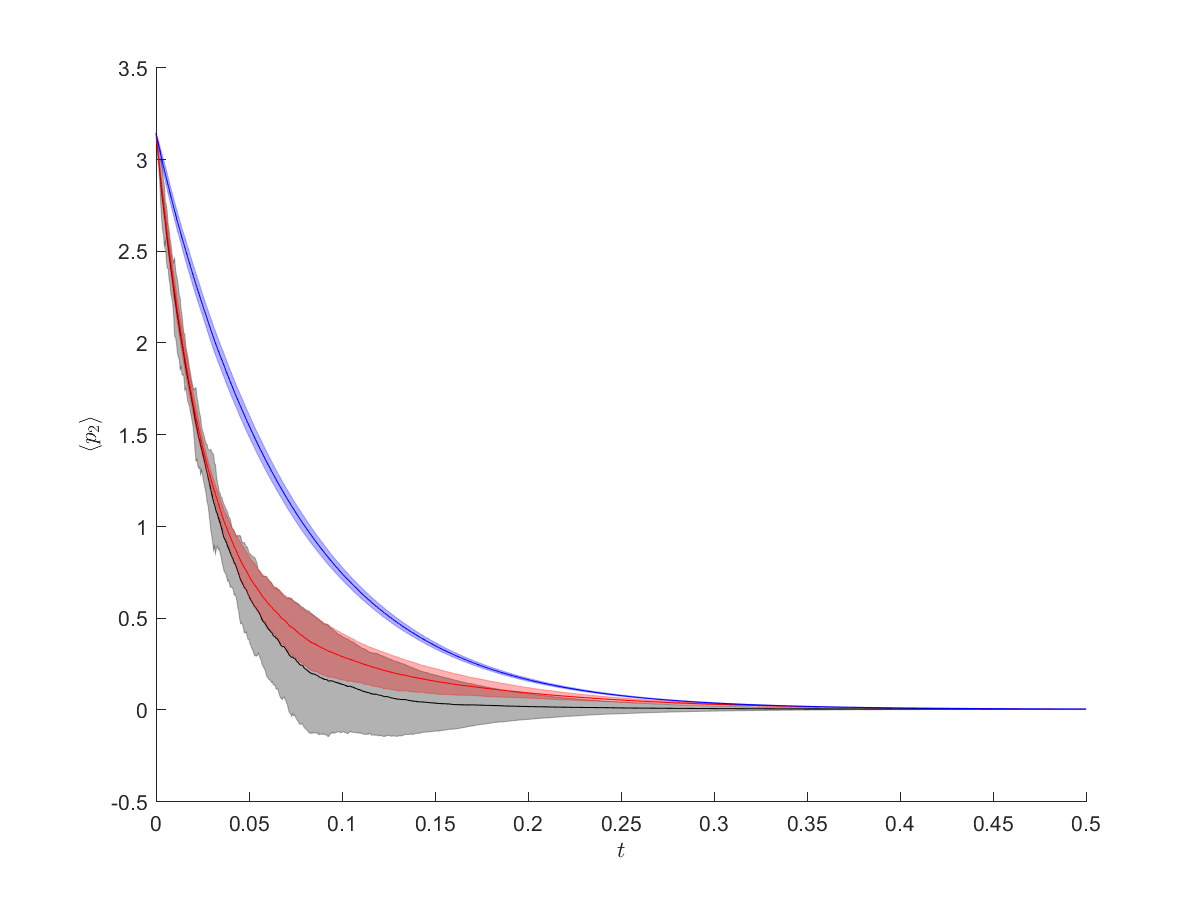}}
			\subfloat[Phase space]{\label{fig:ekfPhase_Averaged2}\includegraphics[trim =   5mm 10mm 5mm 10mm, clip,width=0.33\textwidth]{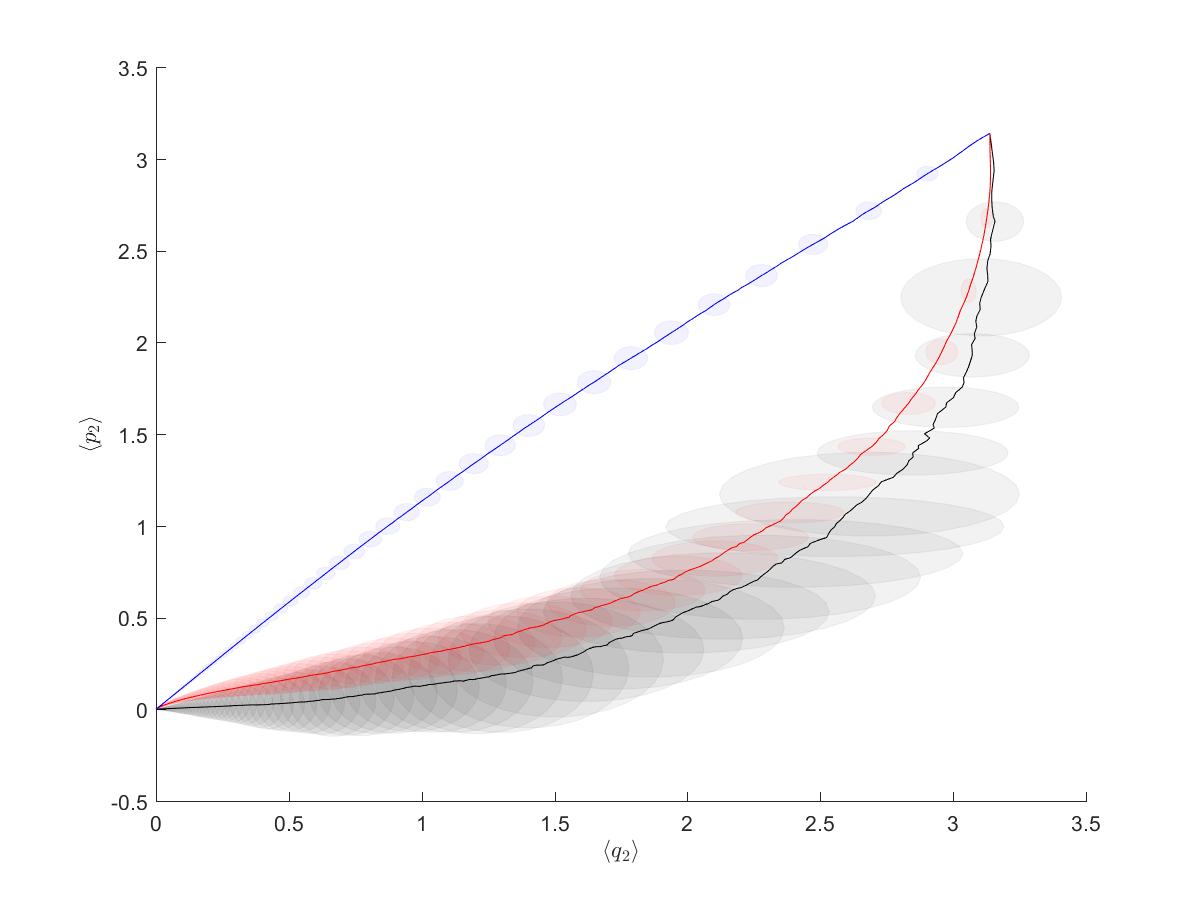}}

			\caption{(color online) As in \Fref{fig:TwoModesMix}, this simulation shows a comparison of the quantum EKF, the SME, and the quantum Kalman Filter (KF) estimation results.  The black line with a shaded area  corresponds to the mean with one standard deviation range from 100 Monte Carlo trials of the SME's estimation (Hilbert space basis of order 20 for each mode). The red line with a shaded area is the quantum EKF and the blue line is the estimation of the quantum KF where we simply ignore the nonlinearity. Initial errors are set to zero for these simulations.}
			\label{fig:TwoModes}
		\end{figure*}

		\subsection{Estimating the quadratures of an optical cavity subject to simultaneous homodyne detection and photon counting}
		\begin{figure}
			\centering
			\includegraphics[width=0.5\textwidth]{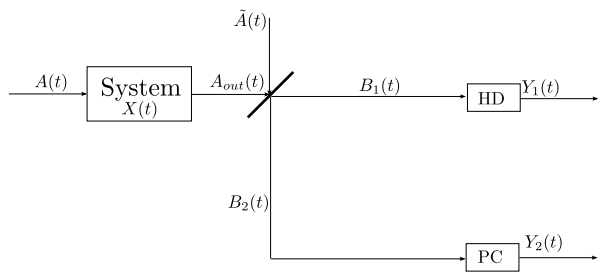}
			\caption{Simultaneous photon counting and homodyne detection at either output port of a beam splitter in a quantum optics experiment, and the corresponding quantum network depiction of the quantum optics setup, \cite{MuhammadF.Emzir2015}.}
			\label{fig:DoubleMeasurement}
		\end{figure}

		\Fref{fig:DoubleMeasurement} shows an  optical cavity with the simultaneous homodyne detection and photon counting setup as considered in \cite{MuhammadF.Emzir2015}. The involvement of photon counting in this optical setup makes the covariance matrix of the measurement  $\mathbf{R}$  state-dependent.
		\\
		Let $\mathcal{G}_1$ be our system of interest, with parameters $(\mathbb{1},\sqrt{\gamma} a,\mathbb{H})$. The vacuum noise is concatenated into our system by $\mathcal{G}_2$, whose parameters are $(\mathbb{1} ,0,0)$. The beam splitter is given by $\mathcal{G}_3$, with parameters $(\mathbb{S},0,0)$. By taking the series and the concatenation products \cite{gough2009series}, the parameters of the composite quantum system in \Fref{fig:DoubleMeasurement} are given by $\mathcal{G} = \left(\mathcal{G}_1 \boxplus \mathcal{G}_2\right)\rhd\mathcal{G}_3$,
		with $\left(\mathbb{S},\mathbb{S}\begin{pmatrix}
		\sqrt{\gamma} a\\0
		\end{pmatrix},\mathbb{H}\right)$.
		\\
		The beam splitter matrix $\mathbb{S}$, bath coupling $\mathbb{L}$ and Hamiltonian $\mathbb{H}$ are given by the following
		\begin{dgroup*}
			\begin{dmath*}
				{\mathbb{S} = \begin{bmatrix}\sqrt{1 - r^2} & i r\\ i r & \sqrt{1 - r^2} \end{bmatrix},\; r\geq 0, \;
					\mathbb{L} = \mathbb{S}\mathbf{C}^\top\mathbf{x}_t },
			\end{dmath*}
			\begin{dmath*}
				\mathbb{H} = i\left(\eta^\ast a^2 - \eta{a^\dagger}^2\right), { \eta \in \mathbb{C}}.
			\end{dmath*}
		\end{dgroup*}
		Figures \ref{fig:x_1trial}-\ref{fig:x_2trial} show a sample of Monte Carlo simulation of the SME (black), and the extension of the quantum EKF developed in \Sref{sec:rqEKF} (rqEKF) (blue). In this trial, the system's density operator is a superposition between a coherent and a Fock state. It is clear from Figs. \ref{fig:x_1trial}-\ref{fig:x_2trial} that rqEKF estimation gives a good approximation to the solution of the SME after a transient period. In the event of photon detection however, the rqEKF tends to have a slightly higher instantaneous jumps compared to the SME.
		\\
		Figures \ref{fig:x_1}-\ref{fig:x_2} shows how the rqEKF performs against the truncated SME in terms of the average estimate of 100 Monte Carlo trials and their one standard average region. We first observe that the area of mean and one stdev of the rqEKF for $q_t$ and $p_t$ gives a qualitatively good approximation after a small transient time similar to those of the SME.
		\begin{figure*}
			\centering
			\subfloat[$q$]{\label{fig:x_1}\includegraphics[trim =   5mm 10mm 5mm 10mm, clip, width=0.5\textwidth]{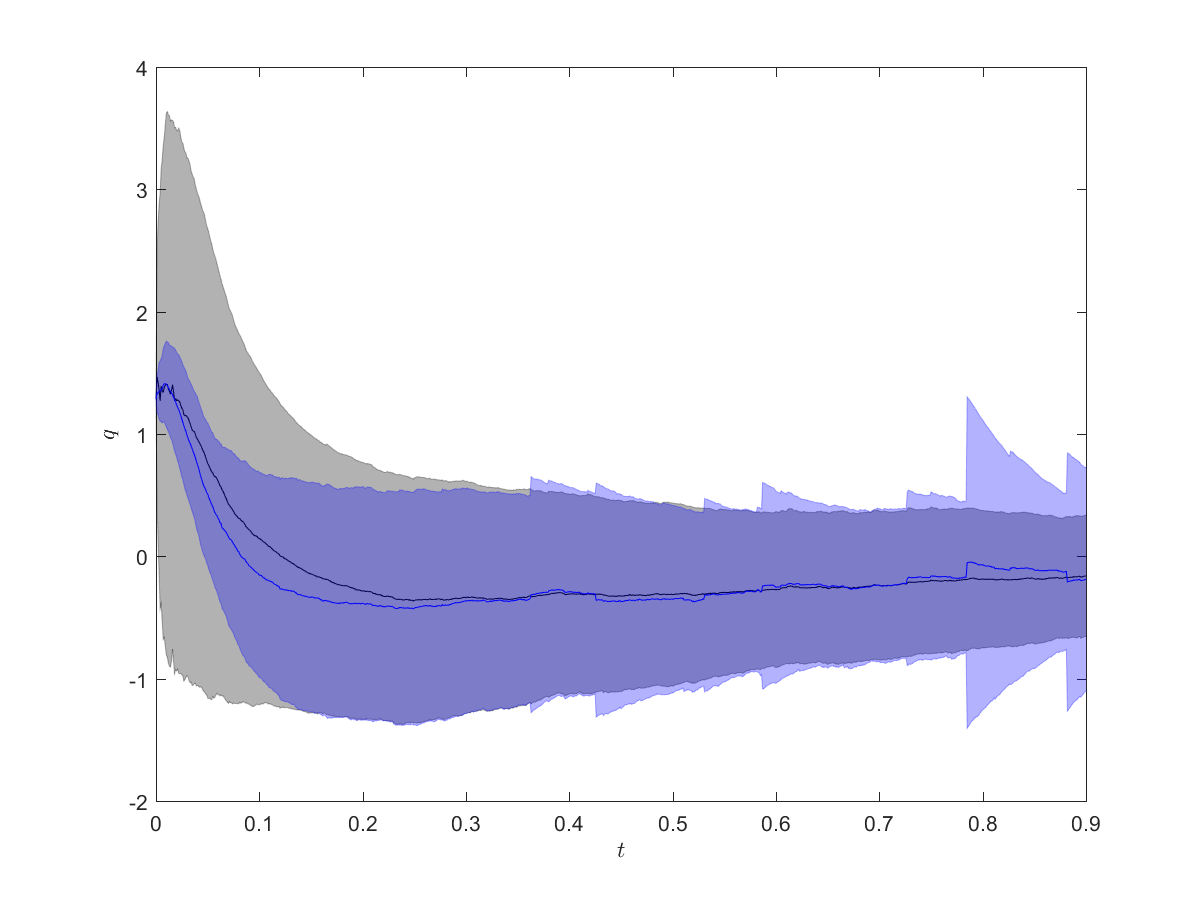}}
			\subfloat[$p$]{\label{fig:x_2}\includegraphics[trim =   5mm 10mm 5mm 10mm, clip, width=0.5\textwidth]{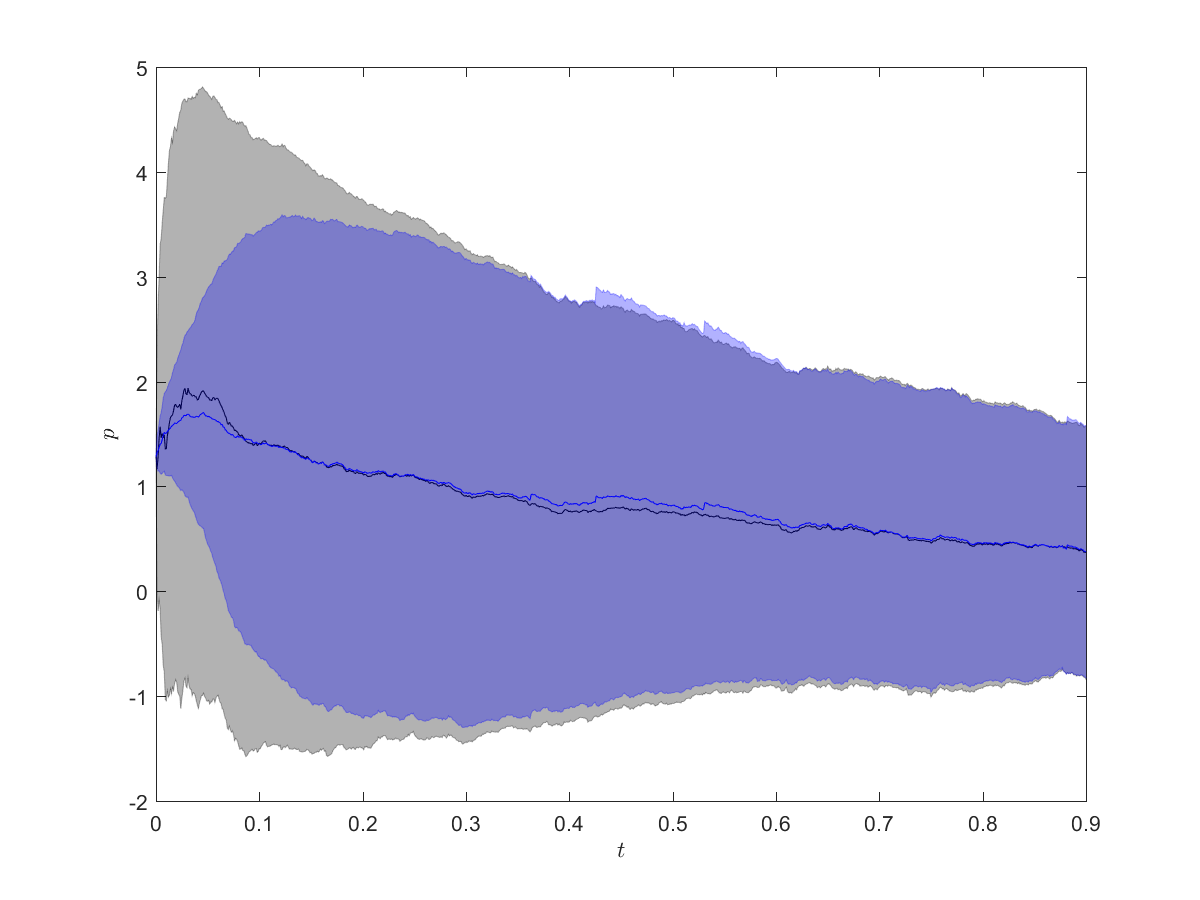}}\\
			\subfloat[$q$-trial]{\label{fig:x_1trial}\includegraphics[width=0.5\textwidth]{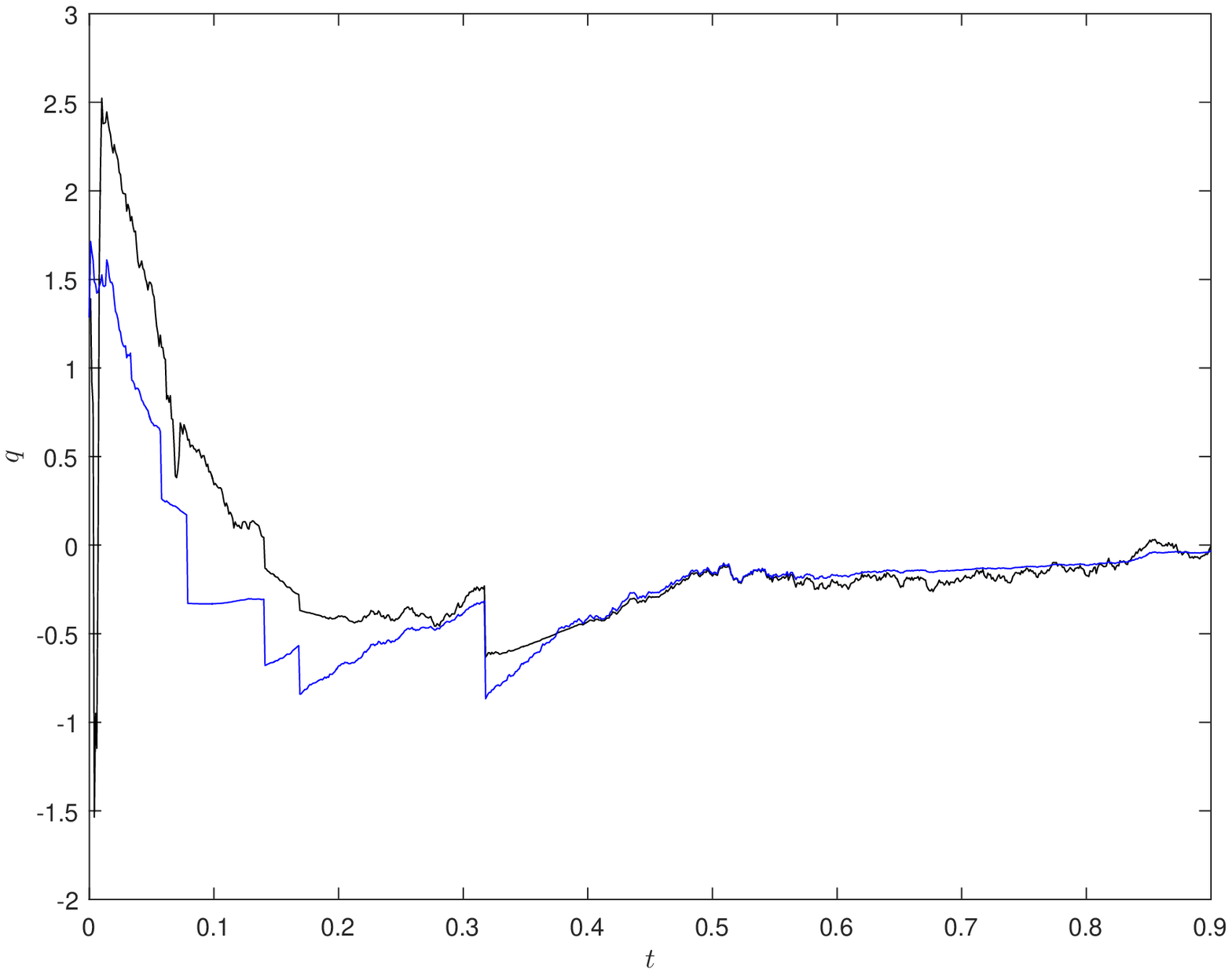}}
			\subfloat[$p$-trial]{\label{fig:x_2trial}\includegraphics[width=0.5\textwidth]{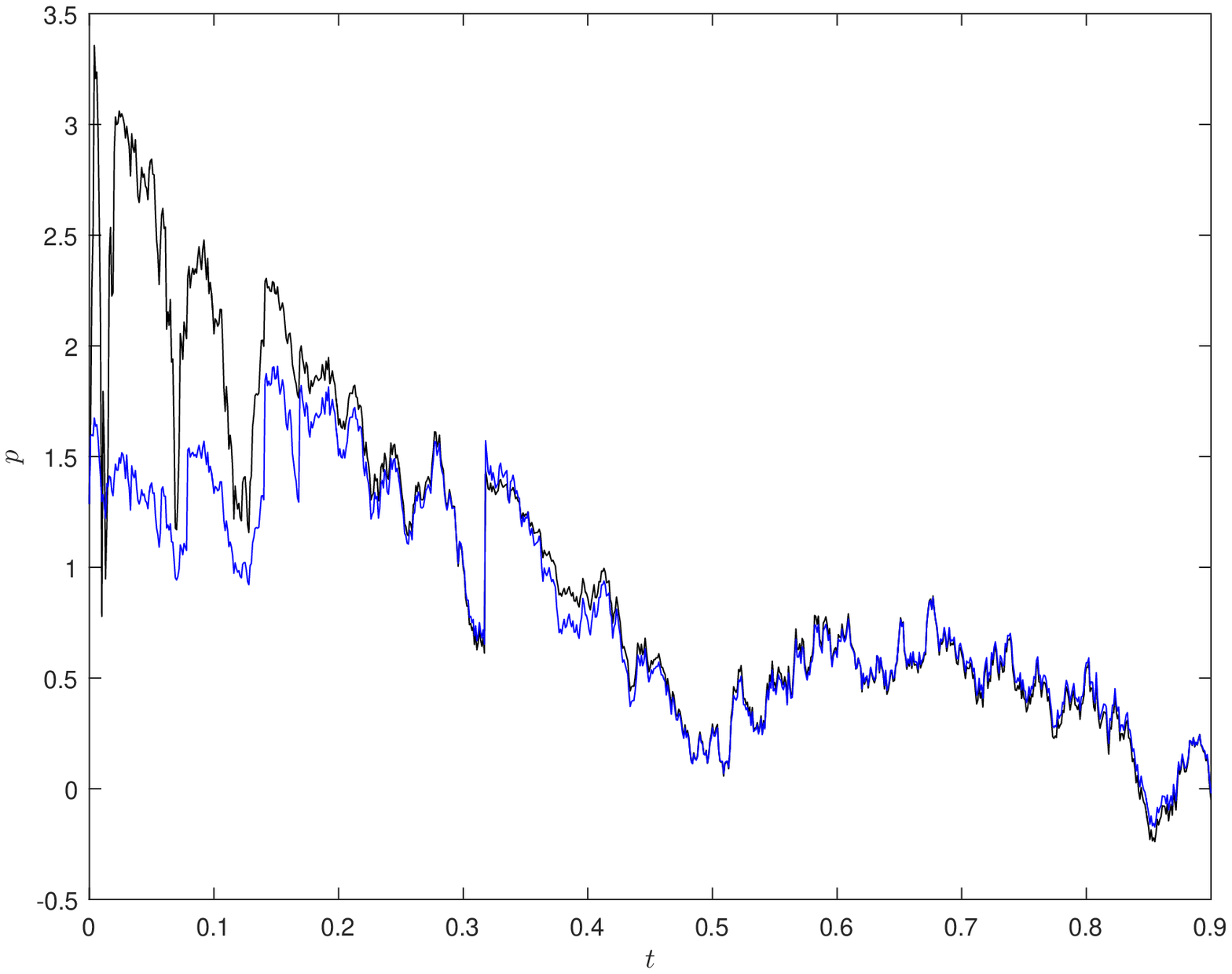}}
			\caption{(color online) These figures show the estimation of the optical cavity quadratures subject to simultaneous homodyne detection and photon counting. The system's Hilbert space basis (N) is of order $40$. The system's initial vector state (unnormalized) is $\ket{\tilde{\psi}}_0 = 0.5 \ket{n}+ 0.5 \ket{\alpha}$, with $n = N/2, \alpha = \sqrt{N/2}\exp(i\pi/4)$, while the reflectivity factor $r^2$  is 0.5. Figures \ref{fig:x_1}-\ref{fig:x_2} show the mean and one standard deviation range of 100 Monte Carlo trajectories of the SME (black) and the robust quantum EKF. Figures \ref{fig:x_1trial}-\ref{fig:x_2trial}, show $q$ and $p$ values from a sample of Monte Carlo trials.}
			\label{fig:SimulationOPO}
		\end{figure*}

		\section{Conclusion}
		In this article we have developed a quantum EKF for a class of nonlinear QSDEs describing an open quantum system subject to measurement. We derived a sufficient condition for the quantum EKF to achieve local quadratic exponential convergence in the estimation error. We also extended the quantum EKF to the class of  quantum systems and measurements with state-dependent covariance matrices. Finally, we have illustrated via two examples the effectiveness of the quantum EKF approximation.

		\section{References}

		\bibliographystyle{iopart-num}
		\bibliography{ReferenceAbbrv}
\end{document}